\let\oldReturn\Return
\renewcommand{\Return}{\State\oldReturn}
\newcommand{\etal}{\textit{et al}. }
\DeclareMathOperator*{\E}{\mathbbm{E}} % Expectation symbol
\let\S\relax % We're not using the section symbol
\newcommand{\D}{\mathcal{G}} % The prior distribution
\newcommand{\A}{\mathcal{A}} % Some residual prior distribution
\newcommand{\supp}{\text{supp}} % Support set
\newcommand{\values}{\text{values}} % Value set (the collection of values for two-dimensional settings)
\newcommand{\S}{\theta} % A signal
\newcommand{\Z}{\Theta} % A signaling scheme (collection of signals)
\newcommand{\weight}{\omega} % Weights of signals in a scheme; NOT weights in distribution of posted prices (that is \delta and not macroed)
\newcommand{\M}{\mathcal{M}} % A menu/mechanism/auction/solution to some LP
\newcommand{\R}{\textsf{R}} % Expected Revenue
\newcommand{\W}{\textsf{SW}} % Expected Social Welfare
\newcommand{\CS}{\textsf{CS}} % Expected Consumer Surplus
\newcommand{\opt}{\CS^\ast} % Optimal Consumer Surplus
\newcommand{\area}{\textsf{Area}} % Area below the allocation curve
\newcommand{\bbm}{} % This was used in Thm 2.1 but now omitted
\let\epsilon\varepsilon % For better looking \epsilon
\newtheorem{theorem}{Theorem}[section]
\newtheorem{definition}{Definition}[section]
\newtheorem{lemma}[theorem]{Lemma}
\newtheorem{observation}[theorem]{Observation}
\newtheorem{example}{Example}[section]
\begin{document}

% Choose a citation style by commenting/uncommenting the appropriate line:
%\setcitestyle{acmnumeric}
%\setcitestyle{authoryear}

% Title. Note the optional short title for running heads. In the interest of anonymization, please do not include any acknowledgements.
\title{Optimal Price Discrimination for Randomized Mechanisms\thanks{This work is supported by NSF grant CCF-2113798.}}

\author{Shao-Heng Ko\thanks{Department of Computer Science, Duke University, Durham NC 27708-0129. Email: {\tt shaoheng.ko@duke.edu}. } \and Kamesh Munagala\thanks{Department of Computer Science, Duke University, Durham NC 27708-0129. Email: {\tt kamesh@cs.duke.edu}.}}
\date{}

\maketitle

% Abstract. Note that this must come before \maketitle.
\begin{abstract}
    We study the power of price discrimination via an intermediary in bilateral trade, when there is a revenue-maximizing seller selling an item to a buyer with a private value drawn from a prior. Between the seller and the buyer, there is an intermediary that can \emph{segment} the market by releasing information about the true values to the seller. This is termed signaling, and enables the seller to price discriminate. In this setting, Bergemann \textit{et al}.~\cite{Bergemann15} showed the existence of a signaling scheme that simultaneously raises the optimal consumer surplus, guarantees the item always sells, and ensures the  seller's revenue does not increase. 
    
    Our work extends the positive result of Bergemann \textit{et al}. to settings where the type space is larger, and where optimal auction is randomized, possibly over a menu that can be exponentially large. In particular, we consider two settings motivated by budgets: The first is  when there is a publicly known budget constraint on the price the seller can charge~\cite{Chawla11} and the second is the FedEx problem~\cite{Fiat16} where the buyer has a private deadline or service level (equivalently, a private budget that is guaranteed to never bind). For both settings, we present a novel signaling scheme and its analysis via a continuous construction process that recreates the optimal consumer surplus guarantee of Bergemann \textit{et al}.  and further subsumes their signaling scheme as a special case.  In effect, our results show settings where even though the optimal auction is randomized over a possibly large menu, there is a market segmentation such that for each segment, the optimal auction is a simple posted price scheme where the item is always sold. 
    
    The settings we consider are special cases of the more general problem where the buyer has a private budget constraint in addition to a private value. We finally show that our positive results do not extend to this more general setting, particularly when the budget can bind in the optimal auction, and when the seller's mechanism allows for all-pay auctions. Here, we show that any efficient signaling scheme necessarily transfers almost all the surplus to the seller instead of the buyer. 
\end{abstract}

% Paper body
\section{Introduction}
\label{section:intro}
A canonical problem in mechanism design is that of {\em bilateral trade} -- a single seller selling an item to a buyer or equivalently, an infinite supply of identical items to a stream of buyers. We assume the item has no value to the seller.  Typically, the buyers directly interact with the seller, who given distributional knowledge of the buyer's private valuation, runs an incentive compatible mechanism in order to maximize its own revenue. This mechanism is termed the {\em optimal auction}, which in this case is just a ``take it or leave it" (or monopoly) price offered to the buyer~\cite{Myerson81}. Such a mechanism could potentially lead to loss in social welfare, since the item is unsold if the value of the buyer falls below the monopoly price. 

\medskip \noindent {\bf Price Discrimination via an Intermediary.} Now imagine a platform or exchange that mediates the interaction between the buyers and the seller. This intermediary {\em observes} the private value of each arriving buyer, and it uses this information to {\em segment} the market of buyers by providing additional information (or a signal) to the seller. The seller uses this signal (or additional information) to {\em price discriminate} between different types of buyers by running separate optimal auctions for each signal. Such intermediaries are motivated by modern platforms such as ad exchanges~\cite{doubleclick,msads,verizon,pubmatic}, which help buyers (in this case, advertisers) interact with sellers (in this case, publishers of content). The ad exchange is usually run by a search engine or social media company that can use its own data to accurately learn values of advertisers for various ad slots, and selectively release this information to the publishers who then set the prices based on this information. 

Such an intermediary clearly benefits the seller's revenue; after all, the seller has more information that enables it to price discriminate. Counter-intuitively, as shown by Bergemann, Brooks, and Morris~\cite{Bergemann15}, it can also lead to more utility for the buyers, and hence larger social welfare!  In fact, the main result of~\cite{Bergemann15} is remarkable -- there is a signaling scheme such that the item always sells (so that the social welfare is as large as possible), while the seller's revenue is the {\em same} as that without signaling. Therefore, the entire extra social surplus due to signaling goes to the buyer as its utility (or its {\em consumer surplus}). This is the best possible outcome buyers can expect given that the seller controls the auction (or the pricing scheme).

Though this result is striking, the underlying setting is the simplest possible -- there is one seller and one buyer (bilateral trade), so that the optimal auction given distributional information about the buyer's valuation (either with or without signaling) is a posted price scheme that can be computed in closed form. Given a prior distribution $\D$ on the valuation of the buyer with a monopoly price $p$, the algorithms  in~\cite{Bergemann15} sequentially construct signals while maintaining the invariant that at any step, the monopoly price of the residual distribution after subtracting the signals constructed so far remains $p$. This strong invariant seems critical to the guarantee on social optimality achieved in~\cite{Bergemann15}. This makes the positive results appear specific to this setting. The question we ask in this paper is: 
\begin{quote}
    Can the positive results in~\cite{Bergemann15} be extended to significantly more general settings where the optimal auction need not be so simple?
\end{quote}

In this paper, we answer this question in the affirmative by extending the positive results in~\cite{Bergemann15} to settings where the optimal auction can be randomized, even with exponential menu complexity.\footnote{In randomized auctions where the outcome for each buyer type is a (payment, allocation) pair, one can equivalently view the collection of all such pairs as a {\em menu}, from which the buyers can choose the best one for them. This encodes incentive compatibility. The menu complexity refers to the size of this set.} 

Concretely, we study the setting where the type space of the buyer is discrete, and includes not only their private value for the item, but also a budget or deadline. Our positive results concern two settings. In the first setting, there is a publicly known upper bound on the price any buyer can be charged; this is termed the {\em public budget} setting in literature~\cite{Chawla11,Laffont96}. In the second setting, the buyer has a private {\em deadline} by which time they need to receive the item; receiving it later than the deadline yields the buyer no value. This can be equivalently viewed as a private service level for the product. The private values and deadlines are assumed to be drawn from an arbitrary two-dimensional discrete prior distribution. The auction thus needs to be incentive compatible in the sense that the buyer should not derive more utility by reporting a tighter deadline. This is termed the {\em FedEx problem} in literature~\cite{Fiat16,Saxena18}.

In both settings, the optimal auction can be randomized. In the public budget case, the randomization is over two possible menu options~\cite{Chawla11}, while for the FedEx case, the randomization can be over a menu that can be exponentially large in the number of deadlines~\cite{Fiat16,Devanur17,Saxena18}.

\subsection{Our Results}
%\medskip \noindent \noindent {\bf Positive Results for Public Budgets and Deadlines.}
Our main contribution is a novel signaling scheme for price discrimination in bilateral trade for the two settings of public budgets and the FedEx problem mentioned above. We show that this scheme recreates the guarantee in~\cite{Bergemann15}  -- it achieves full social welfare (that is, it always sells the item), while ensuring the seller's revenue is the same as without signaling, thereby transferring all excess social surplus to the buyer. In particular, this shows the following surprising corollary: For both these problems, even though the optimal auction is randomized in general, there is a decomposition of the prior distribution into a collection of signals such that {\em for each signal, the optimal auction is a simple posted price scheme where the item is always sold}. 

The first technical highlight of our paper is a reinterpretation of the signaling schemes for bilateral trade in~\cite{Bergemann15} as a {\em continuous time} process. In this process, an infinitesimal quantity of a signal is continuously removed from the prior distribution, and we maintain two invariants at any time instant: (a) An optimal auction for the signal being removed is efficient; and (b) the  revenue for this signal is exactly equal to the rate of decrease in revenue of the optimal auction on the current prior distribution. We use the Envelope Theorem~\cite{Milgrom} to essentially show that {\em any algorithm} that satisfies invariants (a) and (b) recreates the guarantee in ~\cite{Bergemann15}, regardless of how complex the optimal auction for the setting in consideration is. The advantage of this approach is that it enables us to sidestep both the fine-grained characterization in~\cite{Bergemann15} of how the prior changes when signals are removed from it, as well as proving their invariant that the optimal auction is preserved as signals are removed from the prior.   %More crucially, it works even when the current prior has multiple optimal auctions that can change as signals are pulled out, and this can indeed happen with budgets. 

The continuous framework provides a unifying method to analyze signaling schemes for both the public budget and the private deadline settings. However, we still need a careful choice of how to run the continuous process so that the two invariants hold. This is particularly challenging for the FedEx problem, since the type space here is two-dimensional, representing the values and deadlines. As we show later, a naive approach that applies the scheme in~\cite{Bergemann15} separately to the marginal induced by each deadline raises too little consumer surplus. We therefore need to develop an approach that carefully hides both the value and deadline information, and our main algorithmic contribution is the development of a novel signaling method in such spaces (Section~\ref{section:deadlines}) that achieves precisely this. This forms our second technical highlight.

Our signaling scheme and analysis require discrete (finite support) priors over valuations. Following~\cite{RubinsteinW,cai2021duality}, such priors are also an arbitrary good approximation for continuous priors via discretization. Our analysis requires a characterization of the optimal auctions in this setting, which we present in Theorems~\ref{theorem:public_optimal_revenue_is_a_distribution_over_posted_price_revenues} and~\ref{theorem:deadlines_optimal_revenue_is_a_distribution_over_posted_price_revenues}. These are the discrete analogs of results in~\cite{Chawla11,Fiat16} for continuous priors, and show that the optimal auction is a distribution over posted prices that satisfy certain nice properties. The characterizations we require are much coarser than those in~\cite{Chawla11,Fiat16} and we present stand-alone alternate proofs of these properties that are tailored to the discrete nature of the priors. In particular, the proof for the deadline setting (Theorem~\ref{theorem:deadlines_optimal_revenue_is_a_distribution_over_posted_price_revenues}) uses convexity in the primal instead of duality, and this technique may be of independent interest.

%We illustrate the power of the continuous construction in  the case with public budgets~\cite{Chawla11} as follows. We show two different proofs for this case. The first proof that we present in Section~\ref{apdx:bbm} uses the same outline and signals as~\cite{Bergemann15}, but develops a new structural characterization (see Lemma~\ref{lemma:sequential}) of how the prior changes as these signals are pulled out. We use this characterization to argue about the revenue of the signals in closed form via the dual formulation of the optimal auction. Such a characterization of the structure of these signals is new and of independent interest. This can now be contrasted with the proof in Section~\ref{section:public_continuous} that uses our continuous framework. The latter proof is far simpler, circumventing the characterization aspect entirely. This simplicity enables us to generalize it to the more challenging case of the FedEx problem, where the type space is two-dimensional and we need an entirely different construction for the signals, which we do in Section~\ref{section:deadlines}.

%\medskip \noindent \noindent {\bf Negative Results for Private Budgets.}

\paragraph{Impossibility for Private Budgets.} We finally ask how far we can push this positive result. Towards this end, we consider the generalization of the above settings to the {\em private budget} setting~\cite{Che00,Devanur17}. Here, the buyer has a private budget, and the values and budgets are assumed to be drawn from an arbitrary two-dimensional discrete prior distribution. The buyer cannot over-report  her budget, but an incentive compatible auction needs to prevent under-reporting it. We assume interim rationality to allow for all-pay auctions, or equivalently, views the item as infinitely divisible; this is a standard assumption in economics literature~\cite{Che00,Laffont96}. Note that the FedEx problem is a special case where the budgets are larger than all valuations.

For  private budgets, there is {\em no signaling scheme} that satisfies both criteria (a) and (b) above. This leads to a strong lower bound: Even with two values and two budgets, any efficient signaling scheme (that always sells the item) transfers all surplus to the seller, leading to vanishingly small consumer surplus. Therefore, no efficient signaling scheme can reproduce the consumer surplus guarantee in~\cite{Bergemann15} to any approximation. Furthermore, even if we sacrifice efficiency, we cannot hope to achieve better than a constant approximation to the consumer surplus guarantee.

\subsection{Related Work}
Our problem falls in the general framework of {\em information design}~\cite{bergemann2019information} where an information mediator can deliberately provide additional information to impact the behavior of agents in given mechanisms; this is also sometimes termed {\em signaling} or {\em persuasion}~\cite{dughmi2017algorithmic}. The {\em Bayesian Persuasion} model~\cite{Kamenica11} is a special case of information design with only one agent (often called the {\em receiver}) receiving additional information that comes from a {\em sender} with more knowledge of the state of nature. Given the signal, the receiver chooses the actions to maximize her own utility based on her belief of the state of nature (which may be influenced by the signal). Therefore, the sender designs the signals so that the receiver, acting in her own interest, maximizes some utility function the sender cares about. This problem is studied from various theoretical perspectives~\cite{dughmi2016persuasion, dughmi2019algorithmic,Babichenko21} as well as in different application domains~\cite{Bergemann15,chakraborty2014persuasive,xu2015exploring,ParetoIS}.

Starting with the seminal work of Bergemann {\em et al.}~\cite{Bergemann15}, there has been a line of work~\cite{dughmi2016persuasion,shen2018closed,cummings2020algorithmic,cai2020third,ParetoIS} on Bayesian persuasion in the bilateral trade model and its extensions. In this context, the sender is an intermediary and the receiver is the seller, who given the signal, implements an incentive-compatible auction to maximize expected revenue. The sender, on the other hand, is interested in maximizing consumer surplus or social welfare. In the versions we study with budgets or deadlines, the receiver's action space is the set of all randomized pricing rules, instead of just the posted prices in the basic setting~\cite{Bergemann15}. Our main contribution is to show the existence of socially efficient signaling schemes that preserve receiver utility (the revenue) and maximally increases sender utility (consumer surplus) despite this additional complexity. We note that for other non-trivial extensions of bilateral trade, for instance, the multi-buyer auction setting in~\cite{Alijani20} and the multi-item auction setting in~\cite{ParetoIS}, it may in general not be possible to find socially optimal signaling schemes that preserve seller revenue. This makes our positive results all the more surprising.

As mentioned before, our work crucially requires a characterization of optimal randomized auction in the respective settings. For public budgets, Laffont and Roberts~\cite{Laffont96} show that the optimal auction is a posted price scheme assuming regular distributions; for general priors, Chawla {\em et al.}~\cite{Chawla11} show it is a lottery over two options. Che and Gale~\cite{Che00} consider private budgets with a decreasing marginal revenue assumption, and show it is a different {\em price curve} for each budget. Fiat {\em et al.}~\cite{Fiat16} and subsequently Devanur and Weinberg~\cite{Devanur17} use duality to respectively generalize this characterization to private deadlines and private budgets with arbitrary priors; however, the characterization in the latter case is not closed form. Since we use finite support priors, we present stand-alone proofs of the required characterizations, and these may be of independent interest.

\paragraph{Organization.}
In Section \ref{section:model}, we present preliminaries for optimal auction design and signaling. In Section \ref{section:public_continuous}, we present the signaling scheme and analysis for the public budget case. %We present an alternate proof via characterization of the signals in~\cite{Bergemann15} in Section~\ref{apdx:bbm}.
In Section \ref{section:deadlines}, we present our main result -- the new signaling scheme for the FedEx problem, where the deadlines are private. In Section \ref{section:private}, we present the impossibility result for the private budget setting with interim rationality. All omitted proofs are in the Appendix. %except for one aspect: We require characterizing the optimal auctions for the public budget and deadline setting with discrete type spaces; see Theorems~\ref{theorem:public_optimal_revenue_is_a_distribution_over_posted_price_revenues} and~\ref{theorem:deadlines_optimal_revenue_is_a_distribution_over_posted_price_revenues}. These proofs are quite technical and therefore relegated to the respective Appendix. They can be skipped on a first read.

%Section \ref{sec:conclusions} presents our conclusions and the future work.

\section{Preliminaries}
\label{section:model}
%In this section, we recap the results on auction design with budgets and deadlines, and then review the buyer optimal signaling schemes without budgets and deadlines.

\subsection{Optimal Auctions with Budgets}
\label{subsec:budgets}
We consider a single seller selling an item to a single buyer with private valuation $v$ and private budget $b$ as a hard upper bound of payment. It is known that optimal auctions with budgets require {\em randomized allocations}~\cite{Chawla11,Che00,Devanur17}: The buyer's utility is $(x \cdot v - p)$ if she pays a price of $p \leq b$ to get the item with probability $x \in [0,1]$, and is $-\infty$ if $p > b$. Throughout the paper, we focus on {\em interim IR} auctions where the buyer pays at most $b$ {\it before learning whether or not she receives the item}. % therefore the buyer could sometimes receive negative utility when the allocations are randomized. 
As mentioned before, this is the standard model for studying budget constrained auctions in economics literature~\cite{Che00,Laffont96}, and allows for all-pay auctions. \footnote{We note that the algorithms in Sections~\ref{section:public_continuous} and~\ref{section:deadlines} construct signals whose optimal auctions are deterministic and satisfy {\em ex-post} IR, meaning that the price charged is always at most the budget. However, our lower bounds in Section~\ref{section:private} assume interim IR, and extending this to ex-post IR is an open question.} Alternatively, it models ex-post rationality assuming the item is infinitely divisible, and $x \in [0,1]$ represents the fraction of the item the buyer obtains at price $p$.

The joint distribution $(v,b) \sim \D$ is common knowledge and supported on a discrete set $\supp(\D) \coloneqq \{v_1, \ldots, v_n\} \times \{b_1, \ldots, b_k\}$, where $0 < v_1 < \cdots < v_n$ and $0 < b_1 < \cdots < b_k$. 
For $j = 1, 2, \ldots, k$, let $\D_j$ represent the marginal distribution of $v$ given $b = b_j$, and define $f_{\D_j}$ as the probability mass function of $\D_j$, i.e., $ f_{\D_j}(v_i) = \Pr_{v \sim \D_j}[v = v_i] = \Pr_{(v,b) \sim \D}[v = v_i \mid b = b_j].$ Let $\underline{F}_{\D_j}(v_i) = \Pr_{v \sim \D_j}[v \leq v_i]$ and $\overline{F}_{\D_j}(v_i) = \Pr_{v \sim \D_j}[v \geq v_i]$. We assume the item holds no value to the seller; therefore, the maximum social welfare is $\W^\ast(\D) = \E_{(v,b) \sim \D}[v]$, and is achieved by any auction that always makes the trade happen (or sells the item). 

%\paragraph{Remarks.} Though we omit details, our positive results also apply to the {\em ex-post} IR setting, where the buyer pays at most $b$ only if the item is allocated. Also, a

\paragraph{Optimal Auctions.}  It is known~\cite{Myerson81, Laffont96, Chawla11, Che00, Devanur17} that the revenue maximizing %\footnote{While such auctions are often called optimal in the literature, we sometimes avoid using the word ``optimal'' as the intermediary can (and often) have a different objective.} 
auction for the seller can be described using \textit{lotteries} or randomized allocation rules. Specifically, each buyer type with valuation $v_i$ and budget $b_j$ is associated to a payment $p_{ij} \geq 0$ and an allocation probability $x_{ij} \in [0,1]$ to receive the item. Note that in the interim-IR setting the buyer pays $p_{ij}$ upfront regardless of whether she receives the item.%\footnote{For the ex-post IR setting, the buyer pays $p_{ij}$ conditioned on receiving the item.}  

Following~\cite{Che00,Devanur17}, we assume buyer with type $(v_i, b_j)$ cannot report a budget larger than $b_j$.\footnote{Our positive results in Sections~\ref{section:public_continuous} and~\ref{section:deadlines} either assume the budget is publicly known, or assume deadlines instead of budgets, so this point is moot there. It is only relevant for the lower bounds in Section~\ref{section:private}.} %We assume the utility of the buyer for making a payment larger than her budget is $-\infty$. We can therefore make a buyer report a budget at most her type $b_j$ by collecting the entire reported budget with a small probability, or 
We can enforce this by a cash bond that requires the full reported budget.  We note that the setting where the IC constraints are only enforced for smaller budgets is more challenging for designing optimal auctions~\cite{Che00}. By the revelation principle, it is sufficient to consider lotteries that are \textit{incentive compatible}, i.e., for all $i$ and $j$, a buyer of type $(v_i, b_j)$ receives maximum possible utility from reporting her true type $(v_i, b_j)$ and thereby receiving the item with allocation probability $x_{ij}$ at price $p_{ij}$. 

The \textit{revenue optimal} auction can be computed by the following LP from~\cite{Devanur17}. 

%\begin{figure}
{\small \begin{align*}
    \textsf{Budgets}(\D) & \coloneqq \max_{\{p_{ij}\}, \{x_{ij}\}} \quad  \sum_{j=1}^{k} \left( \Pr[b = b_j] \cdot \sum_{i=1}^{n} \big( f_{\D_j}(v_i) \cdot p_{ij} \big) \right) \span\span\span\\
    \text{s.t.} \quad & v_i \cdot x_{ij} - p_{ij} \geq v_i \cdot x_{i'j} - p_{i'j}, &\quad \forall 1 \leq i,i' \leq n, &\, 1 \leq j \leq k, \tag*{(Same-budget IC)}\\
    & v_i \cdot x_{ij} - p_{ij} \geq v_i \cdot x_{i(j-1)} - p_{i(j-1)}, &\quad \forall 1 \leq i \leq n, &\, 2 \leq j \leq k, \tag*{(Inter-budget IC)}\\
    & v_i \cdot x_{ij} - p_{ij} \geq 0, &\quad \forall 1 \leq i \leq n, &\, 1 \leq j \leq k, \tag*{(IR)}\\
    & 0 \leq x_{ij} \leq 1, &\quad \forall 1 \leq i \leq n, &\, 1 \leq j \leq k, \tag*{(Feasibility)}\\
    & p_{ij} \leq b_j, &\quad \forall 1 \leq i \leq n, &\, 1 \leq j \leq k. \tag*{(Budgets)}
\end{align*}}
%\caption{\label{fig1} The LP formulation for optimal revenue.}
%\end{figure}

By transitivity, the same-budget and inter-budget IC constraints imply all necessary IC constraints so that the buyer with valuation $v_i$ and budget $b_j$ does not misreport with some valuation $v_{i'} \neq v_i$ and/or some budget $b_{j'} < b_j$.

\begin{definition} \label{def:three_key_amounts}
For the revenue maximizing auction $(\{p^{\ast}_{ij}\}, \{x^{\ast}_{ij}\})$ that is the optimal solution to \textsf{Budgets}$(\D)$, denote
\begin{align*}
    \R(\D) &= \sum_{i,j} \big( \Pr[{\D} = (v_i,b_j)] \cdot p^{\ast}_{ij} \big),\\
    \W(\D) &= \sum_{i,j} \big( \Pr[{\D} = (v_i,b_j)] \cdot v_i \cdot x^{\ast}_{ij} \big), \ \ \mbox{and}\\
    \CS(\D) &= \sum_{i,j} \big( \Pr[{\D} = (v_i,b_j)] \cdot (v_i \cdot x^{\ast}_{ij} - p^{\ast}_{ij}) \big)
\end{align*}
as the expected revenue (generated by the seller), the expected social welfare, and the expected consumer surplus (generated for the buyer), respectively. Then we have $\CS(\D) + \R(\D) = \W(\D)$.\footnote{If there are multiple optimal auctions maximizing $\R(\D)$, we break ties by defining $(\{p^{\ast}_{ij}\}, \{x^{\ast}_{ij}\})$ to be the auction that maximizes $\W(\D)$ among the optimal solutions. This auction must maximize $\CS(\D)$ as well.} 
\end{definition}

We now specify two special cases of the budgeted problem for which we derive positive results.

\paragraph{Optimal Auctions with Public Budget.}
The first special case we consider is the {\it public} budget setting~\cite{Laffont96,Chawla11} where $k = 1$, the budget $b = b_1$ is public information, and the only marginal distribution is $\D = \D_1$. This setting is motivated by the seller having an upper bound on the price they can charge any buyer, say due to regulation or other considerations.

In this case we omit the subscripts by referring to $f_{\D}(v_i)$, $\underline{F}_{\D}(v_i)$, and $\overline{F}_{\D}(v_i)$, and use $p_i$ and $x_i$ as shorthand for the payment variables $p_{i1}$ and allocation variables $x_{i1}$, respectively. For this case, the optimal auction is captured by the following special case of \textsf{Budgets} with $k = 1$:
\begin{align*}
    \textsf{Public}(\D) \coloneqq \max_{\{p_i\}, \{x_i\}} \quad & \sum_{i=1}^{n} \big( f_{\D}(v_i) \cdot p_{i} \big) &\\
    \text{s.t.} \quad & v_i \cdot x_i - p_i \geq v_i \cdot x_{i'} - p_{i'}, &\quad \forall 1 \leq i,i' \leq n, \tag*{(IC)}\\
    & v_i \cdot x_{i} - p_{i} \geq 0, &\quad \forall 1 \leq i \leq n, \tag*{(IR)}\\
    & 0 \leq x_{i} \leq 1, &\quad \forall 1 \leq i \leq n, \tag*{(Feasibility)}\\
    & p_{i} \leq b, &\quad \forall 1 \leq i \leq n. \tag*{(Budget)}
\end{align*}

We devise price discrimination schemes for the public budget setting in Section~\ref{section:public_continuous}.

\paragraph{Optimal Auctions with Deadlines.}
%\label{subsec:deadlines}
In this setting~\cite{Fiat16,Saxena18}, we consider a single seller selling an identical item with different levels of service quality to a single buyer. The buyer now has private valuation $v$ (conditioned on getting the item with at least her desired level of quality) and a private desired level of quality  $d$. One can  think of $d$ as either a personal deadline for shipping options, or as a level of service quality for a product. Keeping with previous work, we will refer to $d$ as deadlines throughout.

The buyer's utility is $(x \cdot v - p)$ if she pays a price of $p$ to get the item with a probability of $x$ at some point before or right at her deadline. She incurs utility $-p$ if she gets the item later than her deadline, since in this case, she accrues no value from the item. As observed in~\cite{Fiat16}, it is sufficient to consider auctions that, for each buyer with deadline $d$, only allocates the item right at the $d$-th deadline (if at all). This is because a buyer does not get any additional utility if she receives the item at some point earlier than her own deadline. Furthermore, the buyer weakly prefers getting nothing over getting the item after her own deadline for some price.

The LP for this setting is a special case of the LP for the private budget setting with large budgets, that is, when $v_n < b_j$ holds for all budget types $j = 1, \ldots, k$. As every budget is above the highest possible valuation, by the IR constraint, the optimal auction never sets a price above $b_j$ for any buyer with budget $b_j$, and thus the budget constraint $p_{ij} \leq b_j$ in \textsf{Budgets} can be omitted. 

For this case, we simplify the notations by denoting the joint distribution $(v,d) \sim \D$ supported on $\supp(\D) \coloneqq \{ v_1, \ldots, v_n\} \times \{1, \ldots, k\}$, where $0 < v_1 < \cdots < v_n$. The deadlines can be represented as $\{1, \ldots, k\}$ since their cardinal values do not matter. For $j = 1, 2, \ldots, k$, $\D_j$ now represents the marginal distribution of $v$ given $d = j$, and the corresponding probability mass function of $\D_j$ is
\[ f_{\D_j}(v_i) = \Pr\limits_{v \sim \D_j}[v = v_i] = \Pr\limits_{(v,d) \sim \D}[v = v_i \mid d = j]. \]
Let $\underline{F}_{\D_j}(v_i) = \Pr_{v \sim \D_j}[v \leq v_i]$ and $\overline{F}_{\D_j}(v_i) = \Pr_{v \sim \D_j}[v \geq v_i]$. We again assume the item holds no value to the seller; therefore, the maximum social welfare is ${\W}^\ast(\D) = \E_{(v,d) \sim \D}[v]$, and is achieved by any auction that always allocates the item to each buyer right at her personal deadline.

The \textit{revenue maximizing} randomized incentive compatible auction for the deadlines setting is thus the following:
\begin{align*}
    \textsf{Deadlines}(\D) \coloneqq \max_{\{p_{ij}\}, \{x_{ij}\}} \quad  \sum_{j=1}^{k} \left( \Pr_{(v,d)\sim \D}[d = j] \cdot \sum_{i=1}^{n}  \big( f_{\D_j}(v_i) \cdot p_{ij} \big) \right) \span\span \\
    \text{s.t.} \quad & v_i \cdot x_{ij} - p_{ij} \geq v_i \cdot x_{i'j} - p_{i'j}, &\quad \forall 1 \leq i,i' \leq n, \, 1 \leq j \leq k, \tag*{(Same-deadline IC)}\\
    \quad & v_i \cdot x_{ij} - p_{ij} \geq v_i \cdot x_{i(j-1)} - p_{i(j-1)}, &\quad \forall 1 \leq i \leq n, \, 2 \leq j \leq k, \tag*{(Inter-deadline IC)}\\
    & v_i \cdot x_{ij} - p_{ij} \geq 0, &\quad \forall 1 \leq i \leq n, \, 1 \leq j \leq k, \tag*{(IR)}\\
    & 0 \leq x_{ij} \leq 1, &\quad \forall 1 \leq i \leq n, \, 1 \leq j \leq k. \tag*{(Feasibility)}
\end{align*}

Note that \textsf{Deadlines} is a special case of \textsf{Budgets} where the budget constraint is omitted, and the IC constraints in \textsf{Deadlines} prevent misreporting a lower deadline. 

\paragraph{Remarks.} We first note that though our scheme require discrete priors over valuations, these also serve as arbitrarily good approximations to continuous priors via simple discretization~\cite{cai2021duality,RubinsteinW}. Secondly, note that the optimal auctions with interim IR coincides with that for ex-post IR for both the public budget and the deadline setting; the former follows from Theorem~\ref{theorem:public_optimal_revenue_is_a_distribution_over_posted_price_revenues} (or from~\cite{Chawla11}), while the latter follows because the prices are not really constrained by any budget. Therefore, our positive results in Sections~\ref{section:public_continuous} and~\ref{section:deadlines} extend as is to ex-post IR. Our negative results in Section~\ref{section:private} do require interim IR.

\subsection{Price Discrimination} \label{subsec:intermediary_model}
We next introduce price discrimination via signaling by an information intermediary for the general private budget setting; specializing it to deadlines or public budgets is straightforward. The intermediary knows the type $(v_i, b_j)$ of the buyer, and can propose a \textit{signaling scheme} that maps the buyer's private information (i.e., a \textit{value-budget} pair $(v_i,b_j)$) to a distribution over \textit{signals} that conveys additional information to the seller. This makes the seller \textit{update} her belief of the buyer's information via Bayes' rule. The signaling scheme thus can be seen as \textit{segmenting} the market of buyers, each segment representing the conditional distribution of buyer type given the signal. Therefore, we can overload terminology and simply define a signal $\S$ as the posterior distribution of $(v,b)$ given the signal.

\paragraph{Signaling Scheme.} Formally, a \textit{signaling scheme} $\Z = \{ (\weight_h, \S_h) \}_{h \in [H]}$ is a collection of signals $\S_1, \ldots, \S_H$ and probability weights $\weight_1, \ldots, \weight_H > 0$, where $\sum_{h=1}^{H} \weight_h = 1$. Here $\S_h$ represents the posterior distribution of the type given the $h$-th signal. We also require $\Z$ being \textit{Bayes plausible}~\cite{Kamenica11}, 
\begin{equation}
    \sum_{h=1}^{H} \weight_h \S_h = \D, \label{eq:bayes_plausibility}
\end{equation}
i.e., the \textit{average} signal is just the prior $\D$. The intermediary commits to this signaling scheme before she observes the buyer type, and this scheme is public knowledge to all parties. 

We note that in general, collection of signals above could be a continuous set, and the weights could represent a density over this set. This aspect will not affect our algorithms, since the signaling schemes we construct in Sections~\ref{section:public_continuous} and~\ref{section:deadlines} will have finitely many signals. On the other hand, the lower bound results in Section~\ref{section:private} hold even if the set of signals is uncountable.

Upon observing private information $(v,b)$, the intermediary sends the $h$-th signal with probability $\frac{\weight_h \cdot \Pr[\S_h = (v,b)]}{\Pr[{\D} = (v,b)]}$, and given this signal, if the seller uses Bayes rule to update the prior on the buyer's type, the posterior will be precisely $\S_h$. The seller then implements the revenue maximizing auction based on the updated prior $\S_h$. 

\paragraph{Buyer Optimal Schemes.} Abusing the notation defined before, we let $\R(\Z) = \sum_{h=1}^{H} \big( \weight_h \cdot \R(\S_h) \big)$, $\W(\Z) = \sum_{h=1}^{H} \big( \weight_h \cdot \W(\S_h) \big)$, and $\CS(\Z) = \sum_{h=1}^{H} \big( \weight_h \cdot \CS(\S_h) \big)$ denote the expected revenue, the expected social welfare, and the expected consumer surplus, respectively, achieved by the signaling scheme $\Z$, where the expectation is now taken over all signals. As before, we have $\CS(\Z) + \R(\Z) = \W(\Z)$.

Furthermore, $\R(\Z) \geq \R(\D)$; otherwise, the seller can ignore the signaling scheme $\Z$ and implement the revenue maximizing auction based on $\D$ instead. Hence, for any possible signaling scheme $\Z$, we have 
$$\CS(\Z) = \W(\Z) - \R(\Z) \leq \W^\ast(\D) - \R(\D)$$ 
as an upper bound of the expected consumer surplus. Recall that ${\W}^\ast(\D)$ is the maximum possible social welfare assuming the item always sells. We define this bound on maximum achievable consumer surplus as
\[ \opt(\D) \coloneqq \W^\ast(\D) - \R(\D) = \E\limits_{(v,b) \sim \D}[v] - \R(\D). \]

To achieve $\CS(\Z) = \opt({\D})$, the signaling scheme $\Z$ thereby needs to satisfy (a) the item always sells, and (b) the revenue $\R(\Z)$ generated by $\Z$ is exactly $\R(\D)$, i.e., the expected revenue without signaling. We call a signaling scheme \emph{buyer optimal} if it achieves this upper bound.

\subsection{Buyer Optimal Signaling without Budgets or Deadlines}
We illustrate the concept of signaling via the following example.

\begin{example}
Consider a simple two-point distribution where $v = 1$ and $3$ each with probability $\frac{1}{2}$. Then the maximum social welfare is $\W^\ast(\D) = \frac{1}{2} \cdot 1 + \frac{1}{2} \cdot 3 = 2$. To map to the notation above, we assume the budget is $b=3$, so that the type is $(v,b) = (1,3)$ and $(3,3)$ with probability $1/2$ each; this budget is irrelevant to the computations below. The revenue maximizing auction is a single posted price $p = 3$, which raises a revenue $\R(\D) = 3 \cdot \frac{1}{2} = \frac{3}{2}$, whereas the consumer gets no surplus (i.e., $\CS(\D) = 0$). 

\sloppy Consider a signaling scheme $\Z$ that segments $\D$ into two signals $\S_1$ and $\S_2$ as follows: $\Pr_{(v,b) \sim \S_1}[(v,b) = (1,3)] = \frac{2}{3}$, $\Pr_{(v,b) \sim \S_1}[(v,b) = (3,3)] = \frac{1}{3}$, and $\Pr_{(v,b) \sim \S_2}[(v,b) = (3,3)] = 1$. They are given weights $\weight_1 = \frac{3}{4}$ and $\weight_2 = \frac{1}{4}$.  If the intermediary observes the buyer with type $(v,b) = (1,3)$, $S_1$ is released with probability $\weight_1 \cdot \frac{\Pr[\S_1 = (1,3)]}{\Pr[{\D} = (1,3)]} = \frac{3}{4} \cdot \frac{2/3}{1/2} = 1$; on the other hand, if the buyer has type $(v,b) = (3,3)$, the intermediary releases $S_1$ or $S_2$ with equal probability, since $\weight_1 \cdot \frac{\Pr[\S_1 = (3,3)]}{\Pr[{\D} = (3,3)]} = \frac{3}{4} \cdot \frac{1/3}{1/2} = \frac{1}{2} = \frac{1}{4} \cdot \frac{1}{1/2} = \weight_2 \cdot \frac{\Pr[\S_2 = (3,3)]}{\Pr[{\D} = (3,3)]} $.

Again, since the budget is not relevant, the revenue maximizing auctions in $\S_1$ and $\S_2$ are posted prices $p_1 = 1$ and $p_2 = 3$, respectively. The expected revenue raised by the seller in this signaling scheme is $\R(\Z) = \weight_1 \cdot \R(\S_1) + \weight_2 \cdot \R(\S_2) = \frac{3}{4} \cdot 1 + \frac{1}{4} \cdot 3 = \frac{3}{2} = \R(\D)$. The consumer surplus for the buyer is now $\CS(\Z) = \weight_1 \cdot \CS(\S_1) + \weight_2 \cdot \CS(\S_2) = \frac{3}{4} \cdot \frac{1}{3} \cdot (3-1) + \frac{1}{4} \cdot 0 = \frac{1}{2}$. Finally, since the item always sells, the expected social welfare is exactly at $\W^\ast(\D) = 2$.
\end{example}

In this example, the optimal consumer surplus $\opt(\D)$ is indeed achieved by the signaling scheme $\Z$. This is not a coincidence, but an example of the ``optimal signaling schemes'' given by Bergemann \etal~\cite{Bergemann15}. We restate their main result for the case of no budgets or deadlines:

\begin{restatable}[Bergemann \etal 's signaling schemes~\cite{Bergemann15}]{theorem}{bbmthm}
\label{theorem:bbm}
Suppose $k=1$ and $b_1 \geq v_n$. Then for any arbitrary prior $\D$, there exists a signaling scheme $\Z^{\bbm}_{\D}$ that guarantees:
\begin{enumerate}
    \item {\em efficiency}: $\W(\Z^{\bbm}_{\D}) = \W^\ast(\D)$ (i.e., the item always sells);
    \item {\em minimum revenue}: $\R(\Z^{\bbm}_{\D}) = \R(\D)$ (i.e., the seller's revenue does not increase);
    \item {\em maximum consumer surplus}: $\CS(\Z^{\bbm}_{\D}) = \opt(\D) = \W^\ast(\D) - \R(\D)$ (i.e., the scheme maximizes the expected consumer surplus among all possible signaling schemes.)
\end{enumerate}
\end{restatable}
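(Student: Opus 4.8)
The plan is to exhibit $\Z^{\bbm}_{\D}$ explicitly via the continuous-time construction advertised in the introduction, and then verify the three stated guarantees essentially for free. Since $k=1$ and $b_1 \geq v_n$, the budget never binds, so the optimal auction for any residual distribution is a posted price (the classical Myerson fact), and the whole argument lives in one dimension. Let $p^*(\D)$ denote the (smallest) monopoly price for $\D$, so $\R(\D) = p^*(\D)\cdot \overline{F}_{\D}(p^*(\D))$. The key structural observation I would prove first is the \emph{invariant}: one can continuously peel off infinitesimal ``efficient'' signals --- that is, signals $\S$ whose support is a prefix $\{v_1,\dots,v_m\}$ with monopoly price $v_1$, so that the optimal auction on $\S$ sells to everyone at price $v_1$ --- in such a way that the residual distribution always keeps monopoly price exactly $p^*(\D)$. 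Concretely, at each instant the process removes mass proportionally so that the lowest surviving value acts as a monopoly price of its own residual segment; this is exactly the greedy decomposition of \cite{Bergemann15} reinterpreted as a flow.

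Next I would set up the bookkeeping. Parametrize the process by a ``time'' $t$ running as total mass removed goes from $0$ to $1$, let $\D_t$ be the residual (normalized) distribution, and let $R(t) := (1-t)\cdot \R(\D_t)$ be the \emph{unnormalized} revenue still locked in the residual. The two invariants to maintain are: (a) every signal removed is efficient (always sells), and (b) the revenue contributed by the infinitesimal signal removed at time $t$ equals $-\,dR/dt$. Invariant (a) gives efficiency of $\Z^{\bbm}_{\D}$: the item always sells in every signal, hence $\W(\Z^{\bbm}_{\D}) = \W^\ast(\D)$. For the revenue accounting, integrating (b) from $0$ to $1$ gives $\R(\Z^{\bbm}_{\D}) = \int_0^1 (-dR/dt)\,dt = R(0) - R(1) = \R(\D) - 0 = \R(\D)$, where $R(1)=0$ since the residual mass vanishes. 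That is guarantee 2. Guarantee 3 is then immediate: since the item always sells, $\W(\Z^{\bbm}_{\D}) = \W^\ast(\D)$, so $\CS(\Z^{\bbm}_{\D}) = \W(\Z^{\bbm}_{\D}) - \R(\Z^{\bbm}_{\D}) = \W^\ast(\D) - \R(\D) = \opt(\D)$, and the general upper bound $\CS(\Z) \le \opt(\D)$ shown in Section~\ref{subsec:intermediary_model} certifies optimality.

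The main obstacle --- and the only place real work is needed --- is verifying that a process satisfying both invariants actually exists, i.e., choosing the peeling rates so that (a) and (b) hold simultaneously and the monopoly price $p^*(\D)$ of the residual is preserved throughout. I would handle this by describing the rate of removal explicitly: at time $t$, with lowest surviving value $v_{\ell(t)}$, remove a signal supported on $\{v_{\ell(t)},\dots,v_n\}$ whose shape is the residual itself restricted to that prefix-of-values-above-$v_{\ell(t)}$ truncated at the point where value $v_{\ell(t)}$ is an optimal price; the mass at $v_{\ell(t)}$ is exhausted exactly when the revenue-preservation constraint would otherwise force the monopoly price up. The Envelope Theorem~\cite{Milgrom} enters here to compute $dR/dt$ cleanly: because $\R(\D_t)$ is a max over posted prices, its derivative in $t$ is obtained by differentiating only the explicit dependence on the removed mass, holding the optimal price fixed --- and this is precisely what makes invariant (b) checkable without tracking how the argmax price moves. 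Since this finite-support construction is discrete, the ``continuous process'' is really a finite sequence of phases, one per distinct value $v_\ell$, so all integrals above collapse to finite sums and no measure-theoretic subtlety arises; I would note that this recovers the Example verbatim and defer the multi-phase rate computation to the body of Section~\ref{section:public_continuous}, where the same machinery is reused for the public-budget case.
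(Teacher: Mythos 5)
Your high-level plan --- continuous peeling of efficient signals, the Envelope Theorem to compute the rate of change of the residual's revenue, and integration to conclude revenue preservation --- matches exactly the reinterpretation the paper develops in Section~\ref{section:public_continuous} (Theorem~\ref{theorem:public_buyer_optimality} subsumes the statement here once $b_1 \geq v_n$), and the split into invariants (a) and (b) plus the integration argument mirrors Theorem~\ref{theorem:public_revenue_preservation}. However, the construction you describe does not pin down the signal correctly, and that imprecision is load-bearing. The paper's signal at time $t$ is the \emph{equal revenue distribution} over $\supp(\D(t))$ (Definition~\ref{definition:equi_revenue_distribution}): the unique distribution over the surviving values under which every surviving value is a monopoly price. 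Your description --- first a signal ``supported on a prefix $\{v_1,\dots,v_m\}$'' and later one ``supported on $\{v_{\ell(t)},\dots,v_n\}$ whose shape is the residual itself restricted \dots\ truncated'' --- neither identifies this distribution nor is even contiguous in general, since types can be exhausted out of order and leave holes in $\supp(\D(t))$. The equal-revenue property is precisely what lets you check invariant (b) without knowing the argmax: for \emph{any} candidate monopoly price $v_i$ of the residual, the rate of decrease of $v_i \cdot \overline{F}_{\D(t)}(v_i)$ is $v_i \cdot \overline{F}_{\S}(v_i)$, which equals $v_{\min}(\S) = \R(\S)$ by Definition~\ref{definition:equi_revenue_distribution}. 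If the signal is not equal revenue the middle equality fails, and the Envelope Theorem step then requires knowing which $v_i$ attains the max --- exactly the fine-grained tracking the framework is built to avoid.

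Relatedly, the ``key structural observation'' you propose to establish first --- that the residual's monopoly price stays at $p^*(\D)$ throughout --- is the invariant of the original Bergemann et al.\ argument, and it is precisely what the paper's Envelope-Theorem framing is designed to sidestep. If you prove that invariant, you no longer need the Envelope Theorem; if you use the Envelope Theorem together with equal-revenue signals, you no longer need the invariant. Committing to both is internally redundant and suggests the role of the equal-revenue choice was not fully appreciated. A tight write-up should specify the signal via Definition~\ref{definition:equi_revenue_distribution}, use equal revenue to verify both (a) (the posted price $v_{\min}(\S)$ is optimal and always sells) and the per-fixed-price derivative in (b), and invoke the Envelope Theorem once to pass from a fixed auction to the maximum over auctions, as in Theorem~\ref{theorem:public_revenue_preservation}.
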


Note that the third property is implied from the first two. There are multiple constructions of $\Z^{\bbm}_{\D}$ given in~\cite{Bergemann15}, and one of these is equivalent to our scheme for public budgets presented in Section \ref{section:public_continuous}. These schemes proceed via the notion of \emph{Equal Revenue Signals}. We now introduce this notion since it is essential to our signaling scheme as well.

\begin{restatable}[Equal Revenue Signals]{definition}{equirevenue} \label{definition:equi_revenue_distribution}
A valuation distribution $\S$ over its support set $\supp(\S) = \{v_1, \ldots, v_n\}$ is {\em equal revenue} if it satisfies:
\[ \overline{F}_{\S}(v_1) \cdot v_1 = \overline{F}_{\S}(v_2) \cdot v_2 = \cdots = \overline{F}_{\S}(v_n) \cdot v_n = \R(\S). \]
\end{restatable}
In other words, {\em assuming no budgets}, every valuation with nonzero probability mass in $\supp(\S)$ is an optimal monopoly price for $\S$. This distribution is unique and can be obtained as follows:
\begin{align*}
    f_{\S}(v_1) = 1 - \frac{v_1}{v_2}; \quad
    f_{\S}(v_i) = \big( 1 - \underline{F}_{\S}(v_{i-1}) \big) \cdot (1 - \frac{v_i}{v_{i+1}}), \, \forall 2 \leq i \leq n-1; \quad
    f_{\S}(v_n) = 1 - \underline{F}_{\S}(v_{n-1}).
\end{align*}
%Therefore, given a support of size $n$, the equal revenue distribution $\S$ can be obtained in $O(n)$ time.

%Please refer to Appendix \ref{apdx:model} for a recursive $O(n)$ procedure to obtain an equal revenue signal from $\S$. %The BBM algorithm~\cite{Bergemann15} involves iteratively pulling out equal revenue signals from the prior $\D$ and recalculating the equal revenue signal once the support of $\D$ changes; 
\section{Warmup: Signaling Scheme for Public Budgets}
\label{section:public_continuous}
In this section, we prove the analog of Theorem~\ref{theorem:bbm} when there is a public budget. We show that there is a signaling scheme that is buyer optimal with a public budget. We will show this via reinterpreting the algorithm in~\cite{Bergemann15} as a continuous time process (Algorithm~\ref{alg:main_algorithm_public} below). The nice aspect of this interpretation is that it leads to a different proof of optimality (than~\cite{Bergemann15}) via invoking the Envelope Theorem~\cite{Milgrom} on the revenue of the residual prior as a function of time. This continuous time interpretation will form the building blocks for our main result for the version with deadlines (the FedEx problem) in Section~\ref{section:deadlines}. 

Interestingly, our signaling scheme for public budgets is {\em the same} as the no-budget signaling scheme in~\cite{Bergemann15}; this is easy to check and we omit the proof. However, our analysis is entirely different and more generalizable to the more complex deadline setting considered later.

\subsection{Signaling Algorithm}
\label{subsec:public_algorithm}
Throughout this section, the buyer's budget $b = b_1$ is public information, and we use $\D$ to denote the prior over the buyer values $\{v_1, v_2, \ldots, v_n\}$. We view the progress of the algorithm as continuously decreasing this prior into a residual prior, and continuously placing the remaining probability mass into the constructed signals.

We use the function $\Vec{\mathbf{f}}(t) = \langle f_1(t), \ldots, f_n(t) \rangle$ to represent the {\it residual prior}, where $f_i(t)$ represents the remaining probability mass on type $v_i$ in the residual prior distribution at time $t$. Strictly speaking, $\Vec{\mathbf{f}}(t)$ is not a distribution since the process we describe only guarantees $\sum_{i=1}^{n} f_i(t) < 1$ for $t > 0$. To make this a valid distribution, we place the remaining probability mass $1 - \sum_{i=1}^{n} f_i(t)$ at a dummy value $v_0 = 0$. We call the resulting distribution $\D(t)$. In the subsequent discussion, the notation $\Vec{\mathbf{f}}(t)$ represents the probability mass of $\D(t)$ at non-zero valuations, and we omit explicitly considering the dummy value $v_0 = 0$ as part of the support of $\D(t)$. We define $\supp(\D(t)) \coloneqq \{ v_i > 0 \mid f_i(t) > 0 \}$ and $v_{\min}(\D(t)) \coloneqq \min \{ v_i > 0 \mid f_i(t) > 0 \}$.

We start with the prior $\D(0) = \D$ and let $f_i(0) = f_{\D}(v_i)$, i.e., $\Vec{\mathbf{f}}(0)$ is just the probability vector associated with $\D$. Our algorithm continuously takes away probability mass from $\Vec{\mathbf{f}}(t)$ and transfers it to the constructed signals,  terminating when $\Vec{\mathbf{f}}(t)$ becomes $\mathbf{0}$; denote the latter time as $T$. 

At any time $t$ such that $\Vec{\mathbf{f}}(t) \neq \mathbf{0}$, we denote  $\Vec{\mathbf{s}}(t)$ as the probability distribution associated with the equal revenue distribution (see Definition~\ref{definition:equi_revenue_distribution}) $\S(t)$ over the set of values in $\supp(\D(t))$. Note that $\Vec{\mathbf{s}}(t)$ depends on $\supp(\D(t))$ but not the $f_i(t)$; therefore, it is fixed as long as $\supp(\D(t))$ does not change.\footnote{Note that there exist other equal revenue distributions over different support sets; for example, any distribution with a support size of one is equal revenue. However, for the purpose of our algorithm, the equal revenue distribution must use all remaining nonzero valuations with nonzero probability mass in the residual prior.} Our algorithm continuously reduces $\Vec{\mathbf{f}}(t)$ at rate  $\Vec{\mathbf{s}}(t)$ until $\Vec{\mathbf{f}}$ becomes $\mathbf{0}$. Formally:

\begin{align}
    \dv{\Vec{\mathbf{f}}}{t}(t) = -\Vec{\mathbf{s}}(t). \label{eq:public_differential_equation}
\end{align}

Since $\sum_{i=1}^{n} s_i(t) = 1$, the rate of decrease of $\sum_{i=1}^{n} f_i(t)$ is also $1$. Since $\sum_{i=1}^{n} f_i(0) = 1$, this means the process terminates at time $T =1$.

\paragraph{Signals constructed.} 
We say the type-$i$ valuation $v_i$ is {\it exhausted at time $t$} if $f_i(t) = 0$ but $f_i(t') > 0$ for all $t' < t$. The algorithm therefore terminates once all types are exhausted. Consider a maximal time interval $t \in [t_1, t_2)$ in which $\supp(\D(t))$ remains fixed; denote the equal revenue signal in this interval by $\Vec{\mathbf{s}}$. Therefore, $\S(t) = \Vec{\mathbf{s}}$ for $t \in [t_1, t_2)$. Then we have:
\begin{equation}
    \Vec{\mathbf{f}}(t_1) - \Vec{\mathbf{f}}(t_2) = -\int_{t = t_1}^{t_2} -\Vec{\mathbf{s}}(t) \dd{t} = (t_2 - t_1) \cdot \Vec{\mathbf{s}} \label{eq:public_signal_construction}
\end{equation}
    
Therefore, the final scheme includes a signal $\Vec{\mathbf{s}}$ with weight $(t_2 - t_1)$. This holds for every such interval $[t_1, t_2)$. Since $\Vec{\mathbf{s}}(t)$ changes only if some element in $\Vec{\mathbf{f}}(t)$ becomes zero, the number of signals constructed is at most $n$. The signaling scheme is now formally described in Algorithm~\ref{alg:main_algorithm_public}.

We have $\sum_{h=1}^{H} \weight^{\ast}_h = \sum_{h=1}^{H} (t_{h} - t_{h-1}) = t_H - t_0 = 1$. Further, we have Bayes plausibility (as defined in Eq.~(\ref{eq:bayes_plausibility}) in Section~\ref{section:model}): 

\begin{restatable}[]{observation}{bayesplausibility}
%\begin{observation} 
\label{observation:public_plausibility_of_signals}
    For any arbitrary $\D$, let $\Z^{\ast}_{\D} =  \{ (\weight^{\ast}_h, \S^{\ast}_h) \}_{h \in [H]}$ be the set of signals output by Algorithm~\ref{alg:main_algorithm_public} taking $\D$ as input. Then we have $H \leq |\supp(\D)|$, and $\sum_{h=1}^{H} \weight^{\ast}_h \S^{\ast}_h = \D$.
%\end{observation}
\end{restatable}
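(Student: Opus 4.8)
The plan is to verify the two claims in Observation~\ref{observation:public_plausibility_of_signals} directly from the construction of Algorithm~\ref{alg:main_algorithm_public}. The first claim, $H \leq |\supp(\D)|$, follows from the structure of the continuous process: the equal revenue signal $\Vec{\mathbf{s}}(t)$ changes only at time instants where $\supp(\D(t))$ changes, and $\supp(\D(t))$ can only shrink (once $f_i(t)=0$ it stays $0$, since the rate of decrease $s_i(t)$ is $0$ whenever $v_i \notin \supp(\D(t))$). Each distinct signal corresponds to one maximal interval $[t_{h-1}, t_h)$ on which the support is constant, and each transition between consecutive intervals exhausts at least one valuation; hence the number of intervals, $H$, is at most the number of valuations initially in the support, i.e.\ $|\supp(\D)|$.

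For Bayes plausibility, the key observation is the telescoping identity. By Eq.~(\ref{eq:public_signal_construction}), on the $h$-th maximal interval $[t_{h-1}, t_h)$ we have $\Vec{\mathbf{f}}(t_{h-1}) - \Vec{\mathbf{f}}(t_h) = \weight^{\ast}_h \cdot \Vec{\mathbf{s}}^{\ast}_h$, where $\weight^{\ast}_h = t_h - t_{h-1}$ and $\Vec{\mathbf{s}}^{\ast}_h$ is the probability vector of $\S^{\ast}_h$. Summing over $h = 1, \ldots, H$ and telescoping the left-hand side gives
\[
\sum_{h=1}^{H} \weight^{\ast}_h \, \Vec{\mathbf{s}}^{\ast}_h \;=\; \Vec{\mathbf{f}}(t_0) - \Vec{\mathbf{f}}(t_H) \;=\; \Vec{\mathbf{f}}(0) - \Vec{\mathbf{f}}(T) \;=\; \Vec{\mathbf{f}}(0) - \mathbf{0} \;=\; \Vec{\mathbf{f}}(0),
\]
using $\Vec{\mathbf{f}}(0) = f_{\D}$ and that the process terminates at $T$ with $\Vec{\mathbf{f}}(T) = \mathbf{0}$. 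Since $\Vec{\mathbf{f}}(0)$ is exactly the probability vector of $\D$ (on nonzero valuations), this is precisely $\sum_h \weight^{\ast}_h \S^{\ast}_h = \D$. I should also note in passing that $\sum_h \weight^{\ast}_h = t_H - t_0 = T = 1$ (already remarked in the text), so the weights form a valid probability distribution and each $\S^{\ast}_h$, being an equal revenue distribution, is a genuine probability distribution; together these confirm $\Z^{\ast}_{\D}$ is a bona fide signaling scheme.

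The only genuinely substantive point — and the one I would treat most carefully — is well-definedness of the continuous process itself: that the differential equation~(\ref{eq:public_differential_equation}) has a solution with the stated behavior, namely that $\supp(\D(t))$ is piecewise constant, changes only finitely often, and that $\Vec{\mathbf{f}}(t)$ reaches $\mathbf{0}$ exactly at $t=1$. On any interval where the support is fixed, $\Vec{\mathbf{s}}(t)$ is a constant vector (it depends only on $\supp(\D(t))$, as noted after Definition~\ref{definition:equi_revenue_distribution}), so the ODE is trivially linear with constant slope $-\Vec{\mathbf{s}}$ and the solution is affine in $t$; the next breakpoint is the first time some coordinate $f_i(t)$ hits $0$, which is well-defined since all active coordinates are strictly decreasing. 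Because at least one valuation is exhausted at each breakpoint, there are at most $n$ breakpoints, so the process is globally defined; and since $\sum_i s_i(t) = 1$ throughout, $\sum_i f_i(t) = 1 - t$, forcing termination exactly at $T=1$. I expect this book-keeping about the breakpoint structure to be the main (though still routine) obstacle; everything else is the telescoping sum above.
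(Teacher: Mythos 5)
Your proof is correct and takes essentially the same approach as the paper: the core of the argument is the telescoping sum of Eq.~(\ref{eq:public_signal_construction}), identifying $\Vec{\mathbf{f}}(0)$ with the probability vector of $\D$ and each $\Vec{\mathbf{s}}(t_{h-1})$ with the probability vector of $\S^\ast_h$. The paper's formal proof actually only shows the Bayes-plausibility identity (relying on the surrounding text for $H\le n$ and the piecewise-constant/termination behavior of the ODE), whereas you spell out the bound $H\le|\supp(\D)|$ and the well-definedness of the process explicitly; this is slightly more thorough but the substance is identical.
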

\begin{proof}
By summing up Eq. (\ref{eq:public_signal_construction}) for all pairs $[t_{h-1}, t_{h})$ for $h \in \{ 1, \ldots, H \}$ we have
    \begin{align*}
        \Vec{\mathbf{f}}(0) = \Vec{\mathbf{f}}(0) - \Vec{\mathbf{f}}(1) = \Vec{\mathbf{f}}(t_0) - \Vec{\mathbf{f}}(t_H) = \sum\limits_{h=1}^{H} \big( \Vec{\mathbf{f}}(t_{h-1}) - \Vec{\mathbf{f}}(t_{h}) \big) &= \sum\limits_{h=1}^{H} \big( (t_{h} - t_{h-1}) \cdot \Vec{\mathbf{s}}(t_{h-1}) \big)\\
            &= \sum\limits_{h=1}^{H} \big( \weight^{\ast}_h \cdot \Vec{\mathbf{s}}(t_{h-1}) \big).
    \end{align*}
Therefore the claim follows by observing $\Vec{\mathbf{f}}(0)$ is the probability vector associated with $\D$ and each $\Vec{\mathbf{s}}(t_{h-1})$ is the probability vector associated with $\S^{\ast}_h$.
\end{proof}

\begin{algorithm}[htbp]  
    \caption{Continuous Algorithm for Public Budget Setting}
    \label{alg:main_algorithm_public}
    \begin{algorithmic}[1]
        \Require $\D$
        \Ensure $\Z = \Z^{\ast}_{\D}$
        \State $t_0 \gets 0$; \ $\D(t_0) \gets \D$; \ $\Vec{\mathbf{f}}(t_0) \gets \langle f_{\D}(v_1), \ldots, f_{\D}(v_n) \rangle$; 
%        \State Compute $\Vec{\mathbf{f}}(t)$, $\S(t)$, and $\Vec{\mathbf{s}}(t)$ for all $t \in [0,1)$ (as defined in Equation~\ref{eq:public_differential_equation})
%        \State Compute  $ 0< t_1 < \cdots < t_H = 1$ so that some type is exhausted at each $t = t_h$; let $t_0 = 0$
        \For {$h \in \{1, \ldots, H\}$}
            \State $t \gets t_{h-1}$
            \State $\S(t_{h-1}) \gets$ Equal revenue distribution on $\supp(\D(t_{h-1}))$ %with density $\Vec{\mathbf{f}}(t_{h-1})$
            \State Run Equation~(\ref{eq:public_differential_equation}) using $\Vec{\mathbf{s}}(t)$ as density of $\S(t_{h-1})$ till some type's support in $\Vec{\mathbf{f}}(t)$ is exhausted at time $t = t_h$
            \State $\D(t_h) \gets$ distribution induced by $\Vec{\mathbf{f}}(t_h)$
            \State $\weight^{\ast}_h \gets t_{h} - t_{h-1}$; \  $\S^{\ast}_h \gets \S(t_{h-1})$; \ $\Z \gets \Z \cup \{(\weight^{\ast}_h, \S^{\ast}_h)\}$
        \EndFor
        \Return $\Z$
    \end{algorithmic}
\end{algorithm}

\subsection{Optimal Auction For Signals}
\label{subsec:public_welfare}
We start with the easy step. We characterize the revenue-optimal auctions in the signals created by Algorithm~\ref{alg:main_algorithm_public} in  Lemma~\ref{lemma:public_optimal_auction_for_signals} (proved in Appendix~\ref{app:public}). As an easy consequence, $\Z^\ast_{\D}$ always sells the item, and therefore guarantees efficiency. This is the first necessary condition for buyer optimality. 

Let $\S_h \in \Z^{\ast}_{\D}$ denote a signal created by Algorithm~\ref{alg:main_algorithm_public} for the prior $\D$, and let $v_{\min}(\S_h)$ denote the minimum $v_i > 0$ such that $\Pr_{v \sim \S_h}[v = v_i] > 0$.

\begin{restatable}[]{lemma}{publicoptimalauction}
%\begin{lemma}[Proved in Appendix~\ref{app:public}]
\label{lemma:public_optimal_auction_for_signals}
The optimal auction for $\S_h$ has the following structure:
\begin{itemize}
\item If $b \leq v_{\min}(\S_h)$, there is an optimal auction that posts a price of $b$, and raises a revenue of $b$.
\item If $b > v_{\min}(\S_h)$, there is an optimal auction that posts price $v_{\min}(\S_h)$ and raises a revenue $v_{\min}(\S_h)$. Further, for every $v \in \supp(\S_h)$, we have $v \cdot \overline{F}_{\S_h}(v) = v_{\min}(\S_h) $. 
\end{itemize}
%\end{lemma}
\end{restatable}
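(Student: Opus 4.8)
The plan is to first establish the structural claim that every signal $\S_h$ produced by Algorithm~\ref{alg:main_algorithm_public} is an equal revenue distribution, and then argue that for such distributions the optimal \textsf{Public} auction must be one of the two posted prices claimed. The first part is essentially by construction: on the maximal time interval $[t_{h-1},t_h)$ during which $\supp(\D(t))$ is constant, the residual prior is reduced at rate $\Vec{\mathbf{s}}(t)$, which by definition is the (unique) equal revenue distribution on $\supp(\D(t_{h-1}))$; by Eq.~(\ref{eq:public_signal_construction}) the emitted signal $\S^\ast_h$ has probability vector proportional to $\Vec{\mathbf{s}}(t_{h-1})$, hence is exactly that equal revenue distribution. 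So I would begin by recording: $\S_h$ is equal revenue on its support, i.e. $v\cdot\overline{F}_{\S_h}(v) = \R(\S_h)$ for every $v\in\supp(\S_h)$, and in particular (taking $v = v_{\min}(\S_h)$, where $\overline{F}_{\S_h} = 1$) we get $\R(\S_h) = v_{\min}(\S_h)$. This already proves the last sentence of the second bullet.

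Next I would analyze the LP \textsf{Public}$(\S_h)$. The key sub-claim is that with no budget (or $b=\infty$) the optimal auction for an equal revenue distribution is a deterministic posted price equal to $v_{\min}(\S_h)$, selling always, with revenue $v_{\min}(\S_h)$. One clean way: by the standard Myerson/menu argument, for a single buyer with a one-dimensional value the optimal auction is a posted price; since every $v\in\supp(\S_h)$ is a monopoly price (all yield revenue $\R(\S_h)=v_{\min}(\S_h)$), posting $v_{\min}(\S_h)$ is optimal and sells to everyone, giving welfare $\W^\ast(\S_h)=\E_{v\sim\S_h}[v]$. Then I would fold in the budget constraint $p_i\le b$. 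If $b > v_{\min}(\S_h)$, the price $v_{\min}(\S_h)$ is feasible, and since it was already optimal in the unconstrained problem it remains optimal here — this gives the second bullet. If $b \le v_{\min}(\S_h)$, then every value in $\supp(\S_h)$ is at least $b$, so posting the (feasible, maximal) price $b$ deterministically sells to everyone and raises $b$; and this is clearly the best possible, since no IR, feasible auction can extract more than $b$ in expectation when every payment is capped at $b$. This gives the first bullet.

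I would carry the steps in this order: (1) $\S_h$ is equal revenue and $\R(\S_h)=v_{\min}(\S_h)$; (2) unconstrained optimal auction for an equal revenue distribution is the posted price $v_{\min}(\S_h)$; (3) case split on $b$ versus $v_{\min}(\S_h)$ to insert the budget constraint and read off the two bullets. The main obstacle is step~(2): justifying that the optimal auction is a \emph{deterministic} posted price and that it is $v_{\min}(\S_h)$ rather than a randomized menu — this needs either the single-parameter Myerson argument (the optimal single-buyer auction is a take-it-or-leave-it price, and among candidate prices the equal-revenue property makes $v_{\min}$ optimal) or a direct LP argument bounding $\sum_i f_{\S_h}(v_i)p_i$ using the IC/IR constraints, e.g. showing $p_i \le v_i x_i$ and $x_i\le \overline F_{\S_h}$-type monotonicity forces $\sum_i f_{\S_h}(v_i)p_i \le \max_i v_i\overline F_{\S_h}(v_i) = v_{\min}(\S_h)$. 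Everything else is routine case-checking once that is in place.
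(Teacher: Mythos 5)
Your proposal is correct and follows essentially the same route as the paper: for $b \le v_{\min}(\S_h)$, bound revenue by $b$ from the budget constraint and note posting $b$ is feasible and sells always; for $b > v_{\min}(\S_h)$, drop the budget constraint, invoke Myerson's posted-price optimality, use the equal-revenue property to pick $v_{\min}(\S_h)$, and observe this price respects the budget. The one thing you flag as a potential obstacle — justifying that the unconstrained optimum is a deterministic posted price — is handled in the paper exactly as you suggest, by citing Myerson, so there is no real gap.
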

\begin{proof}
First consider the case when $b \leq v_{\min}(\S_h)$. In the program $\textsf{Public}(\S_h)$ (defined in Section~\ref{subsec:budgets}), we have the budget constraints $p_i \leq b$ for all $i$. This means the optimal revenue is $\sum_{i=1}^{n} \big( p_i f_{\S_h}(v_i) \big) \le b \cdot \sum_{i=1}^{n} f_{\S_h}(v_i) = b$. However, posting a price of $b$ is also a feasible auction, and raises exactly revenue $b$ since $b \leq v_{\min}(\S_h)$. Therefore, it is an optimal auction for $\textsf{Public}(\S_h)$.

Next consider the case when $b > v_{\min}(\S_h)$. Suppose we remove the constraints $p_i \le b$ from $\textsf{Public}(\S_h)$. This cannot decrease the optimal revenue. The optimal auction without the budget constraint posts the monopoly price~\cite{Myerson81}. Since $\S_h$ is equal revenue, this price is $v_{\min}(\S_h)$. However, this price is also feasible for the budget constraint $b$ since $b > v_{\min}(\S_h)$, which means it must be the optimal auction even with the budget constraint. The second part of the claim directly follows from the fact that $\S_h$ is equal revenue.
\end{proof}

By the characterizations of the optimal auctions in the above two cases, we have the following claim that states the item always sells in $\Z^{\ast}_{\D}$.
\begin{lemma}[Efficiency of $\Z^{\ast}_{\D}$] \label{lemma:public_efficiency_of_scheme}
For each signal $\S_h \in \Z^{\ast}_{\D}$, there exists a revenue optimal auction that always allocates the item. As a consequence, $\W(\Z^\ast_{\D}) = \W^\ast(\D)$.
\end{lemma}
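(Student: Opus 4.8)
The statement follows almost immediately by combining the structural characterization in Lemma~\ref{lemma:public_optimal_auction_for_signals} with the Bayes plausibility of the constructed scheme (Observation~\ref{observation:public_plausibility_of_signals}). First I would observe that in \emph{both} cases of Lemma~\ref{lemma:public_optimal_auction_for_signals}, the revenue-optimal auction exhibited there is a posted price that is at most $v_{\min}(\S_h)$: in the first case the price is $b \le v_{\min}(\S_h)$, and in the second case the price is exactly $v_{\min}(\S_h)$. Since every valuation in $\supp(\S_h)$ is at least $v_{\min}(\S_h)$, a buyer of any type in the support weakly prefers to buy at this price (her utility $v_i - p \ge 0$), so the item is allocated with probability $1$ to every type. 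Hence this particular revenue-optimal auction for $\S_h$ is efficient, which is the first bullet of the lemma.

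Next I would translate this into the welfare equality. Because there is a revenue-optimal auction for $\S_h$ that always allocates, and the tie-breaking rule in Definition~\ref{def:three_key_amounts} selects (among revenue-optimal auctions) one maximizing welfare, the welfare of the selected auction satisfies $\W(\S_h) \ge \E_{v \sim \S_h}[v]$; combined with the trivial upper bound $\W(\S_h) \le \W^\ast(\S_h) = \E_{v \sim \S_h}[v]$ this gives $\W(\S_h) = \E_{v \sim \S_h}[v]$ exactly. (Here I should also note the minor bookkeeping point that the signals $\S^\ast_h$ are supported only on the nonzero equal-revenue valuations, so the dummy value $v_0 = 0$ plays no role.) Then, using $\W(\Z^\ast_{\D}) = \sum_{h=1}^{H} \weight^\ast_h \W(\S^\ast_h) = \sum_{h=1}^H \weight^\ast_h \, \E_{v \sim \S^\ast_h}[v]$, I would invoke linearity of expectation together with the Bayes plausibility identity $\sum_{h=1}^H \weight^\ast_h \S^\ast_h = \D$ from Observation~\ref{observation:public_plausibility_of_signals} to conclude $\W(\Z^\ast_{\D}) = \E_{v \sim \D}[v] = \W^\ast(\D)$.

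\textbf{Main obstacle.} There is essentially no hard step here: all the real work has been done in Lemma~\ref{lemma:public_optimal_auction_for_signals} (which identifies an efficient optimal auction for each signal) and in Observation~\ref{observation:public_plausibility_of_signals} (Bayes plausibility). The only thing requiring a little care is making sure the welfare quantity $\W(\cdot)$ as \emph{defined} in Definition~\ref{def:three_key_amounts} — which refers to one specific, tie-broken, revenue-optimal auction — coincides with the efficient welfare; this is exactly where the "maximize $\W$ among revenue optimizers" tie-break is used, and it is worth stating explicitly so the reader does not worry that the designated optimal auction might be a different, non-efficient one.
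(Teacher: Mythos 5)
Your proposal is correct and matches the paper's (implicit) argument: the paper presents Lemma~\ref{lemma:public_efficiency_of_scheme} as a direct consequence of Lemma~\ref{lemma:public_optimal_auction_for_signals} without spelling out the details, and you fill in exactly the right ones — that the posted price in both cases of Lemma~\ref{lemma:public_optimal_auction_for_signals} is at most $v_{\min}(\S_h)$ so every type buys, that the tie-breaking convention in Definition~\ref{def:three_key_amounts} guarantees the designated optimal auction realizes this welfare, and that Bayes plausibility (Observation~\ref{observation:public_plausibility_of_signals}) lifts per-signal efficiency to $\W(\Z^\ast_{\D}) = \W^\ast(\D)$.
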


\subsection{Characterization of Optimal Auction for $\D(t)$} 
\label{subsec:public_optimal_auction}
We analyze the revenue of the signals by showing that the rate of decrease in revenue of the optimal auction for $\D(t)$ is equal to the revenue of the signal $\S(t)$. (See Theorem~\ref{theorem:public_revenue_preservation}.) This when integrated over time shows that the optimal revenue of the signals is exactly equal to the optimal revenue for prior $\D$, hence showing buyer optimality.

%The main technical hurdle we encounter is to characterize the optimal auction for $\D(t)$. Such a characterization is known when the prior is continuous~\cite{Chawla11}, but in our case, the prior is discrete, and previous characterizations for this setting~\cite{Devanur17} are cumbersome for our purpose. %The second and related hurdle is that the support of $\D(t)$ can change abruptly, which changes the constraint set of $\textsf{Public}(\D(t))$ abruptly, potentially making the revenue of $\D(t)$ discontinuous in $t$.

\paragraph{Continuous Constraints.} For the purpose of analysis, we make the constraints in $\textsf{Public}(\D(t))$ hold not just for $\supp(\D(t))$, but for all {\em continuous} values $v \ge 0$, where the prior possibly has zero probability mass.\footnote{It follows from~\cite{RubinsteinW,cai2021duality} that this formulation is equivalent to $\textsf{Public}(\D(t))$, though we will not need this equivalence.}  Among other things, this formulation allows us to argue that the revenue changes continuously as the prior changes while constructing our signals.

Formally, fix some time $t$, and let $\A = \D(t)$ so that $\supp(\A) \subseteq \supp(\D)$. Recall that the decision variables in $\textsf{Public}(\A)$ are $p_i$ (the payment) and $x_i$ (the allocation probability) for all buyer types with valuation $v_i$.  We augment the variables by extending the domain to $[0,v_n]$; for all $v \in [0, v_n]$, we let $p(v)$ and $x(v)$ denote the expected payment and allocation probability at $v \in [0,v_n]$. This yields the following LP, where the IC and IR constraints are extended to this domain.

\begin{align*}
    \textsf{PublicContinuous}(\A) & \coloneqq \max_{p(\cdot), x(\cdot)} \quad \sum_{i=1}^{n} \big( f_{\A}(v_i) \cdot p(v_i) \big) \span\span\\
    \text{s.t.} \quad & v \cdot x(v) - p(v) \geq v \cdot x(v') - p(v'), &\quad \forall v, v' \in [0, v_n], \tag*{(Cont. IC)}\\
    & v \cdot x(v) - p(v) \geq 0, &\quad \forall v \in [0, v_n], \tag*{(Cont. IR)}\\
    & 0 \leq x(v) \leq 1, &\quad \forall v \in [0, v_n], \tag*{(Feasibility)}\\
    & p(v) \leq b, &\quad \forall v \in [0, v_n]. \tag*{(Budget)}
\end{align*}

From Definition~\ref{def:three_key_amounts}, $\R(\A)$ is the optimal revenue achievable by $\textsf{Public}(\A)$. Denote by $\Tilde{\R}(\A)$ the optimal revenue achievable by $\textsf{PublicContinuous}(\A)$.  Clearly, $\R(\A) \geq \Tilde{\R}(\A)$. 

We now present the main characterization result for the optimal solution to this LP. This can be viewed as a discrete analog of the characterization for continuous priors in~\cite{Chawla11}. 
We present a stand-alone proof for our discrete setting in Appendix~\ref{apdx:proof_of_public_optimal_revenue_for_signals} via convexity of the utility curve.

\begin{restatable}[]{theorem}{publicrevenuedistribution}
%\begin{theorem} 
\label{theorem:public_optimal_revenue_is_a_distribution_over_posted_price_revenues}
For any prior $\A = \D(t)$ with $b > v_{\min}(\A)$, there exists a set of valuations $\{w'_1, w'_2, \ldots, w'_{m'} \} \subseteq \supp(\A)$ and weights $\delta_1, \delta_2, \ldots, \delta_{m'} \in (0,1]$ such that $\sum_{j=1}^{m'} \delta_j = 1$, and the optimal revenue of  $\textsf{PublicContinuous}(\A)$ is $\Tilde{\R}(\A) = \sum_{j=1}^{m'} \big( \delta_j \cdot w'_j \cdot \overline{F}_{\A}(w'_j) \big)$.
%\end{theorem}
\end{restatable}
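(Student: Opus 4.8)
The plan is to characterize the optimal solution of $\textsf{PublicContinuous}(\A)$ via the standard reduction of Bayesian IC to a monotone allocation rule plus the envelope payment identity, and then to exploit convexity of the induced utility/interim-allocation curve to write the optimal revenue as a convex combination of posted-price revenues. First I would recall that any IC mechanism on the interval $[0,v_n]$ is equivalent to a nondecreasing allocation function $x(\cdot)$ together with the payment $p(v) = v\cdot x(v) - \int_0^v x(u)\,du$ forced by (Cont.\ IC) and (Cont.\ IR) (taking the tightest IR at $v=0$, which is optimal for revenue). Since the objective only places weight on $v_i \in \supp(\A)$, I only need to pin down the values $x(v_i)$ and the integral terms; the budget constraint $p(v)\le b$ couples these. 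The claim that $b > v_{\min}(\A)$ will be used to ensure the budget is slack at the relevant prices (cf.\ Lemma~\ref{lemma:public_optimal_auction_for_signals}), so that effectively we are optimizing an unconstrained-in-$b$ problem over monotone $x$, possibly with the budget binding only at high valuations.

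The key step is the convexity argument. Define the buyer utility curve $U(v) = v\cdot x(v) - p(v) = \int_0^v x(u)\,du$; since $x$ is nondecreasing, $U$ is convex, nonnegative, nondecreasing, $1$-Lipschitz, and $U(0)=0$. The seller's revenue is $\sum_i f_{\A}(v_i)\,(v_i x(v_i) - U(v_i))$, i.e.\ a linear functional of the pair $(x, U)$ subject to $x = U'$ (a.e.), $0\le x\le 1$, and the budget constraint $v\cdot x(v) - U(v) \le b$. The feasible region, viewed in the space of convex functions $U$ on $[0,v_n]$ with the stated boundary/Lipschitz conditions, is a convex set, and its extreme points are exactly the convex functions that are piecewise linear with at most one "kink" — equivalently, $U$ is the lower envelope of two lines, which corresponds precisely to a deterministic posted price $w$: $x(v) = \mathbbm{1}[v\ge w]$, $p(v) = w\cdot\mathbbm{1}[v\ge w]$, with revenue $w\cdot\overline{F}_{\A}(w)$. (Here I would either argue this directly — any $U$ with two or more interior kinks can be perturbed in two opposing directions staying feasible — or invoke a Bauer-type maximum principle: a linear objective over a convex compact set is maximized at an extreme point, and the optimum is a convex combination of extreme points when ties occur.) Since the objective is linear, the optimal value is attained at an extreme point, hence at a single posted price; but to get the stated convex-combination form $\Tilde{\R}(\A) = \sum_j \delta_j\, w'_j\,\overline{F}_{\A}(w'_j)$ I would take the optimal $x^\ast$ (which may be a genuine lottery) and decompose it as a mixture of threshold allocations $\mathbbm{1}[v\ge w'_j]$ — a nondecreasing, $[0,1]$-valued step function with breakpoints in $\supp(\A)$ is exactly such a mixture — and note that the revenue functional is linear in this decomposition, so $\Tilde{\R}(\A)$ equals the corresponding convex combination of $w'_j\overline{F}_{\A}(w'_j)$, with all $w'_j \in \supp(\A)$ because $x^\ast$ only changes value at points of positive mass (elsewhere one can flatten $x^\ast$ without affecting the objective and without violating monotonicity or the budget).

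The main obstacle I anticipate is handling the budget constraint cleanly within the convexity/extreme-point framework: $v\cdot x(v) - U(v) \le b$ is not a constraint purely on $U$ in isolation but relates $U$ to its own slope, so I must verify that it carves out a convex set (it does: $v x(v) - U(v) = v U'(v) - U(v)$ is a convex functional of $U$ in a suitable sense, or more safely, just observe that for a posted price $w$ the payment is $w\le v_n$ and we only need $p(v)\le b$ which under $b > v_{\min}(\A)$ is automatic at the optimal threshold — so WLOG the binding budget case only forces the top type to pay exactly $b$, and one handles it by a separate, easy sub-case). A secondary subtlety is the passage from the continuous domain $[0,v_n]$ back to the finite support: I need that flattening $x^\ast$ on intervals containing no mass of $\A$ is without loss, so that the breakpoints $w'_j$ all lie in $\supp(\A)$; this follows because such flattening weakly increases $\int_0^{v_i} x$ (hence weakly decreases $p(v_i)$... wait — carefully, one wants to flatten downward toward the previous support point, which weakly decreases the integral, weakly increasing payments, while preserving monotonicity and feasibility — I would spell this out as the one genuinely fiddly calculation). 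Modulo these bookkeeping points, the proof is: envelope identity $\Rightarrow$ convex feasible set of utility curves $\Rightarrow$ linear objective optimized at an extreme point $\Rightarrow$ decompose the optimal lottery into thresholds $\Rightarrow$ read off the convex combination of posted-price revenues.
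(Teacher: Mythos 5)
Your high-level plan matches the paper's: pass to the Myerson/envelope parametrization by a monotone allocation curve $x(\cdot)$ and its area $\area_x(v)=\int_0^v x$, use convexity of $\area_x$, and read off $\Tilde{\R}(\A)$ as a convex combination of threshold (posted-price) revenues $w'_j\,\overline{F}_\A(w'_j)$. However, there are two genuine gaps in how you fill in the decomposition step, and the paper's Lemmas~\ref{lemma:public_linear_program_piecewise_constant} and~\ref{lemma:public_highest_type_gets_item_deterministically} exist precisely to close them.

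First, the direction you propose for the ``flattening'' step is the wrong one. You suggest flattening $x$ downward toward the previous support point on mass-free intervals, which decreases $\area_x(v_n)$ and therefore \emph{increases} $p(v_n)=v_n x(v_n)-\area_x(v_n)$; when the budget constraint is tight at the top, this breaks feasibility. The paper instead replaces $x$ on each interval $[w_i,w_{i+1})$ by its \emph{average} value, which preserves $\area_x(w_i)$ exactly at every support point (and hence preserves $p(v_n)$ and the budget constraint) while weakly raising $x(w_i)$, hence weakly raising all $p(w_i)$ — this is the content of Lemma~\ref{lemma:public_linear_program_piecewise_constant}.

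Second, you assert that a nondecreasing $[0,1]$-valued step function is ``exactly'' a mixture of thresholds $\mathbbm{1}[v\ge w'_j]$ with weights summing to $1$. That is only true if the step function satisfies $x(0)=0$ and $x(v_n)=1$, since $\sum_j\delta_j = x(v_n)-x(0)$. Neither boundary condition is automatic, and the ``easy sub-case'' you allude to for a binding budget is exactly where this fails. The paper's Lemma~\ref{lemma:public_highest_type_gets_item_deterministically} proves, under the hypothesis $b>v_{\min}(\A)$, that one can modify an optimal solution to have $\hat x_0=0$ and $\hat x_m=1$ without losing feasibility or optimality — and the argument is not trivial: one first shifts the whole allocation up by $1-x^\ast_m$ (this preserves each $p(w_i)$ exactly and hence the budget constraint), and then replaces the initial segment of the area curve by the tangent line through $(w_1,0)$, using convexity to show this weakly lowers areas and weakly raises prices while keeping $p(v_n)\le b$. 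Your Bauer-type extreme-point remark, even if made rigorous, only delivers a \emph{single} optimal posted price, not the claimed decomposition with $\sum_j\delta_j=1$, so you would still need this boundary-normalization argument.
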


%The non-trivial part of the above result is the requirement that the prices lie in $\supp(\A)$. %Note that since our subsequent proof will show \textsf{Public}$(\A)$  and $\textsf{PublicContinuous}(\A)$ are equivalent (and this also follows from~\cite{RubinsteinW,cai2021duality}), Theorem~\ref{theorem:public_optimal_revenue_is_a_distribution_over_posted_price_revenues} holds for \textsf{Public}$(\A)$. Note also that the analogous statement for continuous priors is straightforward; in that setting, the work of~\cite{Chawla11} provides a more detailed characterization that we do not need in our proof.

\subsection{Revenue Preservation in Algorithm~\ref{alg:main_algorithm_public}}
\label{subsec:public_revenue_preservation}
We are now ready to prove the second necessary criterion for buyer optimality: $\R(\Z^\ast_\D) = \R(\D)$, i.e., Algorithm~\ref{alg:main_algorithm_public} minimizes the expected seller revenue through signaling. 
As mentioned above, the key step (Theorem~\ref{theorem:public_revenue_preservation}) is to argue that {\em the  rate of decrease of revenue of $\textsf{PublicContinuous}(\D(t))$ exactly equals the optimal revenue of the signal $\S(t)$.} %This property, when integrated over the entire process, proceeds to show that the set of constructed signals $\Z^{\ast}_{\D}$ gives the seller exactly the optimal revenue in the continuous-constrained $\textsf{PublicContinuous}(\D)$, which is upper bounded by $\R(\D)$. However, by a simple observation that the seller can get at least $\R(\D)$ by ignoring the signals (i.e., implementing the same revenue optimizing auction for $\D$ for each constructed signal), this shows (Theorem~\ref{theorem:public_revenue_preservation}) that {\em the expected revenue is preserved}, i.e., the seller gets exactly $\R(\D)$ from $\Z^\ast_{\D}$ as if no signaling has happened. 
This when combined with Lemma~\ref{lemma:public_efficiency_of_scheme} gives buyer optimality. As a side effect, this will also show the optimal objectives of $\textsf{Public}(\D)$ and $\textsf{PublicContinuous}(\D)$ are identical.

\paragraph{Convexity of Revenue in $\textsf{PublicContinuous}(\D(t))$.} The next lemmas bound the \textit{continuous flow} of revenue being transferred into the signals. 
At any time $t$, let $\Tilde{\R}(t)$ denote the optimal revenue of $\textsf{PublicContinuous}(\D(t))$. %As argued above, we have $\Tilde{\R}(0) \leq \R(\D)$ (since \textsf{PublicContinuous}$(\D(0))$ = \textsf{PublicContinuous}$(\D)$ has more constraints than \textsf{Public}$(\D)$).  Note further that $\Tilde{\R}(1) = 0$. 

Our first step is fairly generic, and shows that $\Tilde{\R}(t)$ is convex\footnote{This is slightly misleading: As we show later, the function $\Tilde{\R}(t)$ is actually linear as long as $\supp(\D(t))$ does not change. The overall function over $t \in [0,1]$ turns out to be piece-wise linear and {\em concave}.}   in any time interval $(t_{h-1},t_{h})$ where $\supp(\D(t))$ (and hence $\S(t)$) does not change for $t \in (t_{h-1},t_{h})$. Let this signal be $\S$ and its corresponding probability vector to be $\Vec{\mathbf{s}} = \{s_i\}$ over $t \in (t_{h-1},t_{h})$. 

Consider the decrease in revenue of any feasible solution $\M \coloneqq (p(\cdot), x(\cdot))$ of \textsf{PublicContinuous}$(\D(t))$. 
The revenue of the solution $\M$ in \textsf{PublicContinuous}$(\D(t))$ is given by $\Tilde{\R}_{\M}(t) \coloneqq \sum\limits_{i=1}^{n} \big(f_i(t) \cdot {p}(v_i)\big).$ Fixing this $\M$, for each $t \in (t_{h-1},t_{h})$, we have:
\begin{align}
\dv{\Tilde{\R}_{\M}(t)}{t} =  \dv{t} \left( \sum_{i=1}^{n} \big(f_i(t) \cdot {p}(v_i) \big) \right) = - \sum_{v_i \in \supp(\D(t))} s_i \cdot {p}(v_i),
\label{eq:public_revenue_decrease_in_residual_prior}
\end{align} 
where the last equality uses Eq. (\ref{eq:public_differential_equation}) and the fact that $s_i = 0$ for any $v_i \notin \supp(\D(t))$. Therefore, for each $\M$, its revenue $\Tilde{\R}_{\M}(t)$ decreases linearly with time. Further, since the constraints of \textsf{PublicContinuous}$(\D(t))$ do not change with time, $\M$ remains feasible at all points in time $t \in [0,1]$ (the duration of the algorithm), and its revenue $\Tilde{\R}_{\M}(t)$ is continuous at all $t \in [0,1]$ since each $f_i(t)$ changes continuously.  Since $\Tilde{\R}(t) = \max_{\M} \Tilde{\R}_{\M}(t)$, we have:

\begin{lemma} \label{lemma:public_revenue_is_convex_for_every_mechanism}
In Algorithm~\ref{alg:main_algorithm_public}, for any interval $(t_{h-1},t_{h})$ where $\supp(\D(t))$ does not change, the function $\Tilde{\R}(t)$ is convex. Further, the function $\Tilde{\R}(t)$ is continuous for all $t \in [0,1]$, that is, the entire duration of the algorithm.
\end{lemma}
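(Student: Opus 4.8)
The plan is to exploit the fact that the feasible region of $\textsf{PublicContinuous}(\A)$ does not depend on $\A$ at all: only the objective coefficients $f_\A(v_i)$ do, and the objective $\sum_{i=1}^n f_\A(v_i)\, p(v_i)$ depends on the decision variables only through the payments at the $n$ support values. So I would fix once and for all the family $\mathcal F$ of \emph{all} feasible $\M \coloneqq (p(\cdot),x(\cdot))$, and for each $\M\in\mathcal F$ keep the notation $\Tilde{\R}_\M(t)=\sum_{i=1}^n f_i(t)\,p(v_i)$ introduced just before the lemma, so that $\Tilde{\R}(t)=\sup_{\M\in\mathcal F}\Tilde{\R}_\M(t)$. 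Two preliminary facts are used repeatedly: (i) $\Vec{\mathbf f}(\cdot)$ solves~(\ref{eq:public_differential_equation}) with the \emph{piecewise-constant} velocity $-\Vec{\mathbf s}(t)$, hence it is (Lipschitz) continuous on all of $[0,1]$, and affine on each $(t_{h-1},t_h)$ with $\Vec{\mathbf f}(t)=\Vec{\mathbf f}(t_{h-1})-(t-t_{h-1})\,\Vec{\mathbf s}$; (ii) the budget constraint forces $0\le p(v_i)\le b$ for every $\M\in\mathcal F$.

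For convexity on a fixed interval $(t_{h-1},t_h)$, I would note that there $\Vec{\mathbf f}(t)$ is affine, hence for every fixed $\M$ the map $t\mapsto\Tilde{\R}_\M(t)$ is affine --- this is exactly the computation in~(\ref{eq:public_revenue_decrease_in_residual_prior}), which already shows its slope is the constant $-\sum_{v_i\in\supp(\D(t))}s_i\,p(v_i)$. A pointwise supremum of affine functions of $t$ is convex, so $\Tilde{\R}(t)$ is convex on $(t_{h-1},t_h)$.

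For continuity on all of $[0,1]$, the only point requiring care is that a pointwise supremum of continuous functions need not be continuous. The fix is that the family $\{\Tilde{\R}_\M\}_{\M\in\mathcal F}$ is \emph{equi-Lipschitz}: using $0\le p(v_i)\le b$ and a Lipschitz constant $L$ for $\Vec{\mathbf f}$ on $[0,1]$ (say in the $\ell_1$ norm, which exists since $\|\Vec{\mathbf s}(t)\|_1=1$),
\[
|\Tilde{\R}_\M(t)-\Tilde{\R}_\M(t')|\;\le\;\sum_{i=1}^n |f_i(t)-f_i(t')|\cdot b\;\le\;b\,L\,|t-t'|,
\]
with $b$ and $L$ independent of $\M$. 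Since $\Tilde{\R}(t)\le b$ is finite, the supremum of a family of $(bL)$-Lipschitz functions is itself $(bL)$-Lipschitz, so $\Tilde{\R}(t)$ is Lipschitz, in particular continuous, on $[0,1]$ --- including across the finitely many breakpoints $t_h$, where $\Vec{\mathbf f}$ is still continuous.

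\paragraph{Main obstacle.} There is essentially no obstacle beyond the equi-Lipschitz observation above; the entire content is that the feasible set is $t$-independent and the payments are uniformly bounded (by $b$). I would remark that the same argument transfers verbatim to the deadline setting of Section~\ref{section:deadlines}, with the uniform bound $p(v_i)\le v_i\le v_n$ coming from the IR and feasibility constraints in place of the budget cap. Attainment of the supremum (not needed for this lemma, but used later in the envelope-theorem step) can be deferred and follows from a routine compactness argument on $\mathcal F$.
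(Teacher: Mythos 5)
Your proposal follows essentially the same approach as the paper: convexity follows because $\Tilde{\R}(t)=\sup_{\M}\Tilde{\R}_{\M}(t)$ is a pointwise supremum of functions that are affine in $t$ on each interval (via Eq.~(\ref{eq:public_revenue_decrease_in_residual_prior})), with the feasible set of $\M$ being $t$-independent. Your explicit equi-Lipschitz argument for continuity (using $0\le p(v_i)\le b$ and Lipschitz continuity of $\Vec{\mathbf{f}}$) is a useful tightening of the paper's briefer claim, which only observes that each individual $\Tilde{\R}_{\M}(t)$ is continuous --- not by itself sufficient for the supremum over an infinite family --- so your version closes a small gap in the exposition.
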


\paragraph{Revenue of Signals.}  We are now ready to prove our main theorem quantifying the rate of decrease of $\Tilde{\R}(t)$, and thus bounding the revenue of the signals. 

\begin{theorem} \label{theorem:public_revenue_preservation}
In any interval $(t_{h-1},t_{h})$ where $\supp(\D(t))$ does not change, the function $\Tilde{\R}(t)$ is linear, and $\dv{\Tilde{\R}(t)}{t} = -\R(\S(t))$, where $\R(\S(t))$ is the optimal revenue of \textsf{Public}$(\S(t))$. Furthermore, at the end of Algorithm~\ref{alg:main_algorithm_public}, it holds that $\R(\Z^{\ast}_{\D}) = \R(\D)$.
\end{theorem}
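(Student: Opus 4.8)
The plan is to first establish the per-interval claim by a ``transplant'' argument between residual priors that share the same support, and then to deduce $\R(\Z^{\ast}_{\D})=\R(\D)$ by telescoping across intervals together with a two-line sandwich. Fix a maximal interval $[t_{h-1},t_h)$ on which $\supp(\D(t))$ is constant, let $\S\coloneqq\S(t_{h-1})$ be the equal-revenue signal used there and $\Vec{\mathbf{s}}$ its density. Integrating Eq.~(\ref{eq:public_differential_equation}) over the interval gives, for every $w\in\supp(\S)$ and all $t,\tau\in[t_{h-1},t_h)$, the elementary identity $\overline{F}_{\D(t)}(w)=\overline{F}_{\D(\tau)}(w)-(t-\tau)\,\overline{F}_{\S}(w)$. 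Since $v_{\min}(\D(t))$ is nondecreasing in $t$, the interval lies entirely in one of two regimes. If $b\le v_{\min}(\S)$, the argument of Lemma~\ref{lemma:public_optimal_auction_for_signals} applied to $\D(t)$ shows that posting price $b$ is optimal, so $\Tilde{\R}(t)=b\sum_{i}f_i(t)=b(1-t)$ is affine with slope $-b=-\R(\S(t))$, and nothing more is needed. So assume $b>v_{\min}(\S)$; then Theorem~\ref{theorem:public_optimal_revenue_is_a_distribution_over_posted_price_revenues} applies at every $\D(t)$ on the interval, and by Lemma~\ref{lemma:public_optimal_auction_for_signals} we have $w\cdot\overline{F}_{\S}(w)=v_{\min}(\S)=\R(\S)$ for every $w\in\supp(\S)$.

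For the core step I would apply Theorem~\ref{theorem:public_optimal_revenue_is_a_distribution_over_posted_price_revenues} to some $\D(\tau)$ on the interval, obtaining an optimal mechanism $\M$ for $\textsf{PublicContinuous}(\D(\tau))$ that is a lottery over posted prices $w'_j\in\supp(\D(\tau))=\supp(\S)$ with weights $\delta_j$ (so $\sum_j\delta_j=1$); its payment schedule $v\mapsto\sum_{j:\,w'_j\le v}\delta_j w'_j$ does not depend on the prior and respects the budget, and, crucially, the feasible region of $\textsf{PublicContinuous}$ is the same for every prior. Hence $\M$ is feasible for $\textsf{PublicContinuous}(\D(t))$ for all $t$ on the interval, with revenue $\sum_j\delta_j w'_j\overline{F}_{\D(t)}(w'_j)\le\Tilde{\R}(t)$ there. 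Using the identity above, $w'_j\overline{F}_{\S}(w'_j)=\R(\S)$, and $\sum_j\delta_j=1$, the left-hand side equals $\Tilde{\R}(\tau)-(t-\tau)\R(\S)$, so $\Tilde{\R}(\tau)-(t-\tau)\R(\S)\le\Tilde{\R}(t)$ for all $t,\tau\in[t_{h-1},t_h)$. Taking $\tau=t_{h-1}$ gives $\Tilde{\R}(t)\ge\Tilde{\R}(t_{h-1})-(t-t_{h-1})\R(\S)$; taking $t=t_{h-1}$ (with $\tau$ arbitrary) gives the reverse inequality $\Tilde{\R}(\tau)\le\Tilde{\R}(t_{h-1})-(\tau-t_{h-1})\R(\S)$. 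Therefore $\Tilde{\R}(t)=\Tilde{\R}(t_{h-1})-(t-t_{h-1})\R(\S(t))$ on $[t_{h-1},t_h)$, and by the continuity of $\Tilde{\R}$ (Lemma~\ref{lemma:public_revenue_is_convex_for_every_mechanism}) also at $t_h$. This is exactly the first assertion; across intervals $\Tilde{\R}$ is piecewise linear and concave, consistent with Lemma~\ref{lemma:public_revenue_is_convex_for_every_mechanism}.

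Then I would sum the per-interval identity over $h=1,\dots,H$; it telescopes (using continuity of $\Tilde{\R}$ at the breakpoints) to $\Tilde{\R}(\D)-\Tilde{\R}(\D(1))=\sum_{h=1}^{H}(t_h-t_{h-1})\R(\S^{\ast}_h)=\sum_{h=1}^{H}\weight^{\ast}_h\R(\S^{\ast}_h)=\R(\Z^{\ast}_{\D})$. Since $\D(1)$ is a point mass at $0$ we have $\Tilde{\R}(\D(1))=0$, so $\R(\Z^{\ast}_{\D})=\Tilde{\R}(\D)$. Finally $\Tilde{\R}(\D)\le\R(\D)\le\R(\Z^{\ast}_{\D})$: the first inequality because $\textsf{PublicContinuous}(\D)$ only adds constraints to $\textsf{Public}(\D)$, the second because the seller can always ignore the signaling scheme. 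Combined with $\R(\Z^{\ast}_{\D})=\Tilde{\R}(\D)$, this forces all three quantities to coincide, giving $\R(\Z^{\ast}_{\D})=\R(\D)$ (and, as a byproduct, $\Tilde{\R}(\D)=\R(\D)$).

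The step I expect to be the crux is the transplant argument, specifically the reliance on the explicit form of the optimal mechanism in Theorem~\ref{theorem:public_optimal_revenue_is_a_distribution_over_posted_price_revenues}. The tempting shortcut ``posting each $w'_j$ is a feasible auction, hence $w'_j\overline{F}(w'_j)\le\Tilde{\R}$'' is false, because the decomposition can involve support points $w'_j>b$ (this already happens with two valuations, where the optimal budget-constrained auction mixes the posted price $v_1$ with a value $v_2>b$). What is true, and what I would emphasize, is that the whole optimal lottery mechanism --- not its individual posted-price components --- is feasible, its payment schedule is budget-respecting, and it depends on the support but not on the probabilities, so the same mechanism may be evaluated on every residual prior on the interval. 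A minor point that still needs care is verifying that $v_{\min}(\D(t))$ is nondecreasing in $t$, which is what lets each interval fall cleanly into one of the two regimes above.
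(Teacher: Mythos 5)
Your proof is correct, and it takes a genuinely different route in the core step. The paper establishes the per-interval linearity by differentiating $\Tilde{\R}_{\M^\ast}(\cdot)$ for an optimal $\M^\ast$ at an arbitrary point of the interval and then invoking the Envelope Theorem (together with Lemma~\ref{lemma:public_revenue_is_convex_for_every_mechanism}, which gives convexity and continuity) to transfer this slope to $\Tilde{\R}$ itself. You instead prove linearity by a self-contained two-sided argument: fix $\tau$, apply Theorem~\ref{theorem:public_optimal_revenue_is_a_distribution_over_posted_price_revenues} at $\D(\tau)$ to obtain an optimal lottery mechanism $\M$ whose feasibility is prior-independent, evaluate its revenue at any other $\D(t)$ via the closed-form flow identity $\overline{F}_{\D(t)}(w)=\overline{F}_{\D(\tau)}(w)-(t-\tau)\overline{F}_{\S}(w)$ and the equal-revenue property of $\S$, and observe that swapping the roles of $t$ and $\tau$ yields matching inequalities. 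This avoids the Envelope Theorem and the convexity lemma entirely, which is arguably cleaner; the paper's route buys a stated Corollary~\ref{corollary:optimal_mechanism_stays_optimal} more directly, but that is not needed for the present theorem. Your explicit warning that individual posted-price components of the decomposition may exceed $b$, so one must transplant the whole lottery rather than its constituents, is a real subtlety that the paper handles implicitly in the same way (by working with $\M^\ast$ as a single mechanism); it is good that you surfaced it. The telescoping and the sandwich $\R(\D)\geq\Tilde{\R}(\D)=\R(\Z^\ast_\D)\geq\R(\D)$ coincide with the paper's closing argument. One small caveat worth keeping in mind: the statement of Theorem~\ref{theorem:public_optimal_revenue_is_a_distribution_over_posted_price_revenues} asserts only that the optimal value has the stated form; your argument (like the paper's) also needs that the lottery itself is a feasible optimal mechanism, which is indeed what the appendix proof constructs, so this is a reliance on the proof rather than only the statement.
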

\begin{proof}
Fix some $t \in (t_{h-1},t_{h})$, and assume $\supp(\D(t))$ does not change in $(t_{h-1},t_{h})$ and thus $\S(t) = \S$. Note that $\supp(\D(t)) = \supp(\S)$. %Similar to above, since $\S(t)$ is the same for all $t \in (t_{h-1},t_{h})$, we denote $\S(t) = \S$ and its corresponding probability vector to be $\Vec{\mathbf{s}} = \{s_i\}$.
First consider the case where $v_{\min}(\S) \geq b$. By Lemma~\ref{lemma:public_optimal_auction_for_signals}, every revenue maximizing auction for $\S$ must have $p_i = b$ for all $v_i \in \supp(\S)$, and thus $\R(\S) = b$; the same proof implies every revenue maximizing auction $\M'$ for \textsf{PublicContinuous}$(\D(t))$ must have $p'_i = b$ for all $v_i \in \supp(\D(t))$. Therefore, we have 
\begin{align*}
    \dv{\Tilde{\R}(t)}{t} = \sum\limits_{v_i \in \supp(\D(t))} \big( \dv{f_i(t)}{t} \cdot b \big) = b \cdot \sum\limits_{v_i \in \supp(\S)} -s_i = -b = -\R(\S) = -\R(\S(t)).
\end{align*}

Next consider the case when $v_{\min}(\S) < b$ and let $\supp(\S)\ = \{w_1, \ldots, w_m\}$. Since $\S$ is an equal revenue distribution, by Lemma~\ref{lemma:public_optimal_auction_for_signals}, we have
\begin{align}
    w_1 \cdot \overline{F}_{\S}(w_1) = \cdots = w_m \cdot \overline{F}_{\S}(w_m) = w_1 = v_{\min}(\S) = \R(\S). 
    \label{eq:public_every_price_in_signal_raises_the_optimal_revenue}
\end{align}

Let $\M^\ast$ be an optimal solution to \textsf{PublicContinuous}$(\D(t))$. Note that $\M^\ast$ is not necessarily unique, and further, can change as $t$ changes. Since $\supp(\S) = \supp(\D(t))$, by Theorem~\ref{theorem:public_optimal_revenue_is_a_distribution_over_posted_price_revenues}, the revenue $\Tilde{\R}(t)$ achieved by $\M^\ast$ in \textsf{PublicContinuous}$(\D(t))$ 
is the revenue of a distribution of posted prices $\{w'_1, \ldots, w'_{m'}\} \subseteq \supp(\S)$ with weights $\delta_1, \ldots, \delta_{m'} \in (0,1]$, where $\sum_{j=1}^{m'} \delta_j = 1$.

At any time $t' \in (t_{h-1},t_{h})$,  the function $\Tilde{\R}_{\M^\ast}(t')$ -- the revenue of $\M^\ast$ over $\D(t')$ -- is linearly decreasing. We now calculate the rate of decrease at time $t'$. %By Eq. (\ref{eq:public_revenue_decrease_in_residual_prior}),
By Theorem~\ref{theorem:public_optimal_revenue_is_a_distribution_over_posted_price_revenues},
\begin{align*}
    \dv{\Tilde{\R}_{\M^\ast}(t')}{t'} &= \dv{t'} \left( \sum\limits_{j=1}^{m'} \big( \delta_j \cdot w'_j \cdot \sum_{i: v_i \geq w'_j} f_i(t') \big) \right)\\ 
    &= \sum\limits_{j=1}^{m'} \left( -\delta_j \big( w'_j \cdot \overline{F}_{\S}(w'_j) \big) \right) \tag*{(by definition that $\dv{\Vec{\mathbf{f}}}{t'} = -\Vec{\mathbf{s}})$}\\
    &= %-\sum\limits_{j=1}^{m'} \big( \delta_j \cdot R_{\S}(w'_j) \big) =
    -\sum\limits_{j=1}^{m'} \big( \delta_j \cdot \R({\S}) \big) = -\R({\S}),
\end{align*}
where the second last equality is by Eq. (\ref{eq:public_every_price_in_signal_raises_the_optimal_revenue}), and the last because $\sum_{j=1}^{m'} \delta_j = 1$.

By Lemma~\ref{lemma:public_revenue_is_convex_for_every_mechanism}, $\Tilde{\R}(t')$ is a maximum of linear revenue functions of bounded range, one for each feasible auction.  Now applying the Envelope Theorem (Theorem~2 in~\cite{Milgrom}), we have that $\Tilde{\R}(t')$  is differentiable, and further, $\dv{\Tilde{\R}(t)}{t} = - \R(\S) =  -\R(\S(t))$. Indeed, $\Tilde{\R}(t')$  must be a linear function for $t' \in (t_{h-1},t_{h})$.

%Therefore, $\Tilde{\R}_{\M^\ast}(t')$ is a linear function for $t' \in (t_{h-1},t_{h})$. By Lemma~\ref{lemma:public_revenue_is_convex_for_every_mechanism}, $\Tilde{\R}(t')$ is a maximum of linear revenue functions, one for each feasible solution, and hence convex. Since the solution $\M^\ast$ achieves this maximum at time $t' = t$, the function $\Tilde{\R}_{\M^\ast}(\cdot)$ is a subgradient of $\Tilde{\R}(\cdot)$ at time $t$ with slope $-\R(\S)$. But the same holds for all $t \in (t_{h-1},t_{h})$ (although the corresponding optimal solution $\M^\ast$ may be different). Therefore, at every point $t \in (t_{h-1},t_{h})$, the function $\Tilde{\R}(t)$ has a subgradient of slope $-\R(\S)$. Since $\Tilde{\R}(t)$ is convex in this interval, it must be a linear function, and hence differentiable. It also implies $\dv{\Tilde{\R}(t)}{t} = - \R(\S) =  -\R(\S(t))$, completing the proof of this case.

Note that $\S(t)$ and thus $-\R(\S(t))$ changes only when some valuation in the residual prior $\D(t)$ is exhausted. This happens finitely many times throughout the process. Hence $\dv{\Tilde{\R}(t)}{t} = -\R(\S(t))$ is a piecewise constant function with finitely many discontinuities, and is thus Riemann integrable. Recall $\Tilde{\R}(0) = \Tilde{\R}(\D)$ and $\Tilde{\R}(1) = 0$. Also recall every signal $\S_h$ in Algorithm~\ref{alg:main_algorithm_public} is associated with weight $\weight_h = t_h - t_{h-1}$. Therefore we have
\begin{align}
    \Tilde{\R}(\D) &= \Tilde{\R}(0) - \Tilde{\R}(1) = \int_{t=0}^{1} \big( -\dv{\Tilde{\R}(t)}{t} \big) \dd{t} = \int_{t=0}^{1} \R(\S(t)) \dd{t} \tag*{}\\
    &= \sum\limits_{h=1}^{H} \int_{t=t_{h-1}}^{t_{h}} \R(\S(t)) \dd{t} = \sum\limits_{h=1}^{H} \big( \weight_h \cdot \R(\S_h) \big) = \R(\Z^{\ast}_{\D}). \label{eq:public_total_revenue_in_signals_is_equal_to_continuous_objective}
\end{align}

We then observe that it is always feasible for the seller to {\em ignore} the signals: Consider any arbitrary optimal auction $\hat{\M} = (\{ \hat{p}_i\},\{ \hat{x}_i\})$ for \textsf{Public}$(\D)$. If the seller implements $\hat{\M}$ as the auction for each signal $\S_h$ created by Algorithm~\ref{alg:main_algorithm_public}, by Bayes plausibility, the resulting revenue is given by
\begin{align*}
    \sum\limits_{h=1}^{H} \Big( \weight_h \cdot \sum\limits_{i=1}^{n} \big(  f_{\S_h}(v_i) \cdot \hat{p}_i \big) \Big) &= \sum\limits_{i=1}^{n} \Big( \sum\limits_{h=1}^{H} \big( \weight_h \cdot f_{\S_h}(v_i) \cdot \hat{p}_i \big) \Big) = \sum\limits_{i=1}^{n} \big( f_{\D}(v_i) \cdot \hat{p}_i \big) = \R(\D). 
\end{align*} 

Since the above is the total revenue raised by some auction $\hat{\M}$ over all signals $\{\S_h\}$, the total revenue $\R(\Z^{\ast}_{\D})$ raised by implementing the optimal auction for each signal $\S_h$ is at least as much. Therefore, we have $ \R(\D) \leq \R(\Z^{\ast}_{\D})$. Combining this with Eq. (\ref{eq:public_total_revenue_in_signals_is_equal_to_continuous_objective}), and observing that \textsf{PublicContinuous}$(\D)$ relaxes \textsf{Public}$(\D)$, we have:
\begin{align*}
    \R(\D) &\geq \Tilde{\R}(\D) = \R(\Z^{\ast}_{\D}) \geq  \R(\D). 
\end{align*}
Hence, all inequalities must be equalities, which proves the theorem.
\end{proof}

In Theorem~\ref{theorem:public_revenue_preservation} above, we have proved that the process in Algorithm~\ref{alg:main_algorithm_public} preserves the seller's expected revenue. By Lemma~\ref{lemma:public_efficiency_of_scheme}, $\Z^{\ast}_{\D}$ also achieves efficiency, and thus maximizes social welfare. Hence $\Z^{\ast}_{\D}$ must maximize the  consumer surplus. This implies that the analog of Theorem~\ref{theorem:bbm} on buyer optimality holds for the public budget case:

\begin{theorem}[Buyer optimality for public budgets] \label{theorem:public_buyer_optimality}
Suppose $k=1$ in \textsf{Budgets}$(\D)$ for some prior $\D$. Then there exists a signaling scheme $\Z^{\ast}_{\D}$ (given by Algorithm~\ref{alg:main_algorithm_public}) that guarantees:\\ (1) $\W(\Z^{\ast}_{\D}) = \W^\ast(\D)$; (2) $\R(\Z^{\ast}_{\D}) = \R(\D)$; and (3) $\CS(\Z^{\ast}_{\D}) = \opt(\D) = \W^\ast(\D) - \R(\D)$.
\end{theorem}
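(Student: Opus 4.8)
The plan is to obtain all three guarantees by assembling the structural results already established for Algorithm~\ref{alg:main_algorithm_public}; no new machinery is needed, since the two substantive arguments — that each signal has an efficient revenue-optimal auction, and that revenue is preserved in aggregate — are precisely Lemma~\ref{lemma:public_efficiency_of_scheme} and Theorem~\ref{theorem:public_revenue_preservation}. First I would record that the output $\Z^{\ast}_{\D} = \{(\weight^{\ast}_h, \S^{\ast}_h)\}_{h \in [H]}$ of Algorithm~\ref{alg:main_algorithm_public} is a legitimate signaling scheme: the weights $\weight^{\ast}_h = t_h - t_{h-1}$ are strictly positive, they sum to $T = 1$ (because $\sum_i s_i(t) = 1$ forces the total residual mass to drain at unit rate from $1$ to $0$), and Bayes plausibility $\sum_h \weight^{\ast}_h \S^{\ast}_h = \D$ is Observation~\ref{observation:public_plausibility_of_signals}. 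Hence the bound $\CS(\Z^{\ast}_{\D}) \le \opt(\D)$ from Section~\ref{subsec:intermediary_model} applies, and it suffices to prove properties (1) and (2), since (3) then follows from the identity $\CS(\Z^{\ast}_{\D}) = \W(\Z^{\ast}_{\D}) - \R(\Z^{\ast}_{\D})$ (Definition~\ref{def:three_key_amounts} summed over signals).

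Property (1), $\W(\Z^{\ast}_{\D}) = \W^\ast(\D)$, is exactly Lemma~\ref{lemma:public_efficiency_of_scheme}: by Lemma~\ref{lemma:public_optimal_auction_for_signals} each signal $\S_h$ admits a revenue-optimal auction that allocates with probability one — posting price $b$ when $b \le v_{\min}(\S_h)$ and posting $v_{\min}(\S_h)$ otherwise, using that $\S_h$ is equal revenue — so the trade always happens; summing welfare over signals with weights $\weight^{\ast}_h$ and invoking Bayes plausibility recovers $\E_{v\sim\D}[v] = \W^\ast(\D)$. Property (2), $\R(\Z^{\ast}_{\D}) = \R(\D)$, is the closing assertion of Theorem~\ref{theorem:public_revenue_preservation}: integrating the pointwise identity $\dv{\Tilde{\R}(t)}{t} = -\R(\S(t))$ over $t \in [0,1]$ yields $\Tilde{\R}(\D) = \R(\Z^{\ast}_{\D})$, and then the sandwich $\R(\D) \ge \Tilde{\R}(\D) = \R(\Z^{\ast}_{\D}) \ge \R(\D)$ — the left inequality because restricting any feasible solution of \textsf{PublicContinuous}$(\D)$ to $\supp(\D)$ is feasible for \textsf{Public}$(\D)$ with the same revenue, and the right because the seller can always ignore the signals and replay a fixed optimal auction for $\D$, collecting $\R(\D)$ by Bayes plausibility — forces equality throughout.

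Finally, property (3) is mechanical bookkeeping: $\CS(\Z^{\ast}_{\D}) = \W(\Z^{\ast}_{\D}) - \R(\Z^{\ast}_{\D}) = \W^\ast(\D) - \R(\D) = \opt(\D)$, and since $\opt(\D)$ upper-bounds the consumer surplus of every signaling scheme, $\Z^{\ast}_{\D}$ is buyer optimal — recreating the guarantee of Theorem~\ref{theorem:bbm}, now in the presence of a (possibly binding) public budget. The only genuinely hard part is already behind us: the Envelope-Theorem step inside Theorem~\ref{theorem:public_revenue_preservation}, which itself rests on the posted-price characterization of Theorem~\ref{theorem:public_optimal_revenue_is_a_distribution_over_posted_price_revenues} and on the equal-revenue structure of the signals. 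Given those, this theorem is a one-line consequence, so I expect no real obstacle in writing it out.
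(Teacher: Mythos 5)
Your proposal is correct and follows essentially the same route as the paper: the paper's own proof is precisely the assembly you describe, citing Lemma~\ref{lemma:public_efficiency_of_scheme} for efficiency (property~1) and Theorem~\ref{theorem:public_revenue_preservation} for revenue preservation (property~2), with property~3 following by subtraction. The added bookkeeping you supply (Bayes plausibility via Observation~\ref{observation:public_plausibility_of_signals}, the sandwich argument inside Theorem~\ref{theorem:public_revenue_preservation}) is exactly what the paper establishes in the cited lemmas, so there is no gap and no divergence in approach.
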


\paragraph{Discussion.}
%By Equation (\ref{eq:public_total_revenue_in_signals_is_equal_to_continuous_objective}), we have $\Tilde{\R}(\D) = \R(\D)$, where the former is the optimal objective of \textsf{PublicContinuous}$(\D)$ and the latter is the optimal objective of \textsf{Public}$(\D)$. Therefore the two optimal revenues are equal. This immediately implies that the characterization in Theorem~\ref{theorem:public_optimal_revenue_is_a_distribution_over_posted_price_revenues} also holds for $\textsf{Public}(\A)$.\footnote{The analogous statement for continuous priors is relatively straightforward. In that setting, the work of~\cite{Chawla11} provides a more detailed characterization involving only two posted prices.} Note that our proof builds on our signaling scheme, and this may be of independent interest.
%\begin{corollary}
%\label{cor:discrete_char1}
%For any prior $\A = \D(t)$ with $b > v_{\min}(\A)$, there exists a set of valuations $\{w'_1, w'_2, \ldots, w'_{m'} \} \subseteq \supp(\A)$ and weights $\delta_1, \delta_2, \ldots, \delta_{m'} \in (0,1]$ such that $\sum_{j=1}^{m'} \delta_j = 1$, and the optimal revenue of  $\textsf{Public}(\A)$ is $\Tilde{\R}(\A) = \sum_{j=1}^{m'} \big( \delta_j \cdot w'_j \cdot \overline{F}_{\A}(w'_j) \big)$.
%The characterization in Theorem~\ref{theorem:public_optimal_revenue_is_a_distribution_over_posted_price_revenues} holds for the optimal auction for $\A$ with discrete IC and IR constraints, that is, for \textsf{Public}$(\A)$.
%\end{corollary}
The nice aspect of our proof approach is twofold. First, we can derive the following corollary showing the existence of a common revenue-optimal auction throughout the process.

\begin{restatable}[]{corollary}{optimalauctionstaysoptimal}
%\begin{corollary} 
\label{corollary:optimal_mechanism_stays_optimal}
Any optimal auction $\M^\ast(t')$ for \textsf{PublicContinuous}$(\D(t'))$ at time $t'$ in Algorithm~\ref{alg:main_algorithm_public} stays revenue optimal throughout the course of the algorithm (i.e., for all $t > t'$). 
%\end{corollary}
\end{restatable}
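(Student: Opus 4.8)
The plan is to leverage the structural fact, established inside the proof of Theorem~\ref{theorem:public_revenue_preservation}, that for every fixed feasible auction $\M = (p(\cdot), x(\cdot))$ of $\textsf{PublicContinuous}(\D(t))$, the revenue $\Tilde{\R}_{\M}(t) = \sum_i f_i(t)\, p(v_i)$ is a linear (hence continuous) function of $t$ whose slope depends only on the current support $\supp(\D(t))$ via the equal-revenue rates $\Vec{\mathbf{s}}(t)$, and that the feasible region of $\textsf{PublicContinuous}(\cdot)$ does not change with $t$ (the IC/IR/feasibility/budget constraints are independent of the prior). So $\M^\ast(t')$ is a legal auction at every later time $t > t'$, and the only question is whether it remains \emph{optimal}.

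First I would reduce, as in the theorem's proof, to the nondegenerate case: if at time $t'$ we have $v_{\min}(\D(t')) \geq b$, then every optimal auction posts price $b$ on all types in the support, and since the support only shrinks as $t$ increases, the condition $v_{\min}(\D(t)) \geq b$ persists and the same ``post $b$'' auction stays optimal trivially. So assume $v_{\min}(\D(t')) < b$. Next I would combine two ingredients already in hand: (i) by Theorem~\ref{theorem:public_optimal_revenue_is_a_distribution_over_posted_price_revenues}, the optimal revenue of $\textsf{PublicContinuous}(\D(t'))$ is a convex combination $\sum_j \delta_j\, w'_j\, \overline{F}_{\D(t')}(w'_j)$ of posted-price revenues with $\{w'_j\} \subseteq \supp(\D(t'))$; and (ii) within the interval $(t_{h-1}, t_h)$ containing $t'$ where the support is constant, $\S(t)=\S$ is equal revenue, so Eq.~(\ref{eq:public_every_price_in_signal_raises_the_optimal_revenue}) gives $w'_j \cdot \overline{F}_{\S}(w'_j) = \R(\S)$ for each such price. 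Differentiating $\Tilde{\R}_{\M^\ast(t')}(t)$ exactly as in the theorem's proof, the rate of decrease of the revenue of the \emph{fixed} auction $\M^\ast(t')$ is $-\R(\S(t))$ for all $t$ in that interval — which is precisely $\dv{\Tilde{\R}(t)}{t}$, the rate of decrease of the \emph{optimal} revenue, established in Theorem~\ref{theorem:public_revenue_preservation}. Since $\M^\ast(t')$ is optimal at $t'$ and its revenue and the optimal revenue decrease at the same rate on $(t_{h-1}, t_h)$, and both are continuous at the endpoints, $\M^\ast(t')$ remains optimal on all of $[t', t_h]$.

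To push past $t_h$, where the support changes, I would argue by induction on the (finitely many) support-change times. The subtlety is that when a valuation $v_i$ is exhausted at $t_h$, the equal-revenue distribution on the new smaller support is different, so I need that $\M^\ast(t')$ — still feasible — continues to achieve the optimal revenue of $\textsf{PublicContinuous}(\D(t_h))$. This follows because at $t_h$, $\M^\ast(t')$ attains the optimal value (by the previous paragraph and continuity at $t_h$), so it is an optimal auction for $\D(t_h)$; then the same differentiation argument applied on the next interval $(t_h, t_{h+1})$ with the new equal-revenue signal shows its revenue decreases at the new optimal rate $-\R(\S(t))$, keeping it optimal on $[t_h, t_{h+1}]$, and so on up to time $T = 1$. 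The main obstacle is precisely this hand-off at support-change times: one must be careful that optimality is preserved in the limit $t \to t_h^-$ (which uses continuity of $\Tilde{\R}_{\M^\ast(t')}$ and of $\Tilde{\R}$, both already noted) and that the derivative computation on the new interval genuinely uses the fact that the prices $w'_j$ in the decomposition of Theorem~\ref{theorem:public_optimal_revenue_is_a_distribution_over_posted_price_revenues} for the fixed auction $\M^\ast(t')$ lie in the new support — but since $\M^\ast(t')$'s revenue on $\D(t_h)$ equals the optimal revenue, which by the theorem is a combination of posted prices in $\supp(\D(t_h))$, this is automatic. Everything else is a routine repetition of the envelope/linearity bookkeeping from Theorem~\ref{theorem:public_revenue_preservation}.
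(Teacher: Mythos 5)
Your proposal is correct and follows essentially the same route as the paper: show that within each interval of constant support the revenue of the fixed auction $\M^\ast(t')$ decreases at the same rate $-\R(\S(t))$ as the optimal revenue $\Tilde{\R}(t)$, hand off at support-change times by continuity, and induct over the finitely many intervals. The only wrinkle is your justification that the posted-price decomposition from Theorem~\ref{theorem:public_optimal_revenue_is_a_distribution_over_posted_price_revenues} applies ``automatically'' to $\M^\ast(t')$ past $t_h$: that decomposition describes a \emph{specific} constructed optimizer's payment rule, not an arbitrary optimizer's, and ``$\M^\ast(t')$'s value matches the optimum at $t_h$'' does not by itself transfer the formula. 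The rigorous step --- and the one the paper actually relies on --- is the Envelope Theorem applied from the right at $t_h$, which equates the one-sided derivative of $\Tilde{\R}$ with that of $\Tilde{\R}_{\M}$ for \emph{any} optimizer $\M$ of $\D(t_h)$; combined with linearity of both functions on $(t_h,t_{h+1})$, this yields equality on the whole next interval. You do gesture at this with ``envelope/linearity bookkeeping,'' so this is a matter of precision rather than a missing idea.
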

\begin{proof}
Recall that we have $\Tilde{\R}(\D) = \R(\D)$, where the former is the optimal objective of \textsf{PublicContinuous}$(\D)$ and the latter is the optimal objective of \textsf{Public}$(\D)$. 

Suppose $\M^\ast(t')$ is optimal in \textsf{PublicContinuous}$(\D(t'))$ at $t = t'$ corresponding to some equal revenue signal $\S(t')$. Then by the above, $\M^\ast(t')$ is also optimal in \textsf{Public}$(\D(t'))$. Also by Theorem~\ref{theorem:public_revenue_preservation} we have $\dv{\Tilde{\R}(t')}{t} = \dv{\Tilde{\R}_{\M^\ast(t')}(t)}{t} = -\R(\S(t'))$ (recall $\Tilde{\R}_{\M^\ast(t')}(t)$ denotes the objective achieved by $\M^\ast(t')$ in \textsf{PublicContinuous}$(\D(t))$). The above means $\M^\ast(t')$ remains optimal (both w.r.t. \textsf{PublicContinuous}$(\D(t))$ and \textsf{Public}$(\D(t))$) till the next time $\S(t)$ changes (say at $t = t''$). Since the revenue of any auction is continuous in $t$, this implies $\M^\ast(t')$ is still optimal at $t = t''$. Take $\M^\ast(t'') = \M^\ast(t')$. Repeating this argument shows $\M^\ast(t')$ remains optimal for all $t > t'$.
\end{proof}

Secondly, our approach is extensible to more complex settings in the following sense: Theorem~\ref{theorem:public_optimal_revenue_is_a_distribution_over_posted_price_revenues} yields a characterization of the revenue optimal auction in the  specific case of public budgets. We use this to prove the first claim in Theorem~\ref{theorem:public_revenue_preservation}, that the rate of decrease of revenue of the optimal auction is equal to the revenue of the signal constructed. The rest of the proof is generic in that it invokes the Envelope Theorem on $\tilde{\R}(t)$.  Our proof for the FedEx case simply reuses the generic portion, along with a specialized characterization of the optimal auction there.

\section{Main Result: Signaling Scheme with Private Deadlines}
\label{section:deadlines}

We now focus on the case with private valuation-deadline pairs $(v,d) \sim \D$, where $\D_j$ denotes the marginal distribution of $v$ given $d = j$. This is the so-called FedEx problem~\cite{Fiat16}. The non-trivial aspect now is the construction of the signals themselves. We first generalize the notion of equal revenue signals in Section~\ref{subsec:signals_private}, and outline the corresponding signaling algorithm (Algorithm~\ref{alg:main_algorithm_deadlines}). We then proceed to show that Algorithm~\ref{alg:main_algorithm_deadlines} is buyer optimal, using the same plan of attack in Section~\ref{section:public_continuous}. 

%We present an example in Appendix~\ref{subsec:deadlines_running_example_full} to illustrate the signaling scheme.

We begin with some notation. For an arbitrary prior $\A$ with $\supp(\A) \subseteq \supp(\D) = \{ v_1, \ldots, v_n\} \times \{1, \ldots, k\}$, we denote by $\A_j$ the conditional distribution of $\A$ given $d = j$. Further, let $f_{\A_j}(v_i) \coloneqq \Pr_{(v,d) \sim \A}[v = v_i \mid d = j]$, and $\underline{F}_{\A_j}(v_i) \coloneqq \Pr_{(v,d) \sim \A}[v \leq v_i \mid d = j]$. Further, define $\values(\A) \coloneqq \{ v_i > 0 \mid \sum_{j=1}^{k} f_{\A_j}(v_i) > 0\}$ as the set of values with non-zero support in $\A$, and similarly, $\values(\A_j) \coloneqq \{ v_i > 0 \mid f_{\A_j}(v_i) > 0 \}$. Finally, let $v_{\min}(\A) \coloneqq \min \{ v_i > 0 \mid \sum_{j=1}^{k} f_{\A_j}(v_i) > 0 \}$, and $v_{\min}(\A_j) \coloneqq \min \{ v_i > 0 \mid f_{\A_j}(v_i) > 0 \}$.

\subsection{Generalized Equal Revenue Signals} 
\label{subsec:signals_private}
One natural approach to designing a signaling scheme is to apply the algorithm in~\cite{Bergemann15} to the marginal $\D_j$ induced by each deadline separately. However, consider the following example:

\begin{example} \label{example:deadlines_running_example_inputs}
\sloppy Let the initial prior $\D$ be supported on $(v,d) = \{(2,1), (3,1), (1,2), (2,2), (3,4), (4,3)\}$ with uniform probability, as shown in Table~\ref{tab:deadlines_running_example}. Accordingly, the maximum social welfare $\W^\ast_{\D}$ is given by $\frac{2+3+1+2+4+3}{6} = \frac{5}{2}$. The revenue maximizing auction posts a fixed price of $2$ to all types. This auction raises an expected revenue of $\R^\ast_{\D} = 2 \cdot \frac{5}{6} = \frac{5}{3}$ and thus an expected consumer surplus of $\frac{5}{2} - \frac{5}{3} = \frac{5}{6}$. 

On this example, suppose we run the algorithm in~\cite{Bergemann15} separately for each deadline. Then, this results in a price of $2$ when $d=1$, a price of $1$ when $d = 2$, and prices of $4$ and $3$ respectively when $d=3$ and $d=4$. It is easy to check that this raises consumer surplus $\frac{1}{3}$, which is much smaller than the optimal surplus of $\frac{5}{6}$.
\end{example}

\begin{table}[htbp]
\centering
\begin{tabular}{|l|l|l|l|l|}
\hline
$\Pr[\D = (v,d)]$ & $v=1$ & $v=2$ & $v=3$ & $v=4$ \\ \hline
$d=1$  & $0$   & $1/6$ & $1/6$ & $0$   \\ \hline
$d=2$  & $1/6$ & $1/6$ & $0$   & $0$   \\ \hline
$d=3$  & $0$   & $0$   & $0$ & $1/6$ \\ \hline
$d=4$  & $0$   & $0$   & $1/6$ & $0$ \\ \hline
\end{tabular}
\caption{\label{tab:deadlines_running_example} Probability values $\Pr_{(v,d) \sim \D}[(v,d) = (\cdot, \cdot)]$ in the running example.}
\end{table}

The main reason that the scheme that separately develops signals for each deadline does not raise optimal consumer surplus is because it reveals the deadline of the buyer, which provides the seller with too much information. We therefore need a different and novel signaling scheme that can ``blur" the deadline information in addition to the value information. Our key idea is to define signals that continuously pull mass from {\em all marginals} $\D_j$ at once, albeit in an equal revenue fashion. 

\begin{definition}[Lower Envelope] \label{def:lower_envelope}
Given  prior $\A$ with $\supp(\A) \subseteq \supp(\D)$, for all $j \in [1,k]$, let
\[ \hat{i}_j \coloneqq \max\{i : \underline{F}_{\A_{j'}}(v_i) = 0 \quad \forall j' \in [j,k] \}\]
denote the largest $i$ such that no buyer with valuation at most $v_i$ and deadline at least $j$ exists in $\A$. Note that $\hat{i}_j = 0$ if $f_{\A_{j'}}(v_1) > 0$ for some $j' \in [j,k]$. Let $\hat{i}_{k+1} = n$. The lower envelope of $\A$ is defined as
\[ \textsf{LE}(\A) \coloneqq \{ (v_i, j) \mid \big( f_{\A_j}(v_i) > 0 \big) \land \big( \hat{i}_j < i \leq \hat{i}_{j+1} \big) \}. \]
We say a value-deadline pair $(v_i, j)$ is \emph{on the lower envelope of $\A$} if $(v_i, j) \in \textsf{LE}(\A)$. Two pairs $(v_a, r), (v_b, r') \in \textsf{LE}(\A)$ where $a < b$ and $r' \geq r$ are \emph{consecutive points on $\textsf{LE}(\A)$} if there is no $i \in (a,b)$ and some $j$ such that $(v_i, j)$ is on the lower envelope of $\A$.
\end{definition}

An immediate observation is that for any $\A$, its lower envelope $\textsf{LE}(\A)$ does not contain two different valuation-deadline pairs with the same valuation:

\begin{observation} \label{observation:deadlines_le_is_a_function}
For any prior $\A$, if $(v_i, j), (v_i, j') \in \textsf{LE}(\A)$ for some $i$, then $j = j'$.
\end{observation}

\begin{definition}[\textbf{E}qual revenue \textbf{L}ower \textbf{E}nvelope $\S^{ELE}_{\A}$] \label{def:ele}

For an arbitrary prior $\A$ with $\supp(\A) \subseteq \supp(\D)$, let $\textsf{LE}(\A)$ be supported on $\{ (v'_1, j'_1), (v'_2, j'_2), \ldots, (v'_m, j'_m)\}$, where $0 < v_{\min}(\A) = v'_1 < v'_2 < \cdots < v'_m \leq v_n$.\footnote{Note that all $v'_i$'s are unique by Observation~\ref{observation:deadlines_le_is_a_function}.} We define the \emph{Equal Revenue Lower Envelope} signal $\S^{ELE}_{\A}$ for $\A$ to be the equal revenue distribution over $\{(v'_i, j'_i)\}$, i.e.,

\[ \Pr_{(v,d) \sim \S^{ELE}_{\A}}[v \geq v'_1] \cdot v'_1 = \Pr_{(v,d) \sim \S^{ELE}_{\A}}[v \geq v'_2] \cdot v'_2 = \cdots = \Pr_{(v,d) \sim \S^{ELE}_{\A}}[v \geq v'_m] \cdot v'_m = v'_1 = v_{\min}(\A). \]
\end{definition}

In other words, when disregarding the deadlines (and thus treating $\S^{ELE}_{\A}$ as a distribution of $v$), every valuation with nonzero probability mass in its support is an optimal monopoly price. Analogous to Definition~\ref{definition:equi_revenue_distribution}, this distribution is unique given $\textsf{LE}(\A)$.  

\begin{example}
Let $\A = \D$ be the input prior $\D$ in Example~\ref{example:deadlines_running_example_inputs}. Then the lower envelope \textsf{LE}$(\D)$ is given by $\{(1,2), (2,2), (3,4)\}$, and the corresponding equal revenue lower envelope signal $\S^{ELE}_{\D}$ has $\Pr[(1,2)] = 1/2$, $\Pr[(2,2)] = 1/6$, and $\Pr[(3,4)] = 1/3$. We have
\[ \Pr_{(v,d) \sim \S^{ELE}_{\D}}[v \geq 1] \cdot 1 = \Pr_{(v,d) \sim \S^{ELE}_{\D}}[v \geq 2] \cdot 2 = \Pr_{(v,d) \sim \S^{ELE}_{\D}}[v \geq 3] \cdot 3 = 1. \] 
\end{example}

We have the following observation:

\begin{observation} \label{observation:ele_signal_property}
For an arbitrary prior $\A$ with $\supp(\A) \subseteq \supp(\D)$, for any $i,j$ such that $(v_i, j) \in \supp(\S^{ELE}_{\A})$, it holds that $\underline{F}_{(\S^{ELE}_{\A})_{j'}}(v_i) = 0$ for all $j < j' \leq k$.
\end{observation}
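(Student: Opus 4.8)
The plan is to unwind Definition~\ref{def:lower_envelope}, translate the statement about the marginals of $\S^{ELE}_{\A}$ into a purely combinatorial claim about which value-deadline pairs lie on $\textsf{LE}(\A)$, and then argue by contradiction using the monotonicity of the thresholds $\hat{i}_j$ in $j$.

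First I would record that $\supp(\S^{ELE}_{\A}) = \textsf{LE}(\A)$. By Definition~\ref{def:ele}, $\S^{ELE}_{\A}$ is the equal revenue distribution of Definition~\ref{definition:equi_revenue_distribution} built on the set $\{(v'_1, j'_1), \ldots, (v'_m, j'_m)\}$ supporting $\textsf{LE}(\A)$, with $v'_1 < v'_2 < \cdots < v'_m$ (these are distinct by Observation~\ref{observation:deadlines_le_is_a_function}). The closed-form mass function in Definition~\ref{definition:equi_revenue_distribution} assigns strictly positive probability to every such point, since each factor $1 - v'_\ell/v'_{\ell+1}$ is positive (strict monotonicity of the $v'_\ell$) and the tail probabilities stay positive by induction. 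Hence $\supp(\S^{ELE}_{\A})$ is exactly $\textsf{LE}(\A)$, and for a fixed $j' \in (j,k]$ the condition $\underline{F}_{(\S^{ELE}_{\A})_{j'}}(v_i) = 0$ is equivalent to: no pair $(v_a, j')$ with $v_a \le v_i$ (equivalently $a \le i$) belongs to $\textsf{LE}(\A)$.

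Next, I would suppose toward a contradiction that such a pair $(v_a, j') \in \textsf{LE}(\A)$ exists, with $j' > j$ and $a \le i$. Since $(v_i, j), (v_a, j') \in \textsf{LE}(\A)$ and $j' \neq j$, Observation~\ref{observation:deadlines_le_is_a_function} rules out $a = i$, so $a < i$. Now I would invoke the defining interval conditions of the lower envelope twice: membership of $(v_a, j')$ gives $\hat{i}_{j'} < a$, while membership of $(v_i, j)$ gives $i \le \hat{i}_{j+1}$. The one structural fact I would prove inline is that $j \mapsto \hat{i}_j$ is non-decreasing: for $j_1 \le j_2$ we have $[j_2,k] \subseteq [j_1,k]$, so the predicate ``$\underline{F}_{\A_{j''}}(v_\cdot) = 0$ for all $j'' \in [j,k]$'' becomes weaker as $j$ grows, whence the largest index satisfying it can only increase. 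Applying this with $j+1 \le j'$ yields $\hat{i}_{j'} \ge \hat{i}_{j+1} \ge i$, so $a > \hat{i}_{j'} \ge i$, contradicting $a < i$. This closes the argument.

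I do not anticipate a real obstacle: the proof is essentially a direct unpacking of Definition~\ref{def:lower_envelope}. The only two points requiring care are establishing that $\supp(\S^{ELE}_{\A})$ coincides with $\textsf{LE}(\A)$ (so the marginal statement can be read off the lower envelope) and keeping the direction of the monotonicity of $\hat{i}_j$ straight; both are routine.
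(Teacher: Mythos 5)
Your proof is correct, and it fills in a step the paper leaves unproved: the paper states Observation~\ref{observation:ele_signal_property} without any argument, treating it as immediate from Definitions~\ref{def:lower_envelope} and~\ref{def:ele}. You correctly identify the two things that actually need verifying — that $\supp(\S^{ELE}_{\A}) = \textsf{LE}(\A)$ (so the marginal statement reduces to a combinatorial claim about the lower envelope), and the relevant geometry of the threshold indices $\hat{i}_j$ — and your chain $a > \hat{i}_{j'} \ge \hat{i}_{j+1} \ge i$ is a valid contradiction.

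One remark: the argument can be made more direct, dispensing with both the monotonicity of $\hat{i}_j$ and Observation~\ref{observation:deadlines_le_is_a_function}. Since $(v_i,j) \in \textsf{LE}(\A)$, Definition~\ref{def:lower_envelope} gives $i \le \hat{i}_{j+1}$, and the \emph{definition} of $\hat{i}_{j+1}$ already says $\underline{F}_{\A_{j'}}(v_i) = 0$ for every $j' \in [j+1,k]$, i.e.\ $\A$ places no mass on any $(v_a, j')$ with $a \le i$ and $j' > j$. Because $\supp(\S^{ELE}_{\A}) = \textsf{LE}(\A) \subseteq \supp(\A)$, the same is true of $\S^{ELE}_{\A}$, which is exactly $\underline{F}_{(\S^{ELE}_{\A})_{j'}}(v_i) = 0$. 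This reads the conclusion off the definition of $\hat{i}_{j+1}$ in one step rather than reintroducing it via a second lower-envelope membership and a monotonicity lemma; both routes are sound, but the direct one is closer to why the paper felt no proof was needed.
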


\paragraph{Signaling Algorithm.} 
For any time $t \in [0,1]$, we now let $\mathbf{F}(t) = [f_{ij}(t)]$ be an $n \times k$ matrix function representing the residual prior, where $f_{ij}(t)$ represents the remaining probability mass on type $(v_i, j)$ at time $t$. Similar to the public budget case, let $\D(t)$ denote the probability distribution obtained by placing the remaining probability mass $1 - \sum_{j=1}^{k} \sum_{i=1}^{n} f_{ij}(t)$ at $(v, d) = (0, 0)$. We omit considering $(0,0)$ as part of the support of $\D(t)$. Therefore we define $\supp(\D(t)) \coloneqq \{ (v_i, j) \mid v_i > 0, \, f_{ij}(t) > 0\}$. For each deadline $d = j$, we denote the marginal distribution of $\D(t)$ as $\D_j(t)$. We therefore have: $\values(\D(t)) \coloneqq \{ v_i > 0 \mid \sum_j f_{ij}(t) > 0\}$, $\values(\D_j(t)) \coloneqq \{ v_i > 0 \mid f_{ij}(t) > 0 \}$, and $v_{\min}(\D_j(t)) \coloneqq \min \{ v_i > 0 \mid f_{ij}(t) > 0 \}$.

We now start with the prior distribution $\D(0) = \D$ and let $f_{ij}(0) = \Pr[\D = (v_i, j)]$ for every $i,j$. Our algorithm continuously takes away probability mass from $\mathbf{F}(t)$ and transfers it to the constructed signals, terminating when $\mathbf{F}(t)$ becomes $\mathbf{0}$ at time $t = T$. 
At any time $t \in [0,T)$, denote $\S(t) = \S^{ELE}_{\D(t)}$ over $\textsf{LE}(\D(t))$ (Definition~\ref{def:ele}), $s_{ij}(t) = \Pr_{(v,d) \sim \S^{ELE}_{\D(t)}}[(v,d) = (v_i, j)]$, and $\mathbf{S}(t) \coloneqq [s_{ij}(t)]$.  %$\mathbf{S}(t)$ as the probability distribution associated with the equal-revenue lower envelope distribution :
%\begin{align*}
%    \mathbf{S}(t) \coloneqq [s_{ij}(t)], \quad &s_{ij}(t) = \Pr_{(v,d) \sim \S^{ELE}_{\D(t)}}[(v,d) = (v_i, j)].
%\end{align*}
As $\mathbf{S}(t)$ depends on $\values(\D(t))$ but not $\{f_{ij}(t)\}$, it is fixed as long as $\values(\D(t))$ does not change. Our algorithm continuously reduces $\mathbf{F}(t)$ at rate $\mathbf{S}(t)$ until $\mathbf{F}(t)$ becomes $\mathbf{0}$:

\begin{align}
    \dv{\mathbf{F}}{t} = -\mathbf{S}(t). \label{eq:deadlines_differential_equation}
\end{align}

Since $\sum_{i,j} s_{ij}(t) = 1$, the rate of decrease of $\sum_{i,j} f_{ij}(t)$ is $1$. Since $\sum_{i,j} f_{ij}(0) = 1$, we have $T = 1$.

\paragraph{Signals constructed.} We say the type $(v_i, j)$ is exhausted at time $t$ if $f_{ij}(t) = 0$ but $f_{ij}(t') > 0$ for all $t' < t$. Therefore, $\values(\D(t))$ changes only when some type is exhausted. For each maximal time interval $t \in [t_1, t_2)$ in which $\values(\D(t))$ remains fixed, the final scheme includes a corresponding signal $\S$ with weight $(t_2 - t_1)$ so that $\S(t) = \S$ for $t \in [t_1, t_2)$. Since $\S(t)$ changes only if some element in $\mathbf{F}(t)$ becomes zero, the number of signals constructed is finite. 

The overall signaling scheme is described in Algorithm~\ref{alg:main_algorithm_deadlines}.  %In Appendix~\ref{apdx:full_run}, we present the execution of Algorithm~\ref{alg:main_algorithm_deadlines} on the running example.

\begin{algorithm}[htbp]  
    \caption{Continuous Algorithm for Deadlines Setting}
    \label{alg:main_algorithm_deadlines}
    \begin{algorithmic}[1]
        \Require $\D$
        \Ensure $\Z = \Z^{\ast}_{\D}$
        \State $t_0 \gets 0$; \ $\D(t_0) \gets \D$; \ $\mathbf{F}(0) \gets [\Pr_{(v,d) \sim \D}[(v,d) = (v_i, j)]]$; 
%        \State Compute $\Vec{\mathbf{f}}(t)$, $\S(t)$, and $\Vec{\mathbf{s}}(t)$ for all $t \in [0,1)$ (as defined in Equation~\ref{eq:public_differential_equation})
%        \State Compute  $ 0< t_1 < \cdots < t_H = 1$ so that some type is exhausted at each $t = t_h$; let $t_0 = 0$
        \For {$h \in \{1, \ldots, H\}$}
            \State $t \gets t_{h-1}$
            \State $\S(t_{h-1}) \gets$ Equal revenue distribution on $\supp(\D(t_{h-1}))$ %with density $\mathbf{F}(t_{h-1})$
            \State Run Equation~(\ref{eq:deadlines_differential_equation}) using $\mathbf{S}(t)$ as density of $\S(t_{h-1})$ till some type's support in $\mathbf{F}(t)$ is exhausted at time $t = t_h$
            \State $\D(t_h) \gets$ distribution induced by $\mathbf{F}(t_h)$
            \State $\weight^{\ast}_h \gets t_{h} - t_{h-1}$; \  $\S^{\ast}_h \gets \S(t_{h-1})$; \ $\Z \gets \Z \cup \{(\weight^{\ast}_h, \S^{\ast}_h)\}$
        \EndFor
        \Return $\Z$

 %       \State $\mathbf{F}(0) \gets [\Pr_{(v,d) \sim \D}[(v,d) = (v_i, j)]]$
 %       \State Compute $\mathbf{F}(t), \S(t), \mathbf{S}(t)$ for all $t \in [0,T)$
 %       \State Compute $0 < t_1 < \cdots < t_H = 1$ so that some type $(v_i,j)$ is exhausted at each $t = t_h$; let $t_0 = 0$
 %       \For {$h \in \{1, \ldots, H\}$}
 %           \State $\weight^\ast_h \gets t_{h} - t_{h-1}$; \ $\S^\ast_h \gets \S(t_{h-1})$; \  $\Z \gets \Z \cup \{\weight^\ast_h, \S^\ast_h\}$
 %       \EndFor
 %       \Return $\Z$
    \end{algorithmic}
\end{algorithm}

\begin{example}
Table~\ref{tab:running_example_full_signals} presents the execution of Algorithm~\ref{alg:main_algorithm_deadlines} on the instance in Example~\ref{example:deadlines_running_example_inputs}. Consider the resulting signals $\{S_1, \ldots, \S_6\}$. As all signals are of the lower-envelope fashion, the signaling scheme suggests $v_{\min}(\S_i)$ as the posted price for $\S_i$, which is $\{1,2,2,2,2,2\}$, respectively. Accordingly, the expected revenue raised by the seller under this signaling scheme is given by $\sum_i v_{\min}(\S_i) \cdot \weight_i = 1 \cdot \frac{24}{72} + 2 \cdot \frac{6+12+12+15+3}{72} = \frac{1}{3} + \frac{4}{3} = \frac{5}{3} = \R^\ast_{\D}$. The item always sells, and the maximum expected consumer surplus $\W^\ast_{\D} - \R^\ast_{\D} = \frac{5}{2} - \frac{5}{3} = \frac{5}{6}$ is achieved. Therefore,  Algorithm \ref{alg:main_algorithm_deadlines} is buyer optimal for $\D$.
\end{example}

\begin{table}[htbp]
\centering
\addtolength{\tabcolsep}{-1pt}
\begin{subtable}[c]{0.4\textwidth}
\centering
\begin{tabular}{|l|l|l|l|l|}
\hline
$\Pr[\D]$ & $v=1$ & $v=2$ & $v=3$ & $v=4$ \\ \hline
$d=1$  & -   & $12/72$ & $12/72$ & -   \\ \hline
$d=2$  & $12/72$ & $12/72$ & -   & -   \\ \hline
$d=3$  & -   & -   & - & $12/72$ \\ \hline
$d=4$  & -   & -   & $12/72$ & - \\ \hline
\end{tabular}
\subcaption{\footnotesize The initial prior $\D(0) = \D$ at $t_0 = 0$.}
\end{subtable}
\begin{subtable}[c]{0.4\textwidth}
\centering
\begin{tabular}{|l|l|l|l|l|}
\hline
$\Pr[\S]$ & $v=1$ & $v=2$ & $v=3$ & $v=4$ \\ \hline
$d=1$  & -   & - & - & -   \\ \hline
$d=2$  & $12/72$ & $4/72$ & -   & -   \\ \hline
$d=3$  & -   & -   & - & - \\ \hline
$d=4$  & -   & -   & $8/72$ & - \\ \hline
\end{tabular}
\subcaption{\footnotesize \label{tab:running_example_full_signal_1} Signal $\S_1 = \S(0)$ multiplied by $\weight_1 = \frac{24}{72}$.}
\end{subtable}

\vspace{5pt}

\begin{subtable}[c]{0.4\textwidth}
\centering
\begin{tabular}{|l|l|l|l|l|}
\hline
$\Pr[\D]$ & $v=1$ & $v=2$ & $v=3$ & $v=4$ \\ \hline
$d=1$  & -   & $12/72$ & $12/72$ & -   \\ \hline
$d=2$  & -   & $8/72$ & -   & -   \\ \hline
$d=3$  & -   & -   & - & $12/72$ \\ \hline
$d=4$  & -   & -   & $4/72$ & - \\ \hline
\end{tabular}
\subcaption{\footnotesize The residual prior $\D(\frac{24}{72})$ at $t_1 = \frac{24}{72}$.}
\end{subtable}
\begin{subtable}[c]{0.4\textwidth}
\centering
\begin{tabular}{|l|l|l|l|l|}
\hline
$\Pr[\S]$ & $v=1$ & $v=2$ & $v=3$ & $v=4$ \\ \hline
$d=1$  & -   & - & - & -   \\ \hline
$d=2$  & - & $2/72$ & -   & -   \\ \hline
$d=3$  & -   & -   & - & - \\ \hline
$d=4$  & -   & -   & $4/72$ & - \\ \hline
\end{tabular}
\subcaption{\footnotesize \label{tab:running_example_full_signal_2} Signal $\S_2 = \S(\frac{24}{72})$ multiplied by $\weight_2 = \frac{6}{72}$.}
\end{subtable}

\vspace{5pt}

\begin{subtable}[c]{0.4\textwidth}
\centering
\begin{tabular}{|l|l|l|l|l|}
\hline
$\Pr[\D]$ & $v=1$ & $v=2$ & $v=3$ & $v=4$ \\ \hline
$d=1$  & -   & $12/72$ & $12/72$ & -   \\ \hline
$d=2$  & -   & $6/72$ & -   & -   \\ \hline
$d=3$  & -   & -   & - & $12/72$ \\ \hline
$d=4$  & -   & -   & - & - \\ \hline
\end{tabular}
\subcaption{\footnotesize The residual prior $\D(\frac{30}{72})$ at $t_2 = \frac{30}{72}$.}
\end{subtable}
\begin{subtable}[c]{0.4\textwidth}
\centering
\begin{tabular}{|l|l|l|l|l|}
\hline
$\Pr[\S]$ & $v=1$ & $v=2$ & $v=3$ & $v=4$ \\ \hline
$d=1$  & -   & - & - & -   \\ \hline
$d=2$  & - & $6/72$ & -   & -   \\ \hline
$d=3$  & -   & -   & - & $6/72$ \\ \hline
$d=4$  & -   & -   & - & - \\ \hline
\end{tabular}
\subcaption{\footnotesize \label{tab:running_example_full_signal_3} Signal $\S_3 = \S(\frac{30}{72})$ multiplied by $\weight_3 = \frac{12}{72}$.}
\end{subtable}

\vspace{5pt}

\begin{subtable}[c]{0.4\textwidth}
\centering
\begin{tabular}{|l|l|l|l|l|}
\hline
$\Pr[\D]$ & $v=1$ & $v=2$ & $v=3$ & $v=4$ \\ \hline
$d=1$  & -   & $12/72$ & $12/72$ & -   \\ \hline
$d=2$  & -   & - & -   & -   \\ \hline
$d=3$  & -   & -   & - & $6/72$ \\ \hline
$d=4$  & -   & -   & - & - \\ \hline
\end{tabular}
\subcaption{\footnotesize The residual prior $\D(\frac{42}{72})$ at $t_3 = \frac{42}{72}$.}
\end{subtable}
\begin{subtable}[c]{0.4\textwidth}
\centering
\begin{tabular}{|l|l|l|l|l|}
\hline
$\Pr[\S]$ & $v=1$ & $v=2$ & $v=3$ & $v=4$ \\ \hline
$d=1$  & -   & $4/72$ & $2/72$ & -   \\ \hline
$d=2$  & - & - & -   & -   \\ \hline
$d=3$  & -   & -   & - & $6/72$ \\ \hline
$d=4$  & -   & -   & - & - \\ \hline
\end{tabular}
\subcaption{\footnotesize \label{tab:running_example_full_signal_4} Signal $\S_4 = \S(\frac{42}{72})$ multiplied by $\weight_4 = \frac{12}{72}$.}
\end{subtable}

\vspace{5pt}

\begin{subtable}[c]{0.4\textwidth}
\centering
\begin{tabular}{|l|l|l|l|l|}
\hline
$\Pr[\D]$ & $v=1$ & $v=2$ & $v=3$ & $v=4$ \\ \hline
$d=1$  & -   & $8/72$ & $10/72$ & -   \\ \hline
$d \geq 2$  & -   & - & -   & -   \\ \hline
\end{tabular}
\subcaption{\footnotesize The residual prior $\D(\frac{54}{72})$ at $t_4 = \frac{54}{72}$.}
\end{subtable}
\begin{subtable}[c]{0.4\textwidth}
\centering
\begin{tabular}{|l|l|l|l|l|}
\hline
$\Pr[\S]$ & $v=1$ & $v=2$ & $v=3$ & $v=4$ \\ \hline
$d=1$  & -   & $5/72$ & $10/72$ & -   \\ \hline
$d \geq 2$  & -   & - & -   & -   \\ \hline
\end{tabular}
\subcaption{\footnotesize \label{tab:running_example_full_signal_5} Signal $\S_5 = \S(\frac{54}{72})$ multiplied by $\weight_5 = \frac{15}{72}$.}
\end{subtable}

\vspace{5pt}

\begin{subtable}[c]{0.4\textwidth}
\centering
\begin{tabular}{|l|l|l|l|l|}
\hline
$\Pr[\D]$ & $v=1$ & $v=2$ & $v=3$ & $v=4$ \\ \hline
$d=1$  & -   & $3/72$ & - & -   \\ \hline
$d \geq 2$  & -   & - & -   & -   \\ \hline
\end{tabular}
\subcaption{\footnotesize The residual prior $\D(\frac{69}{72})$ at $t_5 = \frac{69}{72}$.}
\end{subtable}
\begin{subtable}[c]{0.4\textwidth}
\centering
\begin{tabular}{|l|l|l|l|l|}
\hline
$\Pr[\S]$ & $v=1$ & $v=2$ & $v=3$ & $v=4$ \\ \hline
$d=1$  & -   & $3/72$ & - & -   \\ \hline
$d \geq 2$  & -   & - & -   & -   \\ \hline
\end{tabular}
\subcaption{\footnotesize \label{tab:running_example_full_signal_6} Signal $\S_6 = \S(\frac{69}{72})$ multiplied by $\weight_6 = \frac{3}{72}$.}
\end{subtable}

\caption{\label{tab:running_example_full_signals} Timeline of the algorithm applied on Example~\ref{example:deadlines_running_example_inputs}. Values of probabilistic masses in the constructed signals are multiplied by the corresponding weights of the signals. All values are divided by $72$ for simplicity.}
\end{table}

Similar to  Algorithm~\ref{alg:main_algorithm_public}, the signals created by Algorithm~\ref{alg:main_algorithm_deadlines} are Bayes plausible:

\begin{observation} \label{observation:deadlines_plausibility_of_signals}
\sloppy
    For any arbitrary $\D$, let $\Z^{\ast}_{\D} =  \{ (\weight^{\ast}_h, \S^{\ast}_h) \}_{h \in [H]}$ be the set of signals output by Algorithm~\ref{alg:main_algorithm_deadlines} taking $\D$ as input. Then we have  $\sum_{h=1}^{H} \weight^{\ast}_h \S^{\ast}_h = \D$.
\end{observation}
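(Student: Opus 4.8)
The plan is to mirror the argument for Observation~\ref{observation:public_plausibility_of_signals} in the matrix-valued setting, by integrating the governing differential equation~(\ref{eq:deadlines_differential_equation}) over each phase of the algorithm and telescoping. First I would fix a maximal interval $[t_{h-1}, t_h)$ on which $\values(\D(t))$ stays fixed; on this interval $\textsf{LE}(\D(t))$ is unchanged (it depends only on which value-deadline pairs have nonzero residual mass, hence only on $\values(\D(t))$ together with the pattern of zeros, which does not change until a type is exhausted), so the equal-revenue-lower-envelope signal $\S(t) = \S^{ELE}_{\D(t)}$ and its probability matrix $\mathbf{S}(t)$ are constant, equal to $\mathbf{S}(t_{h-1})$. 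Integrating~(\ref{eq:deadlines_differential_equation}) over $[t_{h-1}, t_h)$ then gives
\[
\mathbf{F}(t_{h-1}) - \mathbf{F}(t_h) \;=\; \int_{t_{h-1}}^{t_h} \mathbf{S}(t)\, \dd{t} \;=\; (t_h - t_{h-1})\,\mathbf{S}(t_{h-1}) \;=\; \weight^{\ast}_h\, \mathbf{S}(t_{h-1}),
\]
which is the matrix analog of Eq.~(\ref{eq:public_signal_construction}).

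Next I would sum this identity over $h \in \{1, \ldots, H\}$, telescoping the left-hand side to obtain $\mathbf{F}(t_0) - \mathbf{F}(t_H) = \sum_{h=1}^{H} \weight^{\ast}_h\, \mathbf{S}(t_{h-1})$. It remains to identify the two endpoints: by initialization $\mathbf{F}(t_0) = \mathbf{F}(0)$ is exactly the probability matrix $[\Pr_{(v,d)\sim\D}[(v,d)=(v_i,j)]]$ of $\D$, and since $\sum_{i,j} s_{ij}(t) = 1$ at every time $t$, the total residual mass $\sum_{i,j} f_{ij}(t)$ decreases at unit rate from its initial value $1$, so the process terminates at $T = 1$ with $\mathbf{F}(t_H) = \mathbf{F}(1) = \mathbf{0}$. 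Finally, each $\mathbf{S}(t_{h-1})$ is by definition the probability matrix of the signal $\S^{\ast}_h$ recorded in that phase, so the telescoped identity reads $\D = \sum_{h=1}^{H} \weight^{\ast}_h\, \S^{\ast}_h$, which is exactly Bayes plausibility~(\ref{eq:bayes_plausibility}).

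I do not anticipate a substantive obstacle here: the statement is the direct two-dimensional counterpart of Observation~\ref{observation:public_plausibility_of_signals}, and the only points requiring a line of care are (i) justifying that $\mathbf{S}(t)$ is genuinely piecewise constant — i.e., that $\textsf{LE}(\D(t))$, and hence the equal-revenue lower envelope distribution, changes only at the finitely many exhaustion times $t_h$, which is noted when the algorithm is described — and (ii) confirming that each $\S^{\ast}_h$ is a bona fide probability distribution, which holds because the equal-revenue lower envelope of Definition~\ref{def:ele} is a well-defined distribution over $\textsf{LE}(\D(t_{h-1}))$. Everything else is routine integration and telescoping.
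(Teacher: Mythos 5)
Your proposal is correct and follows exactly the same route the paper takes for the public-budget analog (Observation~\ref{observation:public_plausibility_of_signals}), which the paper then invokes ``similarly'' for the deadlines case: integrate the governing ODE over each phase where the signal matrix is constant, telescope, and identify the endpoints $\mathbf{F}(0)$ with $\D$ and $\mathbf{F}(1)$ with $\mathbf{0}$. Your parenthetical remark about constancy is slightly muddled --- $\mathbf{S}(t)$ really depends on $\supp(\D(t))$ rather than on $\values(\D(t))$ alone, though ``$\values(\D(t))$ together with the pattern of zeros'' amounts to the same thing and the phases are in any case delimited exactly by exhaustion events --- but this does not affect the validity of the argument.
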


\subsection{Optimal Auction for Signals}
\label{subsec:deadlines_welfare}

In the following, we show the counterparts of Lemmas~\ref{lemma:public_optimal_auction_for_signals} and~\ref{lemma:public_efficiency_of_scheme} in the deadlines context, and that Algorithm~\ref{alg:main_algorithm_deadlines} guarantees efficiency. Let $\S_h \in \Z^{\ast}_{\D}$ denote a signal created by Algorithm~\ref{alg:main_algorithm_deadlines} for the prior $\D$. This theorem is proved in Appendix~\ref{app:deadlines}.

\begin{restatable}[]{lemma}{deadlinesoptimalauction}
%\begin{lemma} 
\label{lemma:deadlines_optimal_auction}
There is an optimal auction for $\S_h$ that posts a price of $v_{\min}(\S_h)$. Further, for every $v_i \in \values(\S_h)$, we have $v_i \cdot \Pr_{(v,d) \sim \S_h}[v \geq v_i] = v_{\min}(\S_h) $. 
%\end{lemma}
\end{restatable}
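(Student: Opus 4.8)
The plan is to prove the two claims about $\S_h$ in sequence, exploiting the special structure guaranteed by the Equal Revenue Lower Envelope construction. First I would establish the ``equal revenue'' identity $v_i \cdot \Pr_{(v,d) \sim \S_h}[v \geq v_i] = v_{\min}(\S_h)$ for every $v_i \in \values(\S_h)$, since this is purely a statement about the value-marginal of $\S_h$ and follows essentially by construction. By the signal-construction step of Algorithm~\ref{alg:main_algorithm_deadlines}, each $\S_h$ equals $\S(t)$ for some $t$ in the interval where $\values(\D(t))$ is fixed, so $\S_h = \S^{ELE}_{\D(t)}$. By Definition~\ref{def:ele}, the support of $\S^{ELE}_{\A}$ is $\textsf{LE}(\A) = \{(v'_1, j'_1), \ldots, (v'_m, j'_m)\}$ with distinct values $v'_1 < \cdots < v'_m$ (Observation~\ref{observation:deadlines_le_is_a_function}), and the defining property is exactly $\Pr_{(v,d) \sim \S^{ELE}_{\A}}[v \geq v'_\ell] \cdot v'_\ell = v'_1 = v_{\min}(\A)$ for all $\ell$. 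The only subtlety is to observe $v_{\min}(\S_h) = v'_1 = v_{\min}(\D(t))$, which holds because the smallest value in $\textsf{LE}(\A)$ is $v_{\min}(\A)$ by the construction of the lower envelope (the pair $(v_{\min}(\A), j)$ for the appropriate $j$ is always on $\textsf{LE}(\A)$ since $\hat{i}_j < i$ fails to exclude the minimum value). So this part is essentially unpacking definitions.

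Next I would prove that posting the fixed price $v_{\min}(\S_h)$ is revenue optimal for $\S_h$ in the \textsf{Deadlines} LP. The lower bound on $\textsf{Deadlines}(\S_h)$ from this posted price is immediate: since the equal-revenue identity gives $v_i \cdot \overline{F}(v_i) = v_{\min}(\S_h)$ for every $v_i$ in the support, offering the deterministic menu option ``pay $v_{\min}(\S_h)$, receive the item with probability $1$ at one's own deadline'' is IC and IR (every type has value $v_i \geq v_{\min}(\S_h)$, so $v_i - v_{\min}(\S_h) \geq 0$, and there is only one option so IC is trivial), and it raises revenue $v_{\min}(\S_h) \cdot \sum_i \sum_j f_{ij} = v_{\min}(\S_h)$. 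The harder direction is the matching upper bound: $\R(\S_h) \leq v_{\min}(\S_h)$. Here I expect to invoke the characterization of optimal auctions for the deadline setting --- Theorem~\ref{theorem:deadlines_optimal_revenue_is_a_distribution_over_posted_price_revenues} --- which says the optimal revenue is a convex combination $\sum_\ell \delta_\ell w_\ell \overline{F}_{\S_h}(w_\ell)$ of posted-price revenues over values $w_\ell \in \values(\S_h)$ (possibly with some care about the deadline-indexed prices the theorem produces). Since each $w_\ell \in \values(\S_h)$, the equal-revenue identity forces $w_\ell \overline{F}_{\S_h}(w_\ell) = v_{\min}(\S_h)$, and since $\sum_\ell \delta_\ell = 1$, the convex combination collapses to exactly $v_{\min}(\S_h)$. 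This shows $\R(\S_h) = v_{\min}(\S_h)$ and that the single posted price $v_{\min}(\S_h)$ attains it.

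The main obstacle is the interface with Theorem~\ref{theorem:deadlines_optimal_revenue_is_a_distribution_over_posted_price_revenues}: that characterization is stated for general priors and its posted prices may be deadline-dependent (a price per deadline class), whereas the equal-revenue identity I have only controls the value-marginal $\overline{F}_{\S_h}(\cdot)$. I would need to check that the relevant prices in the characterization all lie in $\values(\S_h)$ and that the weighted revenue they generate is still measured against the distribution $\overline{F}_{\S_h}$, so that the collapse argument goes through; alternatively, if the characterization is phrased with a separate price $w_{\ell,j}$ for each deadline $j$, I would argue that each such price must be at least $v_{\min}(\S_h)$ (by IR, no price below the minimum value helps) and at most the monopoly price of the restricted marginal, which by the ELE structure is again $v_{\min}(\S_h)$, pinning every per-deadline price to $v_{\min}(\S_h)$. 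A secondary, more routine concern is verifying that the posted-price menu is feasible for the \emph{inter-deadline} IC constraints --- but with a single menu option independent of the reported deadline, all inter-deadline IC constraints hold with equality, so this is immediate.
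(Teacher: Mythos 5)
Your argument for the equal-revenue identity and for the lower bound $\R(\S_h) \geq v_{\min}(\S_h)$ via the single posted price is correct and matches the paper. The problem is the upper bound $\R(\S_h) \leq v_{\min}(\S_h)$, where invoking Theorem~\ref{theorem:deadlines_optimal_revenue_is_a_distribution_over_posted_price_revenues} pushes in the wrong direction. That theorem characterizes $\Tilde{\R}(\A)$, the optimal value of $\textsf{DeadlinesContinuous}(\A)$, and the paper explicitly sets up $\Tilde{\R}(\A) \leq \R(\A)$ (the continuous LP has strictly more IC/IR constraints). So even if your collapse argument successfully shows $\Tilde{\R}(\S_h) = v_{\min}(\S_h)$ --- and it plausibly does, by essentially the same manipulation the paper performs inside the proof of Theorem~\ref{theorem:deadlines_revenue_preservation}, pushing the lower-envelope properties through the deadline-conditional tails --- you would only have concluded $\R(\S_h) \geq v_{\min}(\S_h)$, which you already knew. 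The step you actually need, $\R(\S_h) \leq v_{\min}(\S_h)$, does not follow. You would have to further invoke the equality $\R = \Tilde{\R}$, which the paper cites externally (footnote referring to~\cite{RubinsteinW,cai2021duality}) but deliberately avoids relying on. Your fallback argument about ``pinning every per-deadline price to $v_{\min}(\S_h)$'' also does not survive scrutiny: the optimal auction for each deadline class need not be a posted price, so there is no ``per-deadline price'' to pin a priori.

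The paper's proof of this lemma is an entirely self-contained elementary argument that sidesteps Theorem~\ref{theorem:deadlines_optimal_revenue_is_a_distribution_over_posted_price_revenues}. It exploits the monotone deadline structure of the lower-envelope support, $\supp(\S_h) = \{(w_1,d_1),\ldots,(w_m,d_m)\}$ with $w_1<\cdots<w_m$ and $d_1\le\cdots\le d_m$. Because $d_{i-1}\le d_i$, the inter-deadline IC constraint lets type $(w_i,d_i)$ misreport as $(w_{i-1},d_{i-1})$, giving $p_i \leq p_{i-1} + w_i(x_i - x_{i-1})$. Chaining these together with IR for the lowest type, summing over $i$ with weights $f_i = \Pr_{\S_h}[v=w_i]$, and using the equal-revenue identity in the rearranged form $f_i w_i - \overline{F}_{i+1}(w_{i+1}-w_i)=0$, the entire telescoping sum collapses to $f_m w_m x_m \leq f_m w_m = w_1 = v_{\min}(\S_h)$. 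This directly bounds the revenue of any feasible auction in $\textsf{Deadlines}(\S_h)$ without touching the continuous relaxation. The appeal of this route is exactly that it requires no heavy machinery and has no direction-of-inequality worries; it also makes clear that the bound hinges on the monotone-deadline structure (which is a consequence of the ELE construction), a point that stays hidden if you route through Theorem~\ref{theorem:deadlines_optimal_revenue_is_a_distribution_over_posted_price_revenues}.
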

\begin{proof}
By Observation~\ref{observation:deadlines_le_is_a_function} and Definition~\ref{def:ele}, we have that for any $i$, if $(v_i, j), (v_i, j') \in \supp(\S_h)$, then $j = j'$; furthermore, if $(v_{i'}, j), (v_{i''}, j) \in \supp(\S_h)$ for some $i' \leq i''$, then for any $i \in [i', i'']$, $(v_i, j') \in \supp(\S_h)$ implies $j' = j$. Therefore, we can denote $\values(\S_h) = \{ w_1, w_2, \ldots, w_m\}$ and $\supp(\S_h) = \{(w_1, d_1), (w_2, d_2), \ldots, (w_m, d_m)\}$, where $w_1 < w_2 < \cdots < w_m$ and $d_1 \leq d_2 \leq \cdots \leq d_m$. Any incentive compatible auction for $\S_h$ can thus be represented by $\{(p_i, x_i)\}$, where $p_i$ and $x_i$ are the expected payment and allocation probability, respectively, for the type $(w_i, d_i)$. For simplicity, let $f_{i} = \Pr_{(v,d) \sim \S_h}[v = w_i]$, and thus $\sum_{i=1}^{m} f_i = 1$; also let $\overline{F}_{i} = \Pr_{(v,d) \sim \S_h}[v \geq w_i]$. We have 
\begin{align}
    \overline{F}_{1} \cdot w_1 = \overline{F}_{2} \cdot w_2 = \cdots = \overline{F}_{m} \cdot w_m =  w_1 = v_{\min}(\S_h),
    \label{eq:deadlines_restatement_of_equal_revenue_property}
\end{align} 
and thus for every $i = 1, \ldots, (m-1)$ we have
\begin{align}
    \overline{F}_{i} \cdot w_{i} - \overline{F}_{i+1} \cdot w_{i+1} &= (\overline{F}_{i+1} + f_{i}) \cdot w_{i} - \overline{F}_{i+1} \cdot w_{i+1} \notag \\
    &= f_{i} \cdot w_{i} - \overline{F}_{i+1} \cdot (w_{i+1} - w_{i}) = 0. \label{eq:deadlines_reorganization_of_equal_revenue_property}
\end{align} 

Consider any feasible auction $\{(p_i, x_i)\}$ for $\S_h$. By individual rationality for the type $(w_1, d_1)$, we have $w_1 \cdot x_1 - p_1 \geq 0$, or $p_1 \leq w_1 \cdot x_1$. Next, for all $i = 2, \ldots, m$, to prevent the type $(w_{i}, d_{i})$ to misreport $(w_{i-1}, d_{i-1} \leq d_{i})$, we have $w_i \cdot x_i - p_i \geq w_i \cdot x_{i-1} - p_{i-1}$, or equivalently, $p_i \leq p_{i-1} + w_i \cdot (x_i - x_{i-1})$. Letting $x_0 = 0$, for all $i \in [1,m]$ we can write
\begin{align*}
    f_i \cdot p_i &\leq f_i \cdot \big( p_{i-1} + w_i \cdot (x_i - x_{i-1}) \big)\\
    &\leq f_i \cdot \Big( \big( p_{i-2} + w_{i-1} \cdot (x_{i-1} - x_{i-2}) \big) + w_i \cdot (x_i - x_{i-1}) \Big)\\
    &\leq \cdots \leq f_i \cdot \sum_{i'=1}^{i} \big( w_i \cdot (x_i - x_{i-1}) \big).
\end{align*}
Summing up the above inequality for all $i \in [1,m]$, we have
\begin{align*}
    \sum_{i=1}^{m} ( f_i \cdot p_i ) &\leq \sum_{i=1}^{m} \Big( f_i \cdot \sum_{i'=1}^{i} \big( w_i \cdot (x_i - x_{i-1}) \big) \Big)\\
    &= \sum_{i=1}^{m-1} \Big( x_i \cdot \big( f_i \cdot w_i - \sum_{i'=i+1}^{m} f_{i'} \cdot ( w_{i} - w_{i-1} ) \big) \Big) + x_m \cdot f_m \cdot w_m \\
    &= \sum_{i=1}^{m-1} \Big( x_i \cdot \big( f_i \cdot w_i - \overline{F}_{i+1} \cdot ( w_{i} - w_{i-1} ) \big) \Big) + x_m \cdot f_m \cdot w_m\\
    &= x_m \cdot f_m \cdot w_m \tag*{(by Eq. (\ref{eq:deadlines_reorganization_of_equal_revenue_property}))}\\
    &\leq f_m \cdot w_m. \tag*{($x_m \leq 1$)}
\end{align*}

Notice that the left-hand side of the above inequality, $\sum_{i=1}^{m} ( f_i \cdot p_i )$, is exactly the revenue raised by $\{(p_i, x_i)\}$ in $\S_h$. Furthermore, by Eq. (\ref{eq:deadlines_restatement_of_equal_revenue_property}) we also have $f_m \cdot w_m = \overline{F}_{m} \cdot w_m = w_1 = v_{\min}(\S_h)$. This shows the maximum revenue is upper bounded by $w_1$. But $w_1 = v_{\min}(\S_h)$ is exactly the revenue raised by the feasible auction that posts a fixed price of $w_1 = v_{\min}(\S_h)$. Therefore, this auction is optimal, and the maximum revenue is $w_1 = v_{\min}(\S_h)$. The second statement in the lemma follows directly from Eq. (\ref{eq:deadlines_restatement_of_equal_revenue_property}).
\end{proof}

The characterization of the optimal auction above implies the item always sells in $\Z^{\ast}_{\D}$:
\begin{lemma}[Efficiency of $\Z^{\ast}_{\D}$] \label{lemma:deadlines_efficiency_of_scheme}
For each signal $\S_h \in \Z^{\ast}_{\D}$, there exists a revenue optimal auction that always sells the item. As a consequence, $\W(\Z^\ast_{\D}) = \W^\ast(\D)$.
\end{lemma}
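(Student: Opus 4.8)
The plan is to obtain this lemma as an essentially immediate corollary of Lemma~\ref{lemma:deadlines_optimal_auction} together with Bayes plausibility (Observation~\ref{observation:deadlines_plausibility_of_signals}). For the first statement, fix a signal $\S_h$ produced by Algorithm~\ref{alg:main_algorithm_deadlines}. Lemma~\ref{lemma:deadlines_optimal_auction} exhibits a revenue-optimal auction for $\S_h$ that posts the single price $v_{\min}(\S_h)$ to every buyer type, i.e.\ offers each type $(v_i,j) \in \supp(\S_h)$ the item with probability $1$ at payment $v_{\min}(\S_h)$. Since $v_{\min}(\S_h) = \min\{v_i : v_i \in \values(\S_h)\}$, the IR inequality $v_i \cdot 1 - v_{\min}(\S_h) \ge 0$ holds for every type in the support, so every buyer accepts and receives the item. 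Hence this revenue-optimal auction allocates with probability $1$ to every type, which is exactly what ``always sells the item'' means in the deadline setting, where each buyer is served at her own deadline.

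For the consequence, I would translate ``always sells'' into a statement about $\W(\S_h)$. By the tie-breaking convention of Definition~\ref{def:three_key_amounts}, the auction $(\{p^\ast_{ij}\},\{x^\ast_{ij}\})$ used to define $\W(\S_h)$ is chosen to maximize social welfare among all revenue-optimal auctions for $\S_h$; since we have just produced one revenue-optimal auction that allocates everywhere with probability $1$, the chosen auction has welfare at least as large, so $\W(\S_h) = \E_{(v,d)\sim\S_h}[v] = \W^\ast(\S_h)$. Aggregating over the finitely many signals, $\W(\Z^\ast_\D) = \sum_{h=1}^{H} \weight^\ast_h \, \W(\S_h) = \sum_{h=1}^{H} \weight^\ast_h \, \E_{(v,d)\sim\S_h}[v]$, and applying Bayes plausibility $\sum_{h=1}^{H} \weight^\ast_h \S^\ast_h = \D$ from Observation~\ref{observation:deadlines_plausibility_of_signals} together with linearity of expectation yields $\W(\Z^\ast_\D) = \E_{(v,d)\sim\D}[v] = \W^\ast(\D)$.

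There is no genuinely hard step here; the lemma is a bookkeeping consequence of Lemma~\ref{lemma:deadlines_optimal_auction}. The only points that deserve a line of care are (i) reading ``posts a price $p$'' as allocation probability exactly $1$ at payment $p$ (so that the trade happens with certainty rather than fractionally), which is how the auction in Lemma~\ref{lemma:deadlines_optimal_auction} is constructed, and (ii) explicitly invoking the tie-breaking rule in Definition~\ref{def:three_key_amounts} so that the specific revenue-optimal auction implicit in the definitions of $\W$ and $\CS$ is the efficient one. Both are routine once Lemma~\ref{lemma:deadlines_optimal_auction} is available.
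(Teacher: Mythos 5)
Your proof is correct and follows exactly the argument the paper intends: the paper states this lemma immediately after Lemma~\ref{lemma:deadlines_optimal_auction} as a direct consequence without a separate proof, and your write-up supplies precisely the missing bookkeeping — that posting $v_{\min}(\S_h)$ satisfies IR for every type in $\supp(\S_h)$ (so allocation is deterministic), that the tie-breaking rule in Definition~\ref{def:three_key_amounts} ensures $\W(\S_h)=\W^\ast(\S_h)$, and that Bayes plausibility aggregates this to $\W(\Z^\ast_\D)=\W^\ast(\D)$.
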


\subsection{Characterization of Optimal Auction for $\D(t)$} 
\label{subsec:deadlines_optimal_auction}
In the following, we analyze the revenue of the signals using the same technique in Section~\ref{subsec:public_optimal_auction}: We make the constraints in \textsf{Deadlines}$(\D(t))$ hold not just for values in $\values(\D(t))$ but for all continuous values $v > 0$. Fix some time $t$, and let $\A = \D(t)$ so that $\supp(\A) \subseteq \supp(\D)$. We describe the linear program with extended domain and IC/IR constraints:

\begin{align*}
    \textsf{DeadlinesContinuous}(\A) & \coloneqq \max_{\{p_j(\cdot)\}, \{x_j(\cdot)\}} \quad \sum_{j=1}^{k} \left( \Pr_{(v,d) \sim \A}[d = j] \cdot \sum_{i=1}^{n} \big( f_{\A_j}(v_i) \cdot p_j(v_i) \big) \right) \span\span\span \\
    \text{s.t.} \quad & v \cdot x_j(v) - p_j(v) \geq v \cdot x_j(v') - p_j(v'), &\quad \forall v, v' \in [0, v_n], \, 1 \leq j \leq k, \\% \tag*{ (Cont. same-deadline IC)}\\
    \quad & v \cdot x_j(v) - p_j(v) \geq v \cdot x_{j-1}(v) - p_{j-1}(v), &\quad \forall v \in [0, v_n], \, 2 \leq j \leq k, \\%\tag*{ (Cont. inter-deadline IC)}\\
    \quad & v \cdot x_j(v) - p_j(v) \geq 0, &\quad \forall v \in [0, v_n], \, 1 \leq j \leq k, \\%\tag*{ (Cont. IR)}\\
    & 0 \leq x_j(v) \leq 1, &\quad \forall v \in [0, v_n], \, 1 \leq j \leq k. %\tag*{ (Feasibility)}
\end{align*}

We denote $\R(\A)$ the optimal revenue achievable by \textsf{Deadlines}$(\A)$ and $\Tilde{\R}(\A)$ the optimal revenue achievable by \textsf{DeadlinesContinuous}$(\A)$. Clearly, $\R(\A) \geq \Tilde{\R}(\A)$.\footnote{It follows from~\cite{RubinsteinW,cai2021duality} that these two revenues are equal; however, we will not need this fact in our proof.}

We now present a characterization result for the optimal auction that is a discrete analog of the characterization for continuous priors in~\cite{Fiat16,Devanur17}.  We present a stand-alone and elementary proof for the discrete setting in Appendix~\ref{app:deadlines}. We note that unlike~\cite{Fiat16,Devanur17}, our proof uses convexity of the utility curve in the primal solution instead of invoking duality, and may be of independent interest. This theorem is proved in Appendix~\ref{app:deadlines}.

\begin{restatable}[]{theorem}{deadlinesrevenuedistribution} \label{theorem:deadlines_optimal_revenue_is_a_distribution_over_posted_price_revenues}
For any prior $\A = \D(t)$ such that $\values(\A) = \{w_1, w_2, \ldots, w_m\}$, where $w_1 < w_2 < \cdots < w_m$, there exists weights $\delta^j_1, \delta^j_2, \ldots, \delta^j_{m} \in [0,1]$ for all $j \in [1,k]$ such that the optimal revenue of $\textsf{DeadlinesContinuous}(\A)$ is 
\[ \Tilde{\R}(\A) = \sum_{j=1}^{k} \left( \Pr_{(v,d) \sim \A}[d = j] \cdot \sum_{i'=1}^{m} \big( \delta^j_{i'} \cdot \overline{F}_{\A_j}(w_{i'}) \cdot w_{i'} \big) \right). \]
%\[ \Tilde{\R}(\A) = \sum_{j=1}^{k} \Pr_{(v,d) \sim \A}[d = j] \cdot \Big( \sum_{i'=1}^{m} \big( \delta^j_{i'} \cdot \overline{F}_{\A_j}(w_{i'}) \cdot w_{i'} \big) \Big). \] 
Furthermore, we have the following properties about the lower envelope $\textsf{LE}(\A)$:
\begin{itemize}
    \item If $(w_{a},r) \in \textsf{LE}(\A)$, then $\delta^j_a = \delta^r_a$ for all $j \ge r$.
    \item If $(w_a, r)$ and $(w_b,r')$ are consecutive points on $\textsf{LE}(\A)$ where $a < b$ and $r' \ge r$, then $\delta^{j}_i = 0$ for all $i \in (a,b)$ and $j \ge r$.
    \item $\sum_{i=1}^{m} \big( \mathbbm{1}^{\textsf{LE}(\A)}_{i} \cdot \delta^k_{i} \big) = 1$, where $\mathbbm{1}^{\textsf{LE}(\A)}_{i}$ equals $1$ if $(w_i, j) \in \textsf{LE}(\A)$ for some $j$, and $0$ otherwise.
\end{itemize}
\end{restatable}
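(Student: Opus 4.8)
The plan is to work not with the auction variables $(p_j(\cdot),x_j(\cdot))$ of $\textsf{DeadlinesContinuous}(\A)$ but with the induced \emph{utility curves} $u_j(v) \coloneqq v\,x_j(v) - p_j(v)$, exploiting their convexity in place of the LP duality used in~\cite{Fiat16,Devanur17}. The first step is to record the dictionary that is immediate from the constraints: the same-deadline IC constraints say exactly $u_j(v) = \max_{v'}\big(v\,x_j(v') - p_j(v')\big)$, so each $u_j$ is convex, nondecreasing and $1$-Lipschitz on $[0,v_n]$ and any subgradient of $u_j$ gives a valid allocation $x_j$ (via $p_j(v)=v x_j(v)-u_j(v)$); the inter-deadline IC constraint is literally $u_j(v) \ge u_{j-1}(v)$ for all $v$, because both sides are the on-path utilities of types $(v,j)$ and $(v,j-1)$; and individual rationality for every $j$ collapses, via the chain $u_1 \le \cdots \le u_k$, to $u_1 \ge 0$. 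So $\textsf{DeadlinesContinuous}(\A)$ becomes: choose convex, nondecreasing, $1$-Lipschitz functions $u_1 \le \cdots \le u_k$ on $[0,v_n]$ with $u_1 \ge 0$, maximizing $\sum_{j}\Pr[d=j]\sum_i f_{\A_j}(v_i)\big(v_i\,u_j'(v_i^{+}) - u_j(v_i)\big)$, where $u_j'(v^{+})$ is the right derivative and $x_j(v_i)=u_j'(v_i^{+})$ is the revenue-maximizing subgradient.

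The second step is a \emph{chord-replacement} argument producing an optimum in which each $u_j$ is piecewise linear with all kinks at values in $\values(\A) = \{w_1,\dots,w_m\}$. On any maximal interval $(w_\ell,w_{\ell+1})$---as well as $[0,w_1]$ and $[w_m,v_n]$---containing no value of $\values(\A)$, replace each $u_j$ there by the straight chord between its two endpoint values. Convexity of the old $u_j$ makes the new one again convex, nondecreasing and $1$-Lipschitz, and the chord slope is at least the old $u_j'(w_\ell^{+})$, so that right derivative can only increase; since $w_\ell > 0$, the objective weakly increases. The chain is preserved for free: on each such interval all the new curves are line segments, ordered at both endpoints, hence ordered throughout. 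After this, $u_j(v) = \alpha_j v + \sum_{i=1}^{m} \delta^j_i\,(v - w_i)_{+}$ with $\alpha_j \ge 0$, $\delta^j_i \ge 0$ (convexity) and $\alpha_j + \sum_i \delta^j_i \le 1$ ($1$-Lipschitz), so in particular $\delta^j_i \in [0,1]$. Substituting, $v_i u_j'(v_i^{+}) - u_j(v_i) = \sum_{i' : w_{i'} \le v_i} \delta^j_{i'}\,w_{i'}$ (the $\alpha_j v$ terms cancel); summing against $f_{\A_j}$ and exchanging the order of summation turns the deadline-$j$ contribution into $\sum_{i'} \delta^j_{i'}\,\overline{F}_{\A_j}(w_{i'})\,w_{i'}$, which is the claimed formula.

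The main work---and the main obstacle---is the three lower-envelope properties, which (unlike the formula) pin down a \emph{particular} optimum. I would fix the revenue-optimal tuple obtained by greedily shaving every kink that can be removed without losing revenue (equivalently, the lexicographically smallest optimum in the order $u_k, u_{k-1},\dots,u_1$). Two observations drive the argument: by Definition~\ref{def:lower_envelope}, $(w_a,r)\in\textsf{LE}(\A)$ exactly when $w_a$ is the smallest value still present among all deadlines $\ge r$ in its band---so in particular $(w_a,k)\in\textsf{LE}(\A)$ iff $f_{\A_k}(w_a)>0$---and raising any $\delta^k_i$ only raises $u_k$, hence never violates $u_k\ge u_{k-1}$ and, with nothing constraining $u_k$ from above, is always feasible up to the Lipschitz budget. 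The second observation forces $\sum_i\delta^k_i = 1$ and (since moving weight from a value with no deadline-$k$ mass up to the next deadline-$k$ support value strictly increases deadline-$k$ revenue) forces $\delta^k$ to sit on envelope values once the lower curves are relaxed to keep the chain feasible; this is the third property. For the first two, I would show that at the greedily shaved optimum any $\delta^j_a$ exceeding $\delta^r_a$, or any nonzero $\delta^j_i$ strictly between consecutive envelope points, could be shaved off $u_j$ for $j \ge r$: the envelope condition is precisely what guarantees that $u_j \ge u_{j-1}$ leaves slack just above $w_a$ and that the removed slope can be re-placed on an envelope value carrying the same $\overline{F}_{\A_j}$-weighted price, so revenue does not drop. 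Making ``there is slack'' and ``can be re-placed'' precise, and tracking how such corrections propagate up the chain of deadlines, is the delicate core; the remainder is bookkeeping. An induction on $k$ (peeling off the bottom deadline $u_1$, or the top deadline $u_k$, whose curve is only one-sidedly constrained) is a plausible way to organize this propagation cleanly.
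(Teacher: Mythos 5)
Your first two steps are sound and closely mirror the paper's Lemma~\ref{lemma:deadlines_linear_program_piecewise_constant}: the paper works with the area functions $\area_x(v,j)=\int_0^v x_j$, which are exactly $u_j(v)-u_j(0)$, and its ``piecewise-constant allocation'' reduction is the same move as your chord replacement. The reformulation in terms of utility curves is arguably cleaner, and your verification that the chord replacement preserves feasibility, the chain, and the objective is correct.

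The problem is the rest. You correctly identify that the three lower-envelope properties are where the work is, but the proposal stops at a plan rather than a proof, and the plan as stated has concrete weak points. First, ``the lexicographically smallest optimum in the order $u_k,u_{k-1},\dots,u_1$'' is not the right extremal object for the third property: property~3 forces $\alpha_k=0$ together with the kink weights summing to $1$ (equivalently $\hat x_{mk}=1$), and minimizing $u_k$ pointwise pushes in the wrong direction whenever there is a tie in deadline-$k$ revenue (for instance if $f_{\A_k}(w_m)=0$, lexicographic minimality would shrink the top allocation, not saturate it). The paper handles this by an explicit, separate modification (property~4 of Theorem~\ref{theorem:deadlines_complete_characterization_of_optimal_revenue_for_signals}: raising $\hat x_{ij}$ to $1$ for $i\ge m'$ and $j\ge k'$, where $(w_{m'},k')$ is the top-right envelope point), together with an argument that this weakly increases the one price it can affect. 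Second, for properties~1 and~2, you assert that at the shaved optimum any excess $\delta^j_a>\delta^r_a$ or nonzero $\delta^j_i$ between consecutive envelope points ``could be shaved off $u_j$ for $j\ge r$'' and ``re-placed on an envelope value carrying the same $\overline F_{\A_j}$-weighted price, so revenue does not drop.'' Both halves are unproven: moving a kink weight from $w_i$ down to the envelope value $w_a$ can only \emph{decrease} $\overline F_{\A_j}(\cdot)\cdot(\cdot)$ on its own, so ``revenue does not drop'' depends on a compensating argument that you have not supplied; and ``there is slack just above $w_a$'' to lower $u_j$ is exactly the inductive structure that must be tracked deadline by deadline. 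The paper makes all of this explicit via the $\mathsf{Align}(\A,\{x_{ij}\},\hat i,\hat j)$ operator and its induction on $(\hat j,\hat i)$ (Lemmas~\ref{lemma:deadlines_align_preserves_feasibility_at_each_step}--\ref{lemma:deadlines_aligned_solution_is_optimal}), which sandwiches the modified curve $\area_{x'}(\cdot,\hat j+1)$ between $\area_x(\cdot,\hat j)$ and $\area_x(\cdot,\hat j+1)$ and thereby guarantees both feasibility of the chain and monotone improvement of each relevant price; a further explicit flattening step then removes off-envelope kinks. In short, your approach is compatible with the paper's and likely could be pushed through, but the sketch leaves precisely the core propagation and re-placement arguments open, and the specific extremality principle you propose would need to be replaced or supplemented for the third property.
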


We illustrate this characterization in Figure~\ref{fig:deadlines_revenue_is_a_distribution}. 
\usetikzlibrary{calc,shapes,matrix,arrows,shapes.misc}
\tikzset{
    vertex/.style={circle,draw,minimum size=1.5em},
    edge/.style={-}
}
\tikzset{cross/.style={cross out, draw=black, minimum size=2*(#1-\pgflinewidth), inner sep=0pt, outer sep=0pt},
cross/.default={3.5pt}}

\begin{figure}[t]
\centering
\begin{tikzpicture}[scale=0.70]

    % background lines
    \draw[dashed, thick] (0,  0) -- (12,  0);
    \draw[dashed, thick] (0, -1) -- (12, -1);
    \draw[dashed, thick] (0, -2) -- (12, -2);
    \draw[dashed, thick] (0, -3) -- (12, -3);
    
    % input set
    \node[] (type1) at (-2, 0)  {$d = 1$};  
    \node[] (type2) at (-2,-1)  {$d = 2$};  
    \node[] (type3) at (-2,-2)  {$d = 3$};
    \node[] (type4) at (-2,-3)  {$d = 4$};
    \node[] (value1) at (1,1)  {$w_1$};
    \node[] (value2) at (3,1)  {$w_2$};
    \node[] (value3) at (5,1)  {$w_3$};
    \node[] (value4) at (6,1)  {$w_4$};
    \node[] (value5) at (8,1)  {$w_5$};
    \node[] (value6) at (10,1)  {$w_6$};
    \node[] (value7) at (11,1)  {$w_7$};
    
    % (v,b) pairs
    \draw (1, 0) node[cross=5pt, red] {};
    \draw (3, 0) node[cross] {};
    \draw (5, 0) node[cross] {};
    \draw (6, 0) node[cross] {};
    \draw (8, 0) node[cross] {};
    \draw (10, 0) node[cross] {};
    \draw (3, -1) node[cross=5pt, red] {};
    \draw (8, -1) node[cross] {};
    \draw (11, -1) node[cross] {};
    \draw (6, -2) node[cross=5pt, red] {};
    \draw (10, -2) node[cross] {};
    \draw (11, -2) node[cross] {};
    \draw (8, -3) node[cross=5pt, red] {};
    \draw (10, -3) node[cross=5pt, red] {};
    
    % breaking points (randomized prices)
    \draw [->, blue, thick] (0.8, 0.7) -- (1, 0.2);
    \draw [->, blue, thick] (5.8, 0.7) -- (6, 0.2);
    \draw [->, blue, thick] (2.8, -0.3) -- (3, -0.8);
    \draw [->, blue, thick] (7.8, -0.3) -- (8, -0.8);
    \draw [->, blue, thick] (5.8, -1.3) -- (6, -1.8);
    \draw [->, blue, thick] (9.8, -2.3) -- (10, -2.8);

    % preserved options
    \draw [->, blue, dashed] (1, -0.2) -- (1, -0.9);
    \draw [->, blue, dashed] (1, -1.2) -- (1, -1.9);
    \draw [->, blue, dashed] (1, -2.2) -- (1, -2.9);
    \draw [->, blue, dashed] (3, -1.2) -- (3, -1.9);
    \draw [->, blue, dashed] (3, -2.2) -- (3, -2.9);
    \draw [->, blue, dashed] (6, -2.2) -- (6, -2.9);
    
    % lower envelope
    % \draw[red, very thick] (1, 0) -- (3, 0);
    % \draw[red, dashed, thick] (3, 0) -- (3, -1);
    % \draw[red, very thick] (3, -1) -- (5, -1);
    % \draw[red, dashed, thick] (5, -1) -- (5, -2);
    % \draw[red, very thick] (5, -2) -- (8, -2);
    % \draw[red, dashed, thick] (8, -2) -- (8, -3);
    % \draw[red, very thick] (8, -3) -- (10, -3);
    
    % allocation curve background for highest type
    \draw[->, thick] (0, -9) -- (0, -4);
    \node[] (yaxis) at (-1,-4)  {$x_4(\cdot)$};
    \draw[->, thick] (0, -9) -- (12, -9);
    \node[] (xaxis) at (13, -9)  {$v$};
    
    % allocation curve for highest type
    \draw [very thick] (0, -9) -- (1, -9) -- (1, -7.2) -- (3, -7.2) -- (3, -6.6) -- (6, -6.6) -- (6, -5) -- (10, -5) -- (10, -4.4) -- (11, -4.4);
    \draw [dashed] (3, -7.3) -- (3, -8.9);
    \draw [dashed] (6, -6.7) -- (6, -8.9);
    \draw [dashed] (10, -5.1) -- (10, -8.9);
    \node[] (v1down) at (1, -9.5)  {$w_1$};
    \node[] (v3down) at (3, -9.5)  {$w_2$};
    \node[] (v4down) at (6, -9.5)  {$w_4$};
    \node[] (v6down) at (10, -9.5)  {$w_6$};
    
    \draw [dashed] (9.9, -4.4) -- (0.1, -4.4);
    \node[] (1down) at (-0.5, -4.4)  {$1$};
    \node[] (0down) at (-0.5, -9)  {$0$};
    
\end{tikzpicture}
\caption{Illustration of Theorem~\ref{theorem:deadlines_optimal_revenue_is_a_distribution_over_posted_price_revenues}. Crosses represent $(w_i,j)$ pairs in $\supp(\A)$. Red large crosses represent $(w_i,j)$ pairs on the lower envelope $\textsf{LE}(\A) = \supp(\S)$. Blue solid arrows pointing to the $(w_i,j)$ pairs in a slightly tilted angle represent breakpoints in the allocation curve $x_j(\cdot)$ (i.e., $\delta^j_{i} > 0$). For each $(w_a, r) \in \textsf{LE}(\A)$, the dashed arrows pointing directly downwards illustrate the weight $\delta^r_a$ is preserved up to the latest deadline type, i.e., $\delta^j_a = \delta^r_a$ for all $j \geq r$. The lower portion shows the allocation curve $x_4(\cdot)$ for the latest deadline type $d = 4$ with breakpoints at $v \in \{w_1, w_2, w_4, w_6\}$. Note that $x_{04} = 0$ and $x_{64} = 1$, and we have $\delta^4_1 + \delta^4_2 + \delta^4_4 + \delta^4_6 = 1$, where $\delta^4_i = x_{i4} - x_{(i-1)4}$ is the jump length of $x_4(\cdot)$ at $v = w_i$. Since $(w_2, 2)$ and $(w_4, 3)$ are consecutive points on $\textsf{LE}(\A)$, we have $\delta^j_3 = 0$ for all $j \geq 2$, and thus $x_4(v)$ does not increase at $v = w_3$.}
\label{fig:deadlines_revenue_is_a_distribution}
\end{figure}

%for all $(v,d) = (w_{i}, j)$ such that $\sum_{j' \ge j} \underline{F}_{\A_{j'}}(w_i) = 0$,  suppose $\delta^j_{i} > 0$. Then 
%\begin{itemize}
%    \item $(w_{i},r) \in \textsf{LE}(\A)$ for $r < j$ and
%    \item $\delta^{j'}_{i} = \delta^r_{i}$ for all $j' \ge r$.
%\end{itemize}  
%Finally, if $(w_i, j)$ and $(w_{i'},j)$ are consecutive points on $\textsf{LE}(\A)$ where $i < i'$, then $\delta^{j'}_q = 0$ for all $q \in (i, i')$ and $j' \ge j$. %it holds that $\delta^{j'}_{i} = \delta^j_i$ for all $j' \in [j,k]$.

%As before, the non-trivial part of the above statement is the requirement that the prices lie in $\supp(\A)$, in particular, the structure of the randomization over values in $\textsf(LE(\A))$. Our subsequent proof (as well as the result in~\cite{RubinsteinW,cai2021duality}) implies that Theorem~\ref{theorem:deadlines_optimal_revenue_is_a_distribution_over_posted_price_revenues} also holds for $\textsf{Deadlines}(\A)$. This extends part of the characterization in~\cite{Fiat16} to discrete priors. We note that the characterization in~\cite{Fiat16} for continuous priors is much more detailed; however, we do not need to develop that characterization for the analysis of our signaling scheme.

\subsection{Revenue Preservation in Algorithm~\ref{alg:main_algorithm_deadlines}}
\label{subsec:deadlines_revenue_preservation}

We now prove that Algorithm~\ref{alg:main_algorithm_deadlines} preserves the expected seller revenue, following the same roadmap as in Section~\ref{subsec:public_revenue_preservation}: We argue that the rate of decrease of revenue of \textsf{DeadlinesContinuous}$(\D(t))$ equals the optimal revenue of the signal $\S(t)$.

\paragraph{Convexity of Revenue in $\textsf{DeadlinesContinuous}(\D(t))$.} 
At any time $t$, let $\Tilde{\R}(t)$ denote the optimal revenue of $\textsf{DeadlinesContinuous}(\D(t))$; since \textsf{DeadlinesContinuous}$(\D(0))$ = \textsf{DeadlinesContinuous}$(\D)$ has more constraints than \textsf{Deadlines}$(\D)$, we have $\Tilde{\R}(0) \leq \R(\D)$. Also, $\Tilde{\R}(1) = 0$. 

Similar to Section~\ref{subsec:public_revenue_preservation}, we consider any time interval $(t_{h-1}, t_h)$ in which $\supp(\D(t))$, and hence $\S(t)$, does not change for $t \in (t_{h-1}, t_h)$, and let the signal be $\S$ and its corresponding probability matrix to be $\mathbf{S} = [s_{ij}]$ over $t \in (t_{h-1}, t_h)$. Then for any feasible solution $\M \coloneqq (\{p_{j}(\cdot)\}, \{x_{j}(\cdot)\})$, its revenue in $\textsf{DeadlinesContinuous}(\D(t))$ is given by $\Tilde{\R}_{\M}(t) \coloneqq \sum_{j=1}^{k} \sum_{i=1}^{n} f_{ij}(t) \cdot p_j(v_i).$ Fixing this $\M$, for each $t \in (t_{h-1}, t_h)$ we have

\begin{align}
\dv{\Tilde{\R}_{\M}(t)}{t} =  \dv{t} \left( \sum\limits_{j=1}^{k} \sum\limits_{i=1}^{n} f_{ij}(t) \cdot p_j(v_i) \right) = - \sum\limits_{j=1}^{k} \sum_{v_i \in \values(\D_j(t))} s_{ij} \cdot p_j(v_i),
\label{eq:deadlines_revenue_decrease_in_residual_prior}
\end{align} 
where the last equality uses Eq. (\ref{eq:deadlines_differential_equation}) and the fact that $s_{ij} = 0$ for any $v_i \notin \values(\D_j(t))$ for all $j \in [1,k]$. Therefore, for each $\M$, its revenue $\Tilde{\R}_{\M}(t)$ decreases linearly with time. Further, since the constraints of \textsf{DeadlinesContinuous}$(\D(t))$ do not change with time, $\M$ remains feasible at all points in time $t \in [0,1]$ (the duration of the algorithm), and its revenue $\Tilde{\R}_{\M}(t)$ is continuous at all $t \in [0,1]$ since each $f_{ij}(t)$ changes continuously.  Since $\Tilde{\R}(t) = \max_{\M} \Tilde{\R}_{\M}(t)$, we have:

\begin{lemma} \label{lemma:deadlines_revenue_is_convex_for_every_mechanism}
For any interval $(t_{h-1},t_{h})$ where $\supp(\D(t))$ does not change, the function $\Tilde{\R}(t)$ is convex. Further, the function $\Tilde{\R}(t)$ is continuous for all $t \in [0,1]$,  the entire duration of the algorithm.
\end{lemma}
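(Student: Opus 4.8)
The plan is to essentially repeat the argument behind Lemma~\ref{lemma:public_revenue_is_convex_for_every_mechanism}, since the only feature that proof uses — and the one that still holds here — is that the constraint set of $\textsf{DeadlinesContinuous}(\A)$ (the same-deadline IC, inter-deadline IC, IR and feasibility constraints) does not involve the prior $\A$ at all: the prior enters only through the objective coefficients $f_{ij}(t)$, the residual mass on $(v_i,j)$ at time $t$. So I would first fix an arbitrary feasible solution $\M = (\{p_j(\cdot)\},\{x_j(\cdot)\})$ and observe, as already recorded in Eq.~(\ref{eq:deadlines_revenue_decrease_in_residual_prior}), that on an interval $(t_{h-1},t_h)$ where $\values(\D(t))$ and hence $\mathbf{S}(t)=[s_{ij}]$ is constant, the revenue $\Tilde{\R}_{\M}(t)=\sum_{j}\sum_i f_{ij}(t)\,p_j(v_i)$ has the time-independent derivative $-\sum_{j}\sum_i s_{ij}\,p_j(v_i)$, i.e.\ $\Tilde{\R}_{\M}(\cdot)$ is affine there; and that $\M$ stays feasible for $\textsf{DeadlinesContinuous}(\D(t))$ for \emph{every} $t\in[0,1]$, with $t\mapsto\Tilde{\R}_{\M}(t)$ continuous on all of $[0,1]$ because each $f_{ij}(t)$ is piecewise-linear, hence continuous, by Eq.~(\ref{eq:deadlines_differential_equation}).

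Then I would take the pointwise supremum $\Tilde{\R}(t)=\sup_{\M}\Tilde{\R}_{\M}(t)$. On a fixed-support interval this is a supremum of affine functions of $t$, hence convex, which is the first claim. For global continuity on $[0,1]$ I would pass through a uniform bound: the objective of $\textsf{DeadlinesContinuous}$ depends only on the finitely many coordinates $\{p_j(v_i),x_j(v_i)\}$, whose feasible projections form a bounded polyhedron $P$ that is \emph{independent of} $t$ — bounded because $x_j(v_i)\in[0,1]$ and, by the IR constraint, $p_j(v_i)\le v_i\,x_j(v_i)\le v_n$ (together with the IC constraints bounding the prices below). Writing $\Tilde{\R}(t)=\max_{z\in P}\langle c(t),z\rangle$ with $c(t)=[f_{ij}(t)]$, the value function $c\mapsto\max_{z\in P}\langle c,z\rangle$ is convex and, being a finite maximum of linear functions, globally Lipschitz, while $t\mapsto c(t)$ is continuous (piecewise-linear); composing gives that $\Tilde{\R}(t)$ is convex on each fixed-support piece and continuous on all of $[0,1]$.

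I expect the only point needing any care — and it is exactly the point left implicit in the public-budget proof, which just speaks of ``revenue functions of bounded range'' — is the claim that the relevant projected feasible polyhedron $P$ is bounded, i.e.\ that the prices in a feasible auction lie in a bounded range, so that each $\Tilde{\R}_{\M}$ has bounded range and bounded slope; this is what legitimizes passing continuity through the (a priori infinite) supremum. This follows from IR upper-bounding $p_j(v_i)$ by $v_n$ together with an easy lower bound (e.g.\ $p_j(\cdot)\ge 0$ is without loss of generality), and everything else is bookkeeping identical to Section~\ref{subsec:public_revenue_preservation}. Note that, unlike the subsequent Theorem~\ref{theorem:deadlines_optimal_revenue_is_a_distribution_over_posted_price_revenues}, this lemma needs neither the lower-envelope structure nor the Envelope Theorem.
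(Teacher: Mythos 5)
Your argument is the same as the paper's: fix a feasible $\M$, note that $\Tilde{\R}_{\M}(t)$ is affine on fixed-support intervals via Eq.~(\ref{eq:deadlines_revenue_decrease_in_residual_prior}), note that $\M$ stays feasible for all $t$ since the constraint set of $\textsf{DeadlinesContinuous}$ does not involve the prior, and pass to the pointwise maximum. The convexity claim then follows cleanly. Where you go beyond the paper is in scrutinizing the passage from ``each $\Tilde{\R}_{\M}(t)$ is continuous'' to ``$\Tilde{\R}(t)=\sup_{\M}\Tilde{\R}_{\M}(t)$ is continuous,'' which the paper states without justification; you are right that this step is not automatic and deserves an argument.

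However, the specific argument you offer for it contains a flaw. The projection of the feasible region of $\textsf{DeadlinesContinuous}(\A)$ onto the coordinates $\{p_j(v_i),x_j(v_i)\}$ is \emph{not} a bounded polyhedron. IR gives only the upper bound $p_j(v)\le v\,x_j(v)\le v_n$; nothing in IC or IR bounds the prices below. Concretely, $p_j(\cdot)\equiv -M$, $x_j(\cdot)\equiv 0$ is feasible for every $M\ge 0$, so the polyhedron contains a ray and your ``finite maximum of linear functions, hence globally Lipschitz'' step does not directly apply. Your alternative remark, that $p_j(\cdot)\ge 0$ is without loss of generality, is the right idea but is not immediate either: the naive shift $p_j\mapsto p_j-p_j(0)$, which fixes $u_j(0)=0$ and forces $p_j(v)=v\,x_j(v)-\int_0^v x_j\ge 0$, can violate the inter-deadline IC constraint $u_j(v)\ge u_{j-1}(v)$ when $p_j(0)<p_{j-1}(0)$. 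One clean way to repair this is to observe that, by Myerson's lemma, the LP is equivalent to the reformulation $\text{D3}(\A)$/$\text{D4}(\A)$ of Appendix~\ref{apdx:proof_of_deadlines_optimal_revenue_for_signals}, whose decision variables $\{x_{ij}\}$ range over the genuinely bounded polytope $[0,1]^{mk}$ cut by the inter-deadline and monotonicity constraints, with the objective still linear in both $\{x_{ij}\}$ and the coefficients $f_{ij}(t)$; your finite-maximum/Lipschitz argument then applies verbatim. Alternatively, one can invoke the standard fact that the LP value function $c\mapsto\max_{z\in P}\langle c,z\rangle$ of a (possibly unbounded) polyhedron $P$ is polyhedral convex and hence continuous on its effective domain, which here contains the curve $c(t)=[f_{ij}(t)]$ since $\Tilde{\R}(t)$ is finite for all $t$. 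Either repair closes the gap; as written, the boundedness claim is false.
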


\paragraph{Revenue of Signals.}  We now quantify the rate of decrease of $\Tilde{\R}(t)$ in the deadlines setting.

\begin{theorem} \label{theorem:deadlines_revenue_preservation}
In any interval $(t_{h-1},t_{h})$ where $\supp(\D(t))$ does not change, the function $\Tilde{\R}(t)$ is linear, and $\dv{\Tilde{\R}(t)}{t} = -\R(\S(t))$, where $\R(\S(t))$ is the optimal revenue of \textsf{Deadlines}$(\S(t))$. Furthermore, at the end of Algorithm~\ref{alg:main_algorithm_deadlines}, it holds that $\R(\Z^{\ast}_{\D}) = \R(\D)$.
\end{theorem}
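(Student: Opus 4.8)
The plan is to mirror the proof of Theorem~\ref{theorem:public_revenue_preservation} essentially line for line, substituting the FedEx characterization (Theorem~\ref{theorem:deadlines_optimal_revenue_is_a_distribution_over_posted_price_revenues}) for the public-budget one. First I would fix an interval $(t_{h-1},t_h)$ on which $\values(\D(t))$ --- and hence $\S(t)=\S$ and the rate matrix $\mathbf{S}(t)=\mathbf{S}$ --- is constant, writing $\values(\D(t)) = \{w_1 < \cdots < w_m\}$ and letting $(w_{a_1},r_1),\ldots,(w_{a_\ell},r_\ell)$ enumerate $\textsf{LE}(\D(t)) = \supp(\S)$ with $a_1 < \cdots < a_\ell$; by Definition~\ref{def:lower_envelope} the deadlines along the lower envelope are weakly increasing in value, so $r_1 \le \cdots \le r_\ell$, and $a_1 = 1$ with $w_{a_1} = v_{\min}(\S)$. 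I then pick any optimal solution $\M^\ast$ of $\textsf{DeadlinesContinuous}(\D(t))$; since the LP constraints do not depend on $t'$, $\M^\ast$ stays feasible for all $t' \in [0,1]$, and by Theorem~\ref{theorem:deadlines_optimal_revenue_is_a_distribution_over_posted_price_revenues} its revenue can be written as $\Tilde{\R}_{\M^\ast}(t') = \sum_{j=1}^k \sum_{i'=1}^m \delta^j_{i'}\, w_{i'} \big(\sum_{i:\,v_i \ge w_{i'}} f_{ij}(t')\big)$ for fixed weights $\{\delta^j_{i'}\}$ obeying the three lower-envelope properties of that theorem.

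The heart of the argument is to show $\dv{\Tilde{\R}_{\M^\ast}(t')}{t'} = -\R(\S)$. Differentiating and using $\dv{\mathbf{F}}{t'} = -\mathbf{S}$ gives $\dv{\Tilde{\R}_{\M^\ast}(t')}{t'} = -\sum_j \sum_{i'} \delta^j_{i'} w_{i'} \overline{G}_j(w_{i'})$, where $\overline{G}_j(w) \coloneqq \Pr_{(v,d)\sim\S}[v \ge w,\ d = j] = \sum_{i:\,v_i\ge w} s_{ij}$. I would then argue that only the lower-envelope positions $i' = a_p$ contribute. If $a_p < i' < a_{p+1}$ lies strictly between consecutive lower-envelope points, then $\overline{G}_j(w_{i'})\ne 0$ forces a support point of $\S$ of value $\ge w_{i'} > w_{a_p}$, hence of deadline $\ge r_{p+1}\ge r_p$, so $j\ge r_p$; but the second lower-envelope property then makes $\delta^j_{i'}=0$. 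Terms with $i'>a_\ell$ vanish since $\overline{G}_j(w_{i'})=0$ there. For $i' = a_p$, again $\overline{G}_j(w_{a_p})\ne 0$ forces $j\ge r_p$, and the first lower-envelope property gives $\delta^j_{a_p}=\delta^{r_p}_{a_p}$ for all such $j$; write $\delta_p \coloneqq \delta^{r_p}_{a_p}$, and note the third property gives $\sum_p \delta_p = 1$ (since $\delta^k_{a_p}=\delta_p$). Summing over $j\ge r_p$ and using that $\S$, being supported on $\textsf{LE}(\D(t))$, places mass on values $\ge w_{a_p}$ only at deadlines $\ge r_p$ (Definition~\ref{def:lower_envelope}; cf.\ Observation~\ref{observation:ele_signal_property}), together with the equal-revenue identity $\Pr_{(v,d)\sim\S}[v\ge w_{a_p}]\cdot w_{a_p} = v_{\min}(\S)$, yields $\sum_{j\ge r_p}\overline{G}_j(w_{a_p}) = \Pr_{(v,d)\sim\S}[v\ge w_{a_p}] = v_{\min}(\S)/w_{a_p}$. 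Hence $\dv{\Tilde{\R}_{\M^\ast}(t')}{t'} = -\sum_p \delta_p w_{a_p}\cdot v_{\min}(\S)/w_{a_p} = -v_{\min}(\S) = -\R(\S)$, the last equality by Lemma~\ref{lemma:deadlines_optimal_auction}.

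Once this is established, the remainder transfers verbatim from the public case. By Lemma~\ref{lemma:deadlines_revenue_is_convex_for_every_mechanism}, $\Tilde{\R}(t') = \max_\M \Tilde{\R}_\M(t')$ is a maximum of linear functions of bounded range, so the Envelope Theorem (Theorem~2 of~\cite{Milgrom}) gives that $\Tilde{\R}$ is differentiable on $(t_{h-1},t_h)$ with $\dv{\Tilde{\R}(t)}{t} = \dv{\Tilde{\R}_{\M^\ast}(t)}{t} = -\R(\S(t))$, hence linear there. Since $\S(t)$ changes only finitely often, $\dv{\Tilde{\R}(t)}{t}$ is piecewise constant with finitely many jumps and thus Riemann integrable; using $\Tilde{\R}(0) = \Tilde{\R}(\D)$, $\Tilde{\R}(1) = 0$, and $\weight_h = t_h - t_{h-1}$, I obtain $\Tilde{\R}(\D) = \int_0^1 \R(\S(t))\dd{t} = \sum_h \weight_h \R(\S_h) = \R(\Z^\ast_\D)$. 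Finally, running any fixed optimal auction for $\textsf{Deadlines}(\D)$ on every signal and invoking Bayes plausibility (Observation~\ref{observation:deadlines_plausibility_of_signals}) shows $\R(\Z^\ast_\D) \ge \R(\D)$, while $\textsf{DeadlinesContinuous}(\D)$ merely adds constraints to $\textsf{Deadlines}(\D)$, so $\Tilde{\R}(\D) \le \R(\D)$; chaining these, $\R(\D) \ge \Tilde{\R}(\D) = \R(\Z^\ast_\D) \ge \R(\D)$, forcing all to be equal.

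I expect the second paragraph to be the main obstacle: threading the per-deadline breakpoint weights $\delta^j_{i'}$ through the lower-envelope structure and collapsing the sum over deadlines is precisely where the two-dimensional (value, deadline) nature of the FedEx problem must be confronted, and it uses all three structural properties of the optimal auction from Theorem~\ref{theorem:deadlines_optimal_revenue_is_a_distribution_over_posted_price_revenues} in an essential way. Everything around it --- convexity and continuity of $\Tilde{\R}$, the Envelope Theorem step, the integration, and the ``seller ignores the signals'' relaxation argument --- is a routine adaptation of the public-budget proof and should go through unchanged.
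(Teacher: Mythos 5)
Your proof follows the paper's approach essentially step for step: the same reduction to the decrease rate of a single optimal $\M^\ast$, the same case analysis driven by the three lower-envelope properties of Theorem~\ref{theorem:deadlines_optimal_revenue_is_a_distribution_over_posted_price_revenues}, the same appeal to the Envelope Theorem, and the same integration plus ``seller ignores the signals'' argument at the end. Your explicit enumeration $(w_{a_p},r_p)$ of $\textsf{LE}(\D(t))$ and the $\overline{G}_j$ notation are a cleaner packaging of the same computation the paper carries out inline, and the one small slip --- saying the interval is where $\values(\D(t))$ is constant rather than $\supp(\D(t))$, which is what actually pins down $\textsf{LE}$ and hence $\S(t)$ --- mirrors a wording imprecision already in the paper and does not affect the argument.
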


\begin{proof}
Fix some $t \in (t_{h-1},t_{h})$, and assume $\supp(\D(t))$ does not change in $(t_{h-1},t_{h})$ and thus $\S(t) = \S$. Let $\values(\D(t)) = \{w_1, w_2, \ldots, w_m\}$, where $w_1 < w_2 < \cdots < w_m$. Recall that $\S$ is an equal revenue lower envelope distribution supported on $\textsf{LE}(\D(t))$ (see Definition~\ref{def:ele}), and thus $\values(\S) \subseteq \values(\D(t))$. By Lemma~\ref{lemma:deadlines_optimal_auction}, for all $w_i \in \values(\S)$ we have: %every revenue maximizing auction for $\S$ achieves a revenue of $v_{\min}(\S)$, and  we have
\begin{align}
    \Pr_{(v,d) \sim \S}[v \geq w_i] \cdot w_i = v_{\min}(\S) = \R(\S). \label{eq:deadlines_every_price_in_signal_raises_the_optimal_revenue}
\end{align}

Let $\M^\ast$ be the (not necessarily unique) revenue maximizing solution to \textsf{DeadlinesContinuous}$(\D(t))$. At any time $t' \in (t_{h-1},t_{h})$,  the function $\Tilde{\R}_{\M^\ast}(t')$ -- the revenue of $\M^\ast$ over  $\D(t')$ -- is linearly decreasing. We now calculate the rate of decrease at time $t'$. By Theorem~\ref{theorem:deadlines_optimal_revenue_is_a_distribution_over_posted_price_revenues}, for each deadline $d = j$, the revenue $\Tilde{\R}(t)$ achieved by $\M^\ast$ in \textsf{DeadlinesContinuous}$(\D(t))$ is a weighted combination of the revenue for posted prices $\{w_1, \ldots, w_m\}$ with weights $\delta^j_1, \ldots, \delta^j_m \in [0,1]$, where $\sum_{i=1}^{m} \big( \mathbbm{1}^{\textsf{LE}(\D(t))}_{i} \cdot \delta^k_{i} \big) = 1$. Therefore, %; furthermore, for all $(w_a, r) \in \supp(\S)  = \textsf{LE}(\D(t))$, we have $\delta^j_a = \delta^r_a$ for all $j \in [r,k]$. %By Eq. (\ref{eq:deadlines_revenue_decrease_in_residual_prior}),
\begin{align*}
    \dv{\Tilde{\R}_{\M^\ast}(t')}{t'} &= \dv{t'} \left( \sum\limits_{j=1}^{k} \Big( \Pr_{(v,d) \sim \D(t')}[d = j] \cdot  \sum\limits_{i'=1}^{m} \big( \delta^j_{i'} \cdot w_{i'} \cdot \overline{F}_{\D_j(t')}(w_{i'}) \big) \Big) \right)\\ % \tag*{(by Theorem~\ref{theorem:deadlines_optimal_revenue_is_a_distribution_over_posted_price_revenues})}\\
    &= \dv{t'} \left( \sum\limits_{j=1}^{k} \sum\limits_{i'=1}^{m} \big( \delta^j_{i'} \cdot w_{i'} \cdot \sum_{i: v_i \geq w_{i'}} f_{ij}(t') \big) \right) = \sum\limits_{j=1}^{k} \sum\limits_{i'=1}^{m} \left( -\delta^j_{i'} \cdot w_{i'} \cdot \sum_{i: v_i \geq w_{i'}} s_{ij}(t') \right) \tag*{(using $\dv{\mathbf{F}}{t'} = -\mathbf{S})$}\\
    &= -\sum\limits_{i'=1}^{m} \left( w_{i'} \cdot \sum\limits_{j=1}^{k}  \Big( \delta^j_{i'} \cdot \sum_{i: v_i \geq w_{i'}} s_{ij}(t') \Big) \right) \tag*{($\star$)}.
\end{align*}

Consider some $w_{i'} \notin \values(\S)$. Suppose there exist some $a < i' < b$ and $r \leq r'$ such that $(w_a, r), (w_b, r') \in \textsf{LE}(\D(t)) = \supp(\S)$ are consecutive points in $\textsf{LE}(\D(t))$. By Theorem~\ref{theorem:deadlines_optimal_revenue_is_a_distribution_over_posted_price_revenues}, this implies $\delta^j_{i'} = 0$ for all $j \geq r$. Also, for all $v_i \geq w_{i'}$ we have $s_{ij}(t') = 0$ for all $j < r$. Otherwise, there is no $v_i > w_{i'}$ such that $v_i \in \supp(\S)$; in this case we have $s_{ij}(t') = 0$ for all $v_i \geq w_{i'}$ and all $j \in [1,k]$. In both cases, the summation $\sum_{j=1}^{k}  \Big( \delta^j_{i'} \cdot \sum_{i: v_i \geq w_{i'}} s_{ij}(t') \Big)$ evaluates to 0. 

On the other hand, for each $w_{i'} \in \values(\S)$, there is a unique $r$ such that $(w_{i'}, r) \in \textsf{LE}(\D(t)) = \supp(\S)$, which (by Theorem~\ref{theorem:deadlines_optimal_revenue_is_a_distribution_over_posted_price_revenues}) implies $\delta^j_{i'} = \delta^k_{i'}$ for all $j \in [r,k]$. Since for all $v_i \geq w_{i'}$ it still holds that $s_{ij}(t') = 0$ for all $j < r$, we have
\[ \sum_{j=1}^{k}  \Big( \delta^j_{i'} \cdot \sum_{i: v_i \geq w_{i'}} s_{ij}(t') \Big) = \sum_{j=r}^{k} \Big( \delta^j_{i'} \cdot \sum_{i: v_i \geq w_{i'}} s_{ij}(t') \Big) = \delta^k_{i'} \cdot \sum_{i: v_i \geq w_{i'}} \sum_{j=r}^{k} s_{ij}(t') = \delta^k_{i'} \cdot \Pr_{(v,d) \sim \S}[v \geq w_{i'}].\]
Combining the above with Eq. (\ref{eq:deadlines_every_price_in_signal_raises_the_optimal_revenue}), the expression ($\star$) evaluates to
\begin{align*}
    %-\sum\limits_{i'=1}^{m} \left( w_{i'} \cdot \sum\limits_{j=1}^{k}  \Big( \delta^j_{i'} \cdot \sum_{i: v_i \geq w_{i'}} s_{ij}(t') \Big) \right)  
    & \, -\sum\limits_{i: \, w_i \in \values(\S)} \left( \delta^k_i \cdot w_i \cdot \Pr_{(v,d) \sim \S}[v \geq w_i] \right) = \   -\sum\limits_{i: \, w_i \in \values(\S)} \big( \delta^k_i \cdot \R_{\S} \big)  = \   -\R({\S}), 
\end{align*}
where the last equality is by $\sum_{i: \, w_i \in \values(\S)} \delta^k_i = \sum_{i=1}^{m} \big( \mathbbm{1}^{\textsf{LE}(\D(t'))}_{i} \cdot \delta^k_{i} \big) = 1$.

Using Lemma~\ref{lemma:deadlines_revenue_is_convex_for_every_mechanism} and again applying the Envelope Theorem~\cite{Milgrom}, the above implies $\Tilde{\R}(t)$ is linear and hence differentiable in the interval $t \in (t_{h-1},t_{h})$,  and $\dv{\Tilde{\R}(t)}{t} = - \R(\S) =  -\R(\S(t))$.

Note that $\S(t)$ and thus $-\R(\S(t))$ changes only when some $(v,d)$-type in the residual prior $\D(t)$ is exhausted; this happens finitely many times throughout the process. Hence $\dv{\Tilde{\R}(t)}{t} = -\R(\S(t))$ is a piecewise constant function with finitely many discontinuities, and is thus Riemann integrable. Similar to that in the proof of Theorem~\ref{theorem:public_revenue_preservation}, we have $\Tilde{\R}(0) = \Tilde{\R}(\D)$, $\Tilde{\R}(1) = 0$, and every signal $\S_h$ in Algorithm~\ref{alg:main_algorithm_deadlines} is associated with weight $\weight_h = t_h - t_{h-1}$. Therefore Eq. (\ref{eq:public_total_revenue_in_signals_is_equal_to_continuous_objective}) holds exactly as is. 

We observe that it is still feasible for the seller to ignore the signals by implementing some revenue optimal auction for $\textsf{Deadlines}(\D)$ as the auction for each signal $\S_h$ and achieve $\R(\D)$ as the total revenue. Thus, the theorem follows analogously to the public budget case.
\end{proof}

Theorem~\ref{theorem:deadlines_revenue_preservation} above shows that the process in Algorithm~\ref{alg:main_algorithm_deadlines} preserves the seller's expected revenue. By Lemma~\ref{lemma:deadlines_efficiency_of_scheme}, $\Z^{\ast}_{\D}$ also achieves efficiency, and thus maximizes social welfare. Hence $\Z^{\ast}_{\D}$ must maximize the expected consumer surplus, and the analog of Theorem~\ref{theorem:public_buyer_optimality} on buyer optimality holds for the deadlines case as well:

\begin{theorem}[Buyer optimality for deadlines] \label{theorem:deadlines_buyer_optimality}
In the private deadlines setting, there exists a signaling scheme $\Z^{\ast}_{\D}$ for prior $\D$ that guarantees: (1)  $\W(\Z^{\ast}_{\D}) = \W^\ast(\D)$; (2) $\R(\Z^{\ast}_{\D}) = \R(\D)$; and  (3) $\CS(\Z^{\ast}_{\D}) = \opt(\D) = \W^\ast(\D) - \R(\D)$.
\end{theorem}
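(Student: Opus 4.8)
The plan is to obtain all three guarantees by combining the two structural facts already established for the output $\Z^\ast_\D$ of Algorithm~\ref{alg:main_algorithm_deadlines}, exactly as in the proof of Theorem~\ref{theorem:public_buyer_optimality}. First I would record that $\Z^\ast_\D$ is a valid signaling scheme: Observation~\ref{observation:deadlines_plausibility_of_signals} gives Bayes plausibility $\sum_{h}\weight^\ast_h\S^\ast_h=\D$, so the weights $\weight^\ast_h$ sum to $1$, the seller genuinely faces posterior $\S^\ast_h$ with probability $\weight^\ast_h$, and the accounting identity $\CS(\Z^\ast_\D)+\R(\Z^\ast_\D)=\W(\Z^\ast_\D)$ holds.

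For efficiency (property 1), I would invoke Lemma~\ref{lemma:deadlines_efficiency_of_scheme}: each signal $\S^\ast_h$ admits a revenue-optimal auction that posts the price $v_{\min}(\S^\ast_h)$ and always sells the item, so $\W(\S^\ast_h)=\E_{(v,d)\sim\S^\ast_h}[v]$ for every $h$; summing with weights and using Bayes plausibility gives $\W(\Z^\ast_\D)=\sum_h\weight^\ast_h\,\E_{(v,d)\sim\S^\ast_h}[v]=\E_{(v,d)\sim\D}[v]=\W^\ast(\D)$. Property (2), namely $\R(\Z^\ast_\D)=\R(\D)$, is precisely the second assertion of Theorem~\ref{theorem:deadlines_revenue_preservation}, so it is taken as given. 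Property (3) then follows mechanically: $\CS(\Z^\ast_\D)=\W(\Z^\ast_\D)-\R(\Z^\ast_\D)=\W^\ast(\D)-\R(\D)=\opt(\D)$, and since Section~\ref{subsec:intermediary_model} already argues that $\opt(\D)$ upper-bounds $\CS(\Z)$ for every signaling scheme $\Z$ (because $\R(\Z)\ge\R(\D)$ always, as the seller may ignore $\Z$, and $\W(\Z)\le\W^\ast(\D)$ trivially), the scheme $\Z^\ast_\D$ is buyer optimal.

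I do not expect the assembly step to present any obstacle; all the difficulty lives in the two inputs, especially Theorem~\ref{theorem:deadlines_revenue_preservation}. That theorem is where the design of the equal-revenue-lower-envelope signals pays off: running the differential equation~Eq.~(\ref{eq:deadlines_differential_equation}) at rate $\mathbf{S}(t)$ is exactly what forces the rate of decrease of $\Tilde{\R}(t)$ to equal $\R(\S(t))$, via the Envelope Theorem applied to $\Tilde{\R}(t)=\max_\M\Tilde{\R}_\M(t)$ together with the posted-price characterization of Theorem~\ref{theorem:deadlines_optimal_revenue_is_a_distribution_over_posted_price_revenues}. The one place warranting a line of care is checking that ``$\R(\S(t))$'' in Theorem~\ref{theorem:deadlines_revenue_preservation} and the per-signal revenue $\R(\S_h)$ in property~(1) refer to the same quantity (the revenue of the posted-price auction at $v_{\min}(\S_h)$), but that is immediate from Lemma~\ref{lemma:deadlines_optimal_auction}. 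Hence the final theorem is just the conjunction of Lemma~\ref{lemma:deadlines_efficiency_of_scheme} and Theorem~\ref{theorem:deadlines_revenue_preservation}.
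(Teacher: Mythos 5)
Your proof is correct and takes essentially the same route as the paper: the theorem is stated as the direct consequence of Lemma~\ref{lemma:deadlines_efficiency_of_scheme} (efficiency) and Theorem~\ref{theorem:deadlines_revenue_preservation} (revenue preservation), with the consumer-surplus claim following from the identity $\CS = \W - \R$ and the upper bound $\opt(\D)$ established in Section~\ref{subsec:intermediary_model}. Your additional remarks on Bayes plausibility and the equality of the two revenue expressions are sound but not strictly needed, as they are already built into the definitions.
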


\section{Impossibility of Optimal Signaling for Private Budgets}
\label{section:private}
We now consider the setting with private budgets. Recall the program \textsf{Budgets}$(\D)$ from Section~\ref{section:model}, where the type space has valuation and budget, with the IR constraint being interim. %Though we can implement the signaling scheme in Section~\ref{section:deadlines}, the part that fails is Lemma~\ref{lemma:deadlines_efficiency_of_scheme}: the optimal auction on the signals is no longer efficient if the private budget binds. We now show that this issue cannot be circumvented, and the positive results in the previous sections (Theorems~\ref{theorem:public_buyer_optimality} and~\ref{theorem:deadlines_buyer_optimality}) do not extend to the private budget setting. 
We show that there are instances with just two budget types in which achieving full social welfare via signaling requires sacrificing almost all consumer surplus. 

\begin{restatable}[]{theorem}{privatecounterexampleefficient}
%\begin{theorem} 
\label{theorem_private_efficiency_sacrifices_consumer_surplus}
For $n = k = 2$, for any given constant $\epsilon > 0$, there exists a prior $\D$ in which any signaling scheme $\Z$ that achieves efficiency (i.e., item always sells) has $\CS(\Z) \leq \epsilon \cdot \opt(\D)$, where $\opt(\D)= \W^\ast(\D) - \R(\D)$ is the maximum achievable consumer surplus with respect to prior $\D$.
%\end{theorem}
\end{restatable}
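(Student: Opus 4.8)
The plan is to build a single two-value, two-budget prior $\D$ — with the budget genuinely binding in the optimal auction, i.e.\ $b_1<v_2$, so that $\D$ falls outside the FedEx regime of Section~\ref{section:deadlines} — for which $\opt(\D)>0$ yet every efficient scheme transfers almost all of $\opt(\D)$ to the seller. The first reduction is the identity already used throughout Sections~\ref{section:public_continuous}--\ref{section:deadlines}: for any scheme $\Z$ that achieves efficiency, $\W(\Z)=\W^\ast(\D)$, hence $\CS(\Z)=\W^\ast(\D)-\R(\Z)$; and since $\R(\Z)=\sum_h\weight_h\R(\S_h)$ with $\R(\S_h)$ the revenue of the (welfare--tie-broken) optimal auction of $\S_h$, efficiency of $\Z$ forces every $\S_h$ to be \emph{admissible}, meaning that the optimal auction of \textsf{Budgets}$(\S_h)$ is efficient. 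So the whole problem reduces to: (i) classify the admissible posteriors $\S$ with $\supp(\S)\subseteq\supp(\D)$, (ii) show that for the chosen $\D$ each admissible $\S$ has revenue $\R(\S)\ge \W^\ast(\S)-\epsilon'\,\W^\ast(\S)$ up to the contribution of vanishingly little probability mass, and (iii) sum over $h$ using Bayes plausibility.

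The technical heart is step~(i)--(ii). Because every support involved has at most two distinct values, a posterior $\S$ is a public-budget--like sub-instance, so the optimal auction of \textsf{Budgets}$(\S)$ is either a posted price that sells to everyone or the ``skim'' auction that excludes low types and extracts the full budget from the high-value type via a lottery of allocation probability $b_j/v_2<1$ (this is exactly the two-option lottery of~\cite{Chawla11}, and a discrete analog of Theorem~\ref{theorem:public_optimal_revenue_is_a_distribution_over_posted_price_revenues} applied to $\S$). Admissibility of $\S$ therefore means precisely that the sell-to-all revenue beats the skim revenue. Writing the sell-to-all revenue as $q_1\Pr_\S[b_1]+q_2\Pr_\S[b_2]$, where $q_j$ is the best uniform price for the budget-$b_j$ population subject to the same-budget and inter-budget IC constraints, a short calculation turns the inequality ``sell-to-all $\ge$ skim'' into an \emph{upper bound on the probability $\S$ may place on the uncensored high-value high-budget type} $(v_2,b_2)$ whenever $\S$ also contains a type that pulls its price down through incentive compatibility (a lower-value type of the same budget, or the same-value type of the lower budget). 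It is this mass bound — together with the fact that the budget-binding type contributes only its unavoidable surplus and value-$v_1$ types contribute none — that makes efficiency costly: an admissible signal that keeps $(v_2,b_2)$ ``cheap'' can do so only on a tiny fraction of that type's mass.

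For step~(iii), sum the per-signal mass bound over all $h$: by Bayes plausibility all but an $\epsilon'$-fraction of the $(v_2,b_2)$-mass of $\D$ must be placed in signals in which it is priced at its full value $v_2$, so $\R(\Z)$ is forced to reconstruct, as seller revenue, precisely the population separation that the unsegmented monopolist could not perform; plugging in the numerical choices of $v_1<v_2$, $b_1<v_2\le b_2$ and the mass distribution (tuned so that $\R(\D)$ is bounded well below $\W^\ast(\D)$, hence $\opt(\D)$ is a constant fraction of $\W^\ast(\D)$) yields $\R(\Z)\ge \W^\ast(\D)-\epsilon\,\opt(\D)$, i.e.\ $\CS(\Z)\le\epsilon\,\opt(\D)$, after absorbing constants into $\epsilon$. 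The remaining ingredient, $\opt(\D)=\W^\ast(\D)-\R(\D)>0$, is a direct computation of the optimal (possibly randomized) auction for $\D$ itself. I expect the main obstacle to be step~(i): pinning down exactly which posteriors are admissible, since the ``threat'' auction that must be dominated is a randomized budget-exhausting lottery rather than a posted price, and the welfare tie-break has to be handled carefully on the boundary between admissible and inadmissible signals — it is precisely this boundary case that controls how much $(v_2,b_2)$-mass an efficient scheme can shield from full pricing, and hence the value of $\epsilon$ achievable.
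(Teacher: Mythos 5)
Your high-level strategy is the right one: pick a two-type instance with $b_1 < v_2$, argue that efficiency forces the signals into a restricted class, and then show that class leaves almost no surplus to the buyer. But there is a concrete gap in step~(i), and it changes the entire complexion of the argument. You classify the candidate optimal auctions for a mixed posterior $\S$ (one containing both types) as ``a posted price that sells to everyone'' versus ``a skim lottery that excludes the low type'', and you take admissibility to mean the first beats the second. This misses the auction that is actually optimal: give the low type a \emph{partial} lottery $x_1 = p_1 = b_1 < 1$ (extracting its full budget with a fractional allocation) and then charge the high type $p_2 = v_2(1 - x_1) + p_1$. This interim-IR lottery is strictly better than sell-to-all whenever the signal has any $(v_2,b_2)$-mass, because it shaves allocation from the low type to relax the high type's IC constraint. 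So for the instance $\D_{M,\delta}$ with $v_1 = 1$, $b_1 = 1-\delta$, $v_2 = b_2 = M$ (which is what the paper uses), \emph{no} mixed signal is admissible at all --- the ``boundary case'' you anticipate handling carefully does not exist. Consequently there is no per-signal mass bound to propagate: any efficient scheme must place the two types into disjoint signals, and the surplus of a fully separating scheme is exactly $(1-\delta)\delta = \opt(\D_{M,\delta})/M$, which you then drive to $\epsilon\cdot\opt$ by taking $M$ large.

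A second issue compounds the first: you appeal to a ``discrete analog of Theorem~\ref{theorem:public_optimal_revenue_is_a_distribution_over_posted_price_revenues} applied to $\S$'' to justify the posted-price/skim dichotomy, but that theorem is proved only for the public-budget LP ($k=1$). A mixed signal here is a genuine private-budget instance with two distinct budgets, so that characterization does not apply; indeed the correct characterization (Lemma~\ref{lemma:private_optimal_revenue_in_prior}'s argument, which does not depend on the mass split) forces $x_1 = p_1 \le b_1 < 1$ in \emph{every} optimal auction for a mixed signal. Once you have that fact, your steps~(ii)--(iii) collapse to a one-line computation rather than a summation of mass bounds, and your ``tuned'' instance becomes the explicit family $\D_{M,\delta}$. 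In short: the construction and endgame are in the spirit of the paper, but your admissibility analysis rests on an incomplete menu of candidate auctions, and fixing it makes the argument strictly simpler rather than harder.
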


Furthermore, a similar proof shows a lower bound of $2$ on approximating the consumer surplus even when it is no longer required that the signaling scheme retains full social welfare.

\begin{restatable}[]{theorem}{privatecounterexamplegeneral}
%\begin{theorem} 
\label{theorem_private_consumer_surplus_constant_gap}
For $n = k = 2$, for any given constant $\epsilon > 0$, there exists a prior $\D$ in which any signaling scheme $\Z$ has $\CS(\Z) \leq (\frac{1}{2} + \epsilon) \cdot \opt(\D)$.
%\end{theorem}
\end{restatable}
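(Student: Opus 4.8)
The plan is to reduce the statement to a statement about the \emph{concave closure} of the consumer-surplus function and then certify the desired bound by a single linear inequality. Every signaling scheme is a Bayes-plausible decomposition $\D=\sum_h\weight_h\S_h$ with each $\S_h$ supported on $\supp(\D)$, and $\CS(\Z)=\sum_h\weight_h\CS(\S_h)$. Hence $\sup_\Z\CS(\Z)$ is exactly the value at $\D$ of the upper concave envelope of $\CS(\cdot)$ over the simplex $\Delta$ of distributions on the (at most four) types of $\supp(\D)$. In particular, for \emph{any} affine functional $L$ on $\Delta$ with $L\ge\CS$ pointwise, every scheme satisfies $\CS(\Z)=\sum_h\weight_h\CS(\S_h)\le\sum_h\weight_h L(\S_h)=L(\D)$. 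So it suffices to produce a prior $\D$ with $n=k=2$ and an affine $L\ge\CS$ on $\Delta$ with $L(\D)\le(\tfrac12+\epsilon)\opt(\D)$.

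To build $\D$ and $L$, I would first record the structure of the revenue-optimal interim-IR auction on distributions supported in a two-value, two-budget set $\{v_1,v_2\}\times\{b_1,b_2\}$ with $v_1<v_2$: (i) a single budget-constrained type $(v,b)$ (with $b<v$) has $\CS=v-b$ under the tie-break (post price $b$, allocate fully); (ii) on any face of types sharing a value, the optimal menu serves the low-budget copy via an all-pay lottery (allocation $b/v<1$) while extracting the full willingness to pay from the high-budget copy, so such faces yield $\CS=0$ in their interior; (iii) more generally, cross-budget IC causes ``leakage'' --- a high-value low-budget type envies a low-value low-budget type's lottery, and a high-budget type envies a low-budget type's below-value posted price --- so the seller cannot in general charge every type $\min(v,b)$, and this is exactly the slack that makes $\R(\D)<\sum_{(v,b)}\Pr_\D[(v,b)]\min(v,b)$ possible and hence makes $\opt(\D)$ strictly larger than what fully revealing the type recovers. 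The adversarial instance is then chosen so this leakage is acute: two values $v_1<v_2$ with $v_2$ large, and two budgets $b_1<b_2$ with $b_1$ close to $v_1$, so that a budget-$b_1$ type's IR-forced allocation leaves no room for the chain of cross-budget envy constraints; the four masses are tuned so that the concave closure over $\Delta$ is $(\tfrac12+o(1))\opt(\D)$. With $\D$ fixed, $L$ is obtained by describing $\CS$ on each face of $\Delta$ (single-type faces: $0$ or $v-b$; same-budget faces are one-dimensional and, when the budget lies below both values, optimal at a single posted price equal to that budget; same-value faces: $0$ in the interior; the remaining mixed faces and the full simplex from the explicit optimal-auction structure together with the crude bound $\CS(\S)\le\W^\ast(\S)-\R(\S)$) and then taking a minimum-value affine majorant. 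The $2$ in the statement is shown to be tight because the minimizing affine majorant still satisfies $L(\D)\ge\tfrac12\opt(\D)$. (Theorem~\ref{theorem_private_efficiency_sacrifices_consumer_surplus} is the same argument restricted to schemes all of whose signals have efficient revenue-optimal auctions, where the majorant can be driven down to $\epsilon\,\opt(\D)$.)

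The main obstacle is the simultaneous design of $\D$ and $L$. Naive two- or three-type instances give $\widehat{\CS}(\D)=\opt(\D)$ exactly: there $\R(\D)$ equals $\sum_{(v,b)}\Pr_\D[(v,b)]\min(v,b)$, fully revealing the type is already buyer optimal, and no gap exists at all. A genuine constant-factor gap requires parameters in the narrow regime where the cross-budget envy chain actually binds, and then one must prove that \emph{no} signal --- however cleverly value and budget information is pooled --- can simultaneously ``protect'' the low-budget and the high-budget portions of the surplus. Establishing this needs a complete case analysis of which optimal-auction structure is active in each region of the four-type simplex (including the intermediate menus interpolating between the posted-price auction and the all-pay auction), and it is precisely the balance between the two incompatible ways a scheme could try to recover the slack that produces the factor exactly $1/2$ rather than some other constant.
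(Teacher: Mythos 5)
Your opening reduction is sound: since $\CS(\Z)=\sum_h\weight_h\CS(\S_h)$ over any Bayes-plausible decomposition, $\sup_\Z\CS(\Z)$ is the concave closure of $\CS(\cdot)$ at $\D$, and any affine majorant $L\ge\CS$ certifies $\CS(\Z)\le L(\D)$. That is a valid way to phrase the bound and is implicitly what the paper does when it writes a small LP over signal weights. However, from there the proposal stalls: you never produce $\D$ or $L$, and the heuristic you give for \emph{where} to look actively points you away from the paper's construction.

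The concrete problem is your claim that ``naive two- or three-type instances give $\widehat{\CS}(\D)=\opt(\D)$ exactly: there $\R(\D)$ equals $\sum_{(v,b)}\Pr_\D[(v,b)]\min(v,b)$, fully revealing the type is already buyer optimal, and no gap exists at all.'' This is false, and it is exactly the instance class the paper uses. The paper's $\D_{M,\delta}$ puts mass only on two types, $(v_1,b_1)=(1,1-\delta)$ with probability $1-\delta$ and $(v_2,b_2)=(M,M)$ with probability $\delta$. Here $\R(\D_{M,\delta})=1-\delta+\delta^2M$ while $\sum_{(v,b)}\Pr[(v,b)]\min(v,b)=(1-\delta)^2+\delta M$, a gap of $\delta(1-\delta)(M-1)>0$; full revelation earns $\CS=\delta(1-\delta)$, which is a factor $M$ off $\opt(\D_{M,\delta})=\delta(1-\delta)M$. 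The mechanism creating the gap is precisely the one you describe in words (the low-budget type's binding budget forces $x_1=p_1<1$, and the cross-type IC lets the high type envy that lottery), but you then mistakenly conclude that one must search in a harder regime. Because only two types carry mass, signals are Bernoulli mixtures of those two, the simplex is one-dimensional, and the ``complete case analysis of which optimal-auction structure is active in each region of the four-type simplex'' you anticipate collapses to a threshold in $g_1/g_2$: the revenue-optimal auction on a signal either coincides with $\M^\ast_{M,\delta}$ (when $g_1\ge(M-1)g_2$, yielding $\CS=(1-\delta)(M-1)g_2$) or pools the low type at $x_1=p_1=0$ (yielding $\CS=0$). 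The paper then fixes $M=2$, writes a 4-variable LP over how mass is split across signal categories, solves it to $\CS\le\delta(2-3\delta)$, and tunes $\delta\to\frac12$ to approach the factor $\tfrac12$. Your concavification framing is compatible with this -- the solved LP \emph{is} the affine-majorant certificate -- but as written the proposal has a genuine gap: no instance, no majorant, and a heuristic that would have steered you past the correct (and much simpler) two-type construction.
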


\subsection{Proof of Theorems~\ref{theorem_private_efficiency_sacrifices_consumer_surplus} and~\ref{theorem_private_consumer_surplus_constant_gap}}
\label{app:private}
Both Theorems~\ref{theorem_private_efficiency_sacrifices_consumer_surplus} and~\ref{theorem_private_consumer_surplus_constant_gap} use the following family of instances.

\begin{restatable}[]{definition}{privateinstance}
%\begin{definition}
\label{def:privateinstance}
For any  $M > 1$ and $\delta < \frac{1}{M}$, let the prior $\D_{M, \delta}$ be supported on $\{(v_1, b_1), (v_2, b_2)\}$, where $v_1 = 1, \, b_1 = 1-\delta$, and $v_2 = b_2 = M$. Let $f_1 = \Pr_{(v,b) \sim {\D_{M, \delta}}}[(v,b) = (v_1, b_1)] = 1 - \delta$, and $f_2 = \Pr_{(v,b) \sim {\D_{M, \delta}}}[(v,b) = (v_2, b_2)] = \delta$. The social welfare of $\D_{M, \delta}$ is thus given by 
\begin{equation}
    \W^\ast({\D_{M, \delta}}) = f_1 \cdot v_1 + f_2 \cdot v_2 = (1 - \delta) + \delta M. \label{eq:private_instance_full_welfare}
\end{equation}
%\end{definition}
\end{restatable}

We now characterize the revenue optimal auction for each $\D_{M, \delta}$. Let $(p_1, x_1)$ and $(p_2, x_2)$ denote the (price, allocation probability) pairs for the two types $(v_1, b_1)$ and $(v_2, b_2)$, respectively. 

\begin{lemma} \label{lemma:private_optimal_revenue_in_prior}
The revenue optimal auction for $\D_{M, \delta}$ (denoted by ${\M^\ast_{M, \delta}}$) has $(p_1, x_1) = (1-\delta, 1-\delta)$ and $(p_2, x_2) = (\delta M + (1-\delta), 1)$, and raises a revenue of $\R({\D_{M, \delta}}) = 1 - \delta + \delta^2 M$. Furthermore, the maximum achievable consumer surplus for $\D_{M, \delta}$ is $\opt({\D_{M, \delta}}) = \delta (1 - \delta) M.$
\end{lemma}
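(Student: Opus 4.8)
The statement has two parts, and the second follows from the first: since $\opt(\D_{M,\delta}) = \W^\ast(\D_{M,\delta}) - \R(\D_{M,\delta})$, substituting $\W^\ast(\D_{M,\delta}) = (1-\delta) + \delta M$ from~\eqref{eq:private_instance_full_welfare} and $\R(\D_{M,\delta}) = 1 - \delta + \delta^2 M$ gives $\opt(\D_{M,\delta}) = \delta M - \delta^2 M = \delta(1-\delta)M$. So the real work is to determine $\R(\D_{M,\delta})$ and an optimal auction, which I would do via a matching upper bound and an explicit construction.

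For the upper bound, note that in any feasible solution of \textsf{Budgets}$(\D_{M,\delta})$ the revenue equals $(1-\delta)p_{11} + \delta p_{22}$, because only the types $(v_1,b_1)$ and $(v_2,b_2)$ carry positive probability. I would bound this using only constraints that hold for every feasible solution: (IR) and (Budgets) at $(v_1,b_1)$ give $p_{11} \le x_{11}$ and $p_{11} \le 1-\delta$; the Same-budget IC constraint for $(i,i',j) = (2,1,1)$ gives $M x_{21} - p_{21} \ge M x_{11} - p_{11}$; and the Inter-budget IC constraint at $(i,j) = (2,2)$ gives $p_{22} \le M x_{22} - (M x_{21} - p_{21}) \le M x_{22} - (M x_{11} - p_{11})$. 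Chaining these with $x_{22} \le 1$, $x_{11} \ge p_{11}$, and the hypothesis $\delta M < 1$, the revenue is at most $p_{11} + \delta M(1 - x_{11}) \le p_{11}(1 - \delta M) + \delta M \le (1-\delta)(1-\delta M) + \delta M = 1 - \delta + \delta^2 M$, the last step using that $1 - \delta M > 0$ so the bound is increasing in $p_{11} \le 1-\delta$.

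For achievability I would exhibit the auction: $(p_{11},x_{11}) = (1-\delta, 1-\delta)$ and $(p_{22},x_{22}) = (\delta M + 1 - \delta,\, 1)$ for the two real types, together with the ``phantom'' bundles $(p_{21},x_{21}) = (1-\delta, 1-\delta)$ (the same bundle as $(v_1,b_1)$) and $(p_{12},x_{12}) = (0,0)$, and check that this solution satisfies every Same-budget IC, Inter-budget IC, IR, Feasibility, and Budgets constraint. Each such check reduces to $M > 1$ or $\delta M < 1$ — for instance $p_{22} \le b_2 = M$ becomes $1-\delta \le M(1-\delta)$, both relevant IR constraints reduce to $(M-1)(1-\delta) \ge 0$, and the Inter-budget IC for type $2$ holds with equality. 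The revenue of this auction is $(1-\delta)^2 + \delta(\delta M + 1 - \delta) = 1 - \delta + \delta^2 M$, matching the upper bound, and its welfare is $(1-\delta)^2 + \delta M$, so combining with the revenue yields the $\opt$ value computed above.

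The only delicate point — and where I expect any real care to be needed — is that \textsf{Budgets}$(\D_{M,\delta})$ carries decision variables for all four value--budget pairs, not just the two on the support's ``diagonal'', and the IC constraints couple the phantom bundles $(p_{21},x_{21})$, $(p_{12},x_{12})$ to the real ones. For the upper bound this is harmless, since we only invoke constraints that any feasible solution must satisfy, and the Same-budget IC at budget $b_1$ only \emph{lower-bounds} $M x_{21} - p_{21}$, which is exactly the direction we need. For achievability the care lies in choosing the phantom bundles so that nothing breaks: the choice $(p_{21},x_{21}) = (p_{11},x_{11})$ makes the Inter-budget IC for type $2$ collapse precisely to $M x_{22} - p_{22} \ge M x_{11} - p_{11}$ — the binding constraint of the upper-bound chain — and $(p_{12},x_{12}) = (0,0)$ then clears every constraint in which it appears because $x_{11} = p_{11}$. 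Once this reduction is in place, the remaining optimization is a four-variable linear program whose optimum is read off directly, and no genuinely hard estimate remains.
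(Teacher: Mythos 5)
Your proof is correct, and it takes a different route from the paper's. The paper writes down a simplified four-variable LP directly (with the single ``diagonal'' IC constraint $v_2 x_2 - p_2 \ge v_2 x_1 - p_1$), then characterizes its optimum structurally: first argue $x_2 = 1$ WLOG by raising $x_2$ and $p_2$ together, then show that for each fixed $p_1$ the optimum sets $x_1 = p_1$ and $p_2 = M(1-p_1)+p_1$, reducing to a one-variable optimization over $p_1 \in [0,1-\delta]$ with objective $\delta M + (1 - \delta M)p_1$. The paper is implicit about the fact that \textsf{Budgets}$(\D_{M,\delta})$ carries eight variables (including the phantom bundles for types $(v_1,b_2)$ and $(v_2,b_1)$), silently relying on the transitivity remark in Section~\ref{subsec:budgets} to justify the collapsed IC constraint and leaving the extension of the optimal solution back to the full LP unstated. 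Your proof is explicit on exactly this point: you derive the matching upper bound from constraints of the full eight-variable LP, and you exhibit the phantom bundles $(p_{21},x_{21}) = (p_{11},x_{11})$ and $(p_{12},x_{12}) = (0,0)$ and verify every constraint, which is the cleanest way to make the reduction airtight. The trade-off is that the paper's characterization (``$x_2 = 1$, $x_1 = p_1$, $p_2 = M(1-p_1)+p_1$'') is reused verbatim in the subsequent case analysis of signals (it notes that ``the proof of Lemma~\ref{lemma:private_optimal_revenue_in_prior} does not use the values of $f_1$ and $f_2$ up to Eq.~(\ref{eq:private_instance_opt_sol_characterization})''), whereas a pure upper-bound-plus-witness argument would need to be re-derived for each signal type. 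If you intend your proof to feed into Theorems~\ref{theorem_private_efficiency_sacrifices_consumer_surplus} and~\ref{theorem_private_consumer_surplus_constant_gap}, you would want to also record the structural characterization of the optimal auction, not just its optimal value.
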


\begin{proof}
The revenue optimal auction for ${\D_{M, \delta}}$ is captured by the following LP:
\begin{align*}
    \max_{p_1, x_1, p_2, x_2} \quad & f_1 \cdot p_1 + f_2 \cdot p_2 & &\\
    \text{s.t.} \quad & v_2 \cdot x_2 - p_2 \geq v_2 \cdot x_1 - p_1, &\, \tag*{(IC)} \\
    & v_i \cdot x_i - p_i \geq 0, &\, i \in \{1,2\}, \tag*{(IR)}\\
    & x_i \in [0,1], &\, i \in \{1,2\}, \tag*{(Feasibility)}\\
    & p_i \leq b_i, &\, i \in \{1,2\}. \tag*{(Budgets)}
\end{align*}

To characterize the optimal auction, we first observe that $x_2$ must be $1$. To see this, notice that given fixed values of $x_1$ and $p_1$, if $x_2 < 1$, then $p_2 < M$ by individual rationality, and it is always favorable to simultaneously increase $x_2$ by $\epsilon$ and $p_2$ by $v_2 \cdot \epsilon$ for some small $\epsilon > 0$. Therefore let $x_2 = 1$. Also notice that the IR and budget constraints for $i = 2$ are implied by other constraints and thus are redundant. Thus we can simplify the LP as
\begin{align*}
    \max_{p_1, x_1, p_2} \quad & f_1 \cdot p_1 + f_2 \cdot p_2 & &\\
    \text{s.t.} \quad & M - p_2 \geq M \cdot x_1 - p_1, &\, \tag*{(IC)} \\
    & x_1 - p_1 \geq 0, &\, \tag*{(IR for $i=1$)}\\
    & x_1, x_2 \in [0,1], &\, \tag*{(Feasibility)}\\
    & p_1 \leq 1-\delta. &\,  \tag*{(Budget for $i=1$)}
\end{align*}

Next, observe that given any fixed $p_1$, the optimal solution will set 
\begin{equation}
    x_1 = p_1, \quad p_2 = M \cdot (1 - p_1) + p_1, \label{eq:private_instance_opt_sol_characterization}
\end{equation}
as this does not violate any constraint and maximizes the objective. The objective is then given by $(1 - \delta) \cdot p_1 + \delta \cdot p_2 = \delta M + (1 - \delta M) \cdot p_1$. Since $\delta < \frac{1}{M}$, $1 - \delta M > 0$, and thus the objective is maximized when $p_1$ is maximized at $1 - \delta$. Therefore, $\M^\ast_{M, \delta}$ has $(p_1, x_1) = (1-\delta, 1-\delta)$ and $(p_2, x_2) = (\delta M + (1-\delta), 1)$, and achieves a revenue of $\R({\D_{M, \delta}}) = \delta M + (1 - \delta M) \cdot (1 - \delta) = 1 - \delta + \delta^2 M$.
By Eq. (\ref{eq:private_instance_full_welfare}), we have $\opt({\D_{M, \delta}}) = \W^\ast({\D_{M, \delta}}) - \R({\D_{M, \delta}}) = \delta (1 - \delta) M$.
\end{proof}

\paragraph{Characterizing the revenue optimal auction for signals.}

Next, we  characterize the optimal auctions for any possible signal $\S \in \Z$, and for each category of $\S$, determine whether there exists some optimal auction that is efficient, i.e., sells the item deterministically. Consider some signal $\S$. Let $g_1 = \Pr_{(v,b) \sim \S}[(v,b) = (1, 1-\delta)]$, and $g_2 = 1 - g_1 = \Pr_{(v,b) \sim \S}[(v,b) = (M, M)]$ be the corresponding probabilities of the types $(1, 1-\delta)$ and $(M,M)$ in $\S$, respectively. We have three cases:
\begin{enumerate}
    \item $g_1 = 1$ and $g_2 = 0$. Then $\S$ contains only the type $(v,b) = (1, 1-\delta)$. This is then a signal with a public budget, and by Lemma~\ref{lemma:public_optimal_auction_for_signals}, there is an optimal auction that posts a price of $1-\delta$ and raises a revenue of $1-\delta$ and is efficient. The consumer surplus achieved by this auction in $\S$ is $1 - (1 - \delta) = \delta$.\footnote{Note that there exist other optimal auctions for this case that do not allocate the item deterministically, but any such auction raises a strictly smaller consumer surplus and thus is not suggested by the signaling scheme.}
    \label{item:private_only_low_type}
    
    \item $g_1 = 0$ and $g_2 = 1$. Similar to the above, $\S$ contains only the type $(v,b) = (M, M)$, and the optimal auction posts a price of $M$ and raises a revenue of $M$, and is efficient. Since the price is equal to the valuation, the consumer surplus is zero.
    \label{item:private_only_high_type}
    
    \item $g_1, g_2 \neq 0$, i.e., both types are included in $\S$. Observe that the proof of Lemma~\ref{lemma:private_optimal_revenue_in_prior} does not use the values of $f_1$ and $f_2$ up to Eq. (\ref{eq:private_instance_opt_sol_characterization}), and thus Eq. (\ref{eq:private_instance_opt_sol_characterization}) holds for the optimal auction for $\S$ as well. Hence in any optimal auction for $\S$ we have $x_1 = p_1 < 1$, i.e., the auction is not efficient.
    \label{item:private_mixed_signals_are_inefficient}
    Fix $p_1$, the objective for this case is then given by $g_1 \cdot p_1 + g_2 \cdot p_2 = g_2 M + \big(g_1 - (M-1)g_2 \big) p_1$. Therefore, we have three cases:
    \begin{enumerate}
        \item \label{item:private_mixed_signals_good_signal}
        If $g_1 \geq (M-1) \cdot g_2$, the above objective is maximized\footnote{If $g_1 = \frac{M-1}{M}$, then the objective is $g_2 M$ for any $p_1 \in [0,1-\delta]$. Hence any auction in this family is optimal. For any $p_1$, the corresponding consumer surplus in this case is $g_2 \cdot (M - p_2) = g_2 \cdot \big( M - (M \cdot (1 - p_1) + p_1) \big) = p_1 (M-1) g_2$. Hence for any $M > 1$ it is maximized by the auction with $p_1 = (1 - \delta)$, i.e., ${\M^\ast_{M, \delta}}$, and the signaling scheme suggests ${\M^\ast_{M, \delta}}$ over all other optimal auctions.} when $p_1$ is maximized at $p_1 = 1 - \delta$, i.e., the auction is exactly ${\M^\ast_{M, \delta}}$. The consumer surplus in this case is given by 
        \begin{equation}
            g_2 \cdot \big( M - (\delta M + (1 - \delta)) \big) = (1-\delta)(M-1)g_2. \label{eq:private_consumer_surplus_in_mixed_signal}
        \end{equation}
        
        \item If $g_1 < (M-1) \cdot g_2$, and the optimal auction for $\S$ sets $p_1 = x_1 = 0$, and accordingly $p_2 = M$. The consumer surplus in this case is $0$. 
        \label{item:private_mixed_signals_bad_signal}
    \end{enumerate}
\end{enumerate}

We are now ready to prove Theorems~\ref{theorem_private_efficiency_sacrifices_consumer_surplus} and~\ref{theorem_private_consumer_surplus_constant_gap} based on the above characterizations.

\paragraph{Proof of Theorem~\ref{theorem_private_efficiency_sacrifices_consumer_surplus}.}

In case~\ref{item:private_mixed_signals_are_inefficient} in the above characterization, we see that no signals in which $g_1, g_2 \neq 0$ can retain efficiency. Hence any signaling scheme that achieves efficiency does not include any signal of this type, and thus must completely separate the two types in all signals created. Accordingly, the lower type $(v_1,b_1) = (1, 1-\delta)$ always pays a price of $1 - \delta$, and the higher type $(v_2,b_2) = (M,M)$ always pays a price of $M$. Thus the total consumer surplus is given by
\[ f_1 \cdot \big(v_1 - (1 - \delta) \big) + f_2 \cdot (v_2 - M) = (1 - \delta) \delta = \frac{\opt({\D_{M, \delta}})}{M},\]
and the theorem follows by taking any instance $\D_{M, \delta}$ with $M = \frac{1}{\epsilon}$ and $\delta < \epsilon$ for any given $\epsilon > 0$. 

\paragraph{Proof of Theorem~\ref{theorem_private_consumer_surplus_constant_gap}.}
Consider any signaling scheme $\Z$. Based on the previous characterizations, we can categorize all possible signals $\S \in \Z$ by the consumer surplus-maximizing auction among all revenue-maximizing auctions for $\S$. We first observe that any signal $\S \in \Z$ that contain both types $(v_1,b_1)$ and $(v_2,b_2)$ but do not admit ${\M^\ast_{M, \delta}}$ as an optimal auction (case~\ref{item:private_mixed_signals_bad_signal}) raises zero consumer surplus. Since the case~\ref{item:private_only_low_type} signal that contains only type $(v_1,b_1)$ raises positive consumer surplus, it is always better (in terms of total consumer surplus) to further separate this signal $\S$ into one that contains only type $(v_1,b_1)$ and another signal that only contains type $(v_2,b_2)$. Hence, any signaling scheme that maximizes consumer surplus does not include any case~\ref{item:private_mixed_signals_bad_signal} signal. Therefore:
\begin{itemize}
    \item (Case~\ref{item:private_only_low_type}) let $g_{11}$ be the total weight of all signals that contains only type $(v_1,b_1)$. 
    \item (Case~\ref{item:private_only_high_type}) let $g_{22}$ be the total weight of all signals that contains only type $(v_2,b_2)$. 
    \item (Case~\ref{item:private_mixed_signals_good_signal}) let $g_{31}$ and $g_{32}$ be the total probability masses of types $(v_1,b_1)$ and $(v_2,b_2)$ put into signals that contain both types and admit ${\M^\ast_{M, \delta}}$ as an optimal auction. Then we have $g_{31} \geq (M-1) \cdot g_{32}$. 
    %\item (Case~\ref{item:private_mixed_signals_bad_signal}) let $g_{41}$ and $g_{42}$ be the total probability masses of types $(v_1,b_1)$ and $(v_2,b_2)$ put into signals that contain both types but do not admit ${\M^\ast_{M, \delta}}$ as an optimal auction. Then we have $g_{41} \leq (M-1) \cdot g_{42}$.
\end{itemize}

Let $M = 2$. The problem of optimizing consumer surplus is captured by the following LP:

\begin{align*}
    \max \quad & \delta \cdot g_{11} + (1-\delta) \cdot g_{32}\\
    \text{s.t.} \quad & g_{11} + g_{31} = 1 - \delta, \\
    & g_{22} + g_{32} = \delta, \\
    & g_{31} \geq g_{32}, \\
    & g_{11}, g_{22}, g_{31}, g_{32} \geq 0.
\end{align*}

Since $M = 2$, we have $\delta < \frac{1}{M} = \frac{1}{2}$, and thus $(1 - \delta) > \delta$. For any value $g_{32} \in [0, \delta]$, it is feasible and optimal to set $g_{31} = g_{32} \leq \delta < (1 - \delta)$ and $g_{11} = 1 - \delta - g_{32}$. This achieves an objective of 
\[ \delta \cdot (1 - \delta - g_{32}) + (1 - \delta) \cdot g_{32} = \delta(1 - \delta) + (1 - 2\delta) \cdot g_{32}.\]
Since $1 - 2\delta > 0$, the above is maximized at $g_{32} = \delta$ for an objective of $\delta (2 - 3\delta)$. Recall that for $M = 2$ the benchmark consumer surplus is $\opt({\D_{M, \delta}}) = 2 \delta (1 - \delta)$. Take $\delta = \frac{1}{2} - \frac{\epsilon}{2}$. Then for all $\Z$ and $\epsilon > 0$ we have
\begin{align*} \CS(\Z) \leq \delta (2 - 3\delta) &= \frac{\delta(1+3\epsilon)}{2} < \frac{\delta(1+3\epsilon+2\epsilon^2)}{2}
%&= (\frac{1}{2}+\epsilon)\delta(1+\epsilon) =  
=(\frac{1}{2}+\epsilon)\delta(2-2\delta) = (\frac{1}{2}+\epsilon)\opt({\D_{M, \delta}}), 
\end{align*}
which proves the theorem.

\section{Conclusion}
\label{sec:conclusions}
%The main highlight of our work is that the signal complexity of Algorithm~\ref{alg:main_algorithm_deadlines} is small -- $O(n+k)$ -- despite the revenue maximizing menu for the deadlines problem having $\Theta(2^k)$ size in general~\cite{Fiat16,Devanur17}. In effect, we have shown a setting where number of signals for optimal price discrimination need not depend on the menu complexity of optimal auctions.  

Observe that our positive results hold for two budgeted settings where the optimal auctions with interim and ex-post IR constraints coincide, while our negative result holds for the most general budgeted setting where imposing ex-post IR constraints does reduce optimal revenue.  In effect, our work points to a separation between auctions with ex-post IR constraints, where optimal signaling is possible (public budget or deadlines), and interim IR constraints, where it is not possible (private budget setting). The main open question is whether this separation can be formalized. We conjecture that there is indeed an optimal signaling scheme for the general private budget setting, when the mechanism is required to be ex-post IR instead of interim IR. The key stumbling block is the development of a characterization analogous to Theorem~\ref{theorem:deadlines_optimal_revenue_is_a_distribution_over_posted_price_revenues} for this setting, and we leave this as an interesting open question.

Several other open questions arise from our work. For instance, for private budgets with interim IR,  is there an inefficient signaling scheme that extracts a constant factor of the optimal consumer surplus, thereby providing a positive counterpart to Theorem~\ref{theorem_private_consumer_surplus_constant_gap}? Finally, can our results be generalized to larger type spaces, for instance, spaces with three dimensions such as value, deadline, and amount required?

%\newpage
% Bibliography
\bibliographystyle{abbrv}
\bibliography{ref.bib}

\newpage

% Appendix
\appendix
\section{Proof of Theorem~\ref{theorem:public_optimal_revenue_is_a_distribution_over_posted_price_revenues}}
\label{app:public}
\label{apdx:proof_of_public_optimal_revenue_for_signals}
\publicrevenuedistribution*
Recall that $\A = \D(t)$ is the residual prior with support being a subset of $\{v_1, v_2, \ldots v_n\}$. By Myerson's characterization~\cite{Myerson81}, every feasible solution $(p(\cdot), x(\cdot))$ to $\textsf{PublicContinuous}(\A)$ must satisfy
\begin{align}
    p(v) = v \cdot x(v) - \int_{w=0}^{v} x(w) \dd{w} \label{eq:payment_identity}
\end{align}
for all $v \in [0, v_n]$; furthermore, $p(\cdot)$ and $x(\cdot)$ must be monotone non-decreasing. Define $\area_{x}(v) = \int_{w=0}^{v} x(w) \dd{w}$. Then $\area_{x}(v)$ is convex in $[0, v_n]$, and $\textsf{PublicContinuous}(\A)$ can be simplified to
\begin{align*}
    \text{B3}(\A) \coloneqq \max_{x(\cdot)} \quad & \sum_{j=1}^{n} \Big( f_{\A}(v_j) \cdot \big( v_j \cdot x(v_j) - \area_{x}(v_j) \big) \Big) &\\
    \text{s.t.} \quad & v_n \cdot x(v_n) - \area_{x}(v_n) \leq b, \tag*{(Budget)}\\
    & x(0) \geq 0, \, x(v_n) \leq 1, \, x(v) \text{ is monotone non-decreasing in } [0, v_n]. &
\end{align*}

Let $\supp(\A) = \{w_1, w_2, \ldots, w_m\}$, where $0 < w_1 < \cdots < w_m \leq v_n$, and let $w_0 = 0$ and $w_{m+1} = v_n$, which can possibly be equal to $w_m$. We next simplify B3$(\A)$ one step further, based on the observation that there always exists an optimal solution to B3$(\A)$ in which $x(v)$ is a constant over each interval $[w_i, w_{i+1})$ and thus $\area_{x}(v)$ is piecewise linear. 

\begin{lemma} \label{lemma:public_linear_program_piecewise_constant}
Given an optimal solution $x^\ast(\cdot)$ to B3$(\A)$, there exists another optimal solution $\hat{x}(\cdot)$ in which $\hat{x}(w_{m+1}) = \hat{x}(v_n) = x^\ast(v_n)$, and $\hat{x}(v) = \frac{\int_{w = w_i}^{w_{i+1}} x^\ast(w) \dd{w}}{w_{i+1} - w_i}$ for all $v \in [w_i, w_{i+1})$, and $w_i < v_n$. 
\end{lemma}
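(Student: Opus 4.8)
The plan is to verify that the candidate $\hat{x}(\cdot)$ defined in the statement is feasible for $\text{B3}(\A)$ and attains objective value at least that of $x^\ast(\cdot)$; since $x^\ast$ is optimal, this forces $\hat{x}$ to be optimal as well. Two elementary facts drive everything. First, $\hat{x}$ preserves the area function at every support point: $\area_{\hat{x}}(w_i) = \area_{x^\ast}(w_i)$ for all $i$, because on each subinterval $[w_l, w_{l+1})$ the new function has, by construction, the same integral as $x^\ast$, so telescoping these equalities gives equality of the integrals from $0$ to $w_i$. Second, $\hat{x}(w_i) \ge x^\ast(w_i)$ for every $i$: indeed $\hat{x}(w_i)$ is the average of $x^\ast$ over $[w_i, w_{i+1})$ (or, when $w_i = v_n$, equals $x^\ast(v_n)$), and since $x^\ast$ is monotone non-decreasing this average is at least the left-endpoint value $x^\ast(w_i)$.

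For feasibility I would check the constraints of $\text{B3}(\A)$ in turn. Monotonicity of $\hat{x}$ on $[0, v_n]$: within any interval $[w_l, w_{l+1})$ the function $\hat{x}$ is constant; across consecutive intervals the averages are ordered because $\sup_{w \in [w_l, w_{l+1})} x^\ast(w) \le x^\ast(w_{l+1}) \le \inf_{w' \in [w_{l+1}, w_{l+2})} x^\ast(w')$; and at $v_n$ itself, $\hat{x}(v_n) = x^\ast(v_n) \ge \frac{1}{w_{m+1}-w_m}\int_{w_m}^{w_{m+1}} x^\ast(w)\,\dd{w}$ by monotonicity (this also covers the degenerate case $w_m = v_n$, where the last interval is empty). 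The bounds $\hat{x}(0) \ge 0$ and $\hat{x}(v_n) = x^\ast(v_n) \le 1$ follow from the corresponding facts for $x^\ast$ together with $x^\ast \ge 0$. For the budget constraint, note $\area_{\hat{x}}(v_n) = \sum_{l=0}^{m} \int_{w_l}^{w_{l+1}} \hat{x}(w)\,\dd{w} = \sum_{l=0}^{m} \int_{w_l}^{w_{l+1}} x^\ast(w)\,\dd{w} = \area_{x^\ast}(v_n)$, so $v_n \cdot \hat{x}(v_n) - \area_{\hat{x}}(v_n) = v_n \cdot x^\ast(v_n) - \area_{x^\ast}(v_n) \le b$.

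For optimality I compare the objectives term by term. Since $\supp(\A) = \{w_1, \dots, w_m\}$, the objective of $\text{B3}(\A)$ evaluates $x(\cdot)$ and $\area_x(\cdot)$ only at the points $w_i$. By the first fact, $\area_{\hat{x}}(w_i) = \area_{x^\ast}(w_i)$, and by the second, $\hat{x}(w_i) \ge x^\ast(w_i)$; since $f_{\A}(w_i) \ge 0$, every summand $f_{\A}(w_i)\big(w_i \cdot \hat{x}(w_i) - \area_{\hat{x}}(w_i)\big)$ is at least the corresponding summand for $x^\ast$. Hence the objective of $\hat{x}$ is at least that of $x^\ast$, and $\hat{x}$ is optimal.

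The one place that needs care is the bookkeeping around the last interval: $w_{m+1} = v_n$ may coincide with $w_m$, so $[w_m, w_{m+1})$ can be empty, and the statement separately pins $\hat{x}(v_n) = x^\ast(v_n)$ rather than using the (undefined) average there. I expect the main, and only mild, obstacle to be handling this endpoint consistently across the monotonicity check, the area identity at $v_n$, and the objective comparison; everything else is a direct consequence of monotonicity of $x^\ast$ and the averaging construction.
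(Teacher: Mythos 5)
Your proposal is correct and follows essentially the same route as the paper's proof: establish the two key facts that the area function is preserved at each support point and that $\hat{x}(w_i) \ge x^\ast(w_i)$ by monotonicity of $x^\ast$, then use the first to preserve the budget constraint and both to show the objective does not decrease. The paper is terser on the feasibility verification (asserting monotonicity and boundedness of $\hat{x}$ as ``easy to see''), whereas you spell out the interval-by-interval comparison of averages and the degenerate-endpoint bookkeeping; the substance is the same.
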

\begin{proof}
Let $x^\ast(\cdot)$ be an optimal solution to B3$(\A)$, and $\hat{x}(\cdot)$ as stated above. This implies
\begin{align}
    \area_{\hat{x}}(w_i) = \sum\limits_{i'=0}^{i-1} \left( (w_{i'+1} - w_{i'}) \cdot \frac{\int_{w=w_{i'}}^{w_{i'+1}} x^\ast(w) \dd{w}}{w_{i'+1} - w_{i'}} \right) \tag*{}
    = \sum\limits_{i'=0}^{i-1} \int\limits_{w=w_{i'}}^{w_{i'+1}} x^\ast(w) \dd{w} = \area_{x^\ast}(w_i)
    \label{eq:public_area_preservation_for_values}
\end{align}
for all $i \in \{0,1,\ldots,m+1\}$. Note that $v_n = w_{m+1}$ and $\hat{x}(v_n) = x^\ast(v_n)$. This means the budget constraint is exactly preserved by $\hat{x}(\cdot)$. It is easy to see $\hat{x}(\cdot)$ remains bounded in $[0,1]$ and is non-decreasing; hence it is feasible.

By monotonicity of $x^\ast(\cdot)$, we also have $\hat{x}(w_i) \geq x^\ast(w_i)$ for all $i \in \{0,1,\ldots,m+1\}$. Hence, for all $j \in [1,n]$, either $v_j \notin \supp(\A)$ (so $f_{\A}(v_j) = 0$), or $v_j = w_i$ for some $i$, for which we have
\begin{align*}
    \hat{p}(v_j) = \hat{p}(w_i) = w_i \cdot \hat{x}(w_i) - \area_{\hat{x}}(w_i) \geq w_i \cdot x^\ast(w_i) - \area_{x^\ast}(w_i) = p^\ast(w_i) = p^\ast(v_j), 
\end{align*}
where $\hat{p}(v)$ and $p^\ast(v)$ are the prices associated with value $v$ given by Eq. (\ref{eq:payment_identity}) when the allocation function is $\hat{x}(\cdot)$ and $x^\ast(\cdot)$, respectively. Summing this up for all $v_j \in \supp(\A)$ implies $\hat{x}(\cdot)$ achieves (in B3$(\A)$) an objective value of at least that given by $x^\ast(\cdot)$. Hence $\hat{x}(\cdot)$ is also optimal for B3$(\A)$.
\end{proof}

By Lemma~\ref{lemma:public_linear_program_piecewise_constant}, we only need to consider solutions of the form $\hat{x}(\cdot)$. Specifically, we now have the extra constraint that $x(v) = x(w_i)$ for all $v \in [w_i, w_{i+1})$ (again using $w_0 = 0$ and $w_{m+1} = v_n$). Therefore, for all $i \in [1,m]$ we now have
\begin{align*}  
    \area_{x}(w_i) = \sum\limits_{i'=0}^{i-1} \big( (w_{i'+1} - w_{i'}) \cdot x(w_{i'}) \big).
\end{align*}
% and we can further write the objective function of $\text{B3}(\A)$ as
% \begin{align*}
%     \sum_{i=1}^{m} \Big( f_{\A}(w_i) & \cdot \big( w_i \cdot x(w_i) - \area_{x}(w_i) \big) \Big) 
%     = \sum_{i=1}^{m} \bigg( f_{\A}(w_i) \cdot \Big( w_i \cdot x(w_i) - \sum\limits_{i'=0}^{i-1} \big( (w_{i'+1} - w_{i'}) \cdot x(w_{i'}) \big) \Big) \bigg)\\
%     &= \sum_{i=1}^{m-1} \Big( x(w_i) \cdot \big( f_{\A}(w_i) \cdot w_i - (w_{i+1} - w_{i}) \cdot \sum_{i'=i+1}^{m} f_{\A}(w_i') \big) \Big) + x(w_m) \cdot f_{\A}(w_m) \cdot w_m \\
%     &= \sum_{i=1}^{m-1} \Big( x(w_i) \cdot \big( f_{\A}(w_i) \cdot w_i - (w_{i+1} - w_{i}) \cdot (1-\underline{F}_{\A}(w_i)) \big) \Big) + x(w_m) \cdot f_{\A}(w_m) \cdot w_m.
% \end{align*}

Thus, the decision variables are now $\{x_i \coloneqq x(w_i)\}$ for $i \in [0,m]$, and $\text{B3}(\A)$ is equivalent to
\begin{align}
    \text{B4}(\A) \coloneqq \max_{\{x_i\}} \quad & \sum_{i=1}^{m} \Big( f_{\A}(w_i) \cdot \big( w_i \cdot x_i - \area_{x}(w_i) \big) \Big) \tag*{}\\
    %\big( x_i \cdot r_i(\A) \big) & \tag*{}\\
    \text{s.t.} \quad & w_m \cdot x_m - \area_{x}(w_m) \leq b, \label{B4_constraint_budget}\\
    & 0 \leq x_{0} \leq x_{1} \leq \cdots \leq x_{m} \leq 1. & \tag*{}
\end{align}
% where 
% \begin{align*}
%     r_i(\A) \coloneqq
%     \begin{cases}
%         f_{\A}(w_i) \cdot w_i - (w_{i+1} - w_{i}) \cdot (1-\underline{F}_{\A}(w_i)), \quad & 1 \leq i < m,\\
%         f_{\A}(w_m) \cdot w_m, \quad & i = m.
%     \end{cases}
% \end{align*}

Next, we show that assuming $w_1 < b$, there always exists an optimal solution to B4$(\A)$ in which (i) the buyer with the highest valuation $w_m$ of nonzero probability mass in $\A$ gets the item deterministically, and (ii) the buyer with the dummy valuation $w_0$ is never allocated the item.

\begin{lemma} \label{lemma:public_highest_type_gets_item_deterministically}
If $w_1 < b$, there exists an optimal solution $\{ \hat{x}_i \}$ to B4$(\A)$ such that $\hat{x}_{m} = 1$ and $\hat{x}_{0} = 0$.
\end{lemma}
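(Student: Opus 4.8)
The plan is to prove the two assertions $\hat{x}_0 = 0$ and $\hat{x}_m = 1$ separately. The first is a one‑line exchange argument: starting from any optimal solution $\{x_i\}$ of B4$(\A)$, I would pass to $\hat{x}_i \coloneqq x_i - x_0$. Monotonicity and the bounds $0 \le \hat{x}_i \le 1$ are preserved and $\hat{x}_0 = 0$, and the key computation is that a uniform downward shift of the allocation curve leaves every payment unchanged: since $\area_{\hat{x}}(w_i) = \area_x(w_i) - x_0 (w_i - w_0) = \area_x(w_i) - x_0 w_i$, we get $\hat{p}(w_i) = w_i \hat{x}_i - \area_{\hat{x}}(w_i) = w_i x_i - \area_x(w_i) = p(w_i)$ for every $i$. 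Hence the objective $\sum_i f_{\A}(w_i) p(w_i)$ and the budget constraint $p(w_m) \le b$ are both unaffected, so $\{\hat{x}_i\}$ is again optimal, and I may assume $x_0 = 0$ from now on.

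For $\hat{x}_m = 1$, with $x_0 = 0$ I would reparametrize B4$(\A)$ using $\gamma_\ell \coloneqq x_\ell - x_{\ell-1} \ge 0$ for $\ell \in [1,m]$, so that $x_i = \sum_{\ell \le i} \gamma_\ell$ and $\sum_\ell \gamma_\ell = x_m \le 1$; a short computation then gives $\area_x(w_i) = \sum_{\ell < i} \gamma_\ell (w_i - w_\ell)$ and hence $p(w_i) = \sum_{\ell \le i} \gamma_\ell w_\ell$. Thus B4$(\A)$ becomes the plain LP of maximizing $\sum_{\ell=1}^m \gamma_\ell r_\ell$, where $r_\ell \coloneqq w_\ell \overline{F}_{\A}(w_\ell) > 0$, over $\gamma \ge 0$ subject to $\sum_\ell \gamma_\ell \le 1$ and $\sum_\ell \gamma_\ell w_\ell \le b$; and $\hat{x}_m = 1$ is exactly the statement that this LP has an optimal solution with $\sum_\ell \gamma_\ell = 1$.

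To produce such a solution, take an optimal $\gamma$ with $s \coloneqq \sum_\ell \gamma_\ell < 1$. The budget must be tight: otherwise scaling $\gamma$ up by $1 + \varepsilon$ stays feasible for small $\varepsilon$ and strictly increases the objective, which is positive because the point $\gamma_1 = 1$ is feasible (here $w_1 < b$ is used). Tightness of the budget together with $s < 1$ forces the average price $b/s$ to exceed $w_1$, so $\gamma_{\ell^*} > 0$ for some $\ell^*$ with $w_{\ell^*} > w_1$. I then transfer mass downward: set $\gamma'_{\ell^*} = \gamma_{\ell^*} - \delta$ and $\gamma'_1 = \gamma_1 + \delta + \varepsilon$ with $\varepsilon = \delta (w_{\ell^*} - w_1)/w_1$, which keeps $\sum_\ell \gamma'_\ell w_\ell = b$; the objective changes by $\tfrac{\delta}{w_1}\bigl(r_1 w_{\ell^*} - r_{\ell^*} w_1\bigr) = \tfrac{\delta}{w_1}\, w_1 w_{\ell^*}\bigl(\overline{F}_{\A}(w_1) - \overline{F}_{\A}(w_{\ell^*})\bigr) \ge 0$, so $\gamma'$ is still optimal while $\sum_\ell \gamma'_\ell = s + \varepsilon > s$. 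Taking $\delta$ as large as $\gamma_{\ell^*} \ge 0$ and $\sum_\ell \gamma'_\ell \le 1$ permit, each step either reaches $\sum_\ell \gamma_\ell = 1$ or zeroes out $\gamma_{\ell^*}$; since only indices with $w_\ell > w_1$ ever lose mass, there are at most $m-1$ of them, and a budget‑tight solution with total weight below $1$ always retains mass strictly above $w_1$, after finitely many steps we obtain an optimal $\gamma$ with $\sum_\ell \gamma_\ell = 1$. Translating back, $\{x_i = \sum_{\ell \le i} \gamma_\ell\}$ is feasible, optimal, and has $x_0 = 0$ and $x_m = 1$.

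The main obstacle is this last step. The obvious shortcut — rescaling $\{x_i\}$ by $1/x_m$ to force $x_m = 1$ — multiplies every payment, and in particular $p(w_m)$, by $1/x_m > 1$, breaking the budget; one cannot raise allocations without a compensating rearrangement. The hypothesis $w_1 < b$ is precisely what guarantees that the compensating rearrangement — pushing probability mass down onto the lowest price $w_1$ — never overruns the budget; without it (say $b < w_1$) the revenue‑optimal auction must post a price strictly below $w_1$, which forces $x_0 > 0$, so the hypothesis cannot be dropped. Some care is also needed to confirm that the transfer step terminates, i.e.\ that a budget‑tight $\gamma$ with total weight below $1$ always has positive mass on some price above $w_1$.
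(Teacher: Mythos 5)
Your proof is correct, but it takes a genuinely different path from the paper at the crucial step. You and the paper both begin with a uniform shift of the allocation curve that leaves payments invariant (you shift down to force $\hat{x}_0 = 0$, the paper shifts up to force $\tilde{x}_m = 1$); these are interchangeable. The real divergence is in the second half. The paper works geometrically on the area curve $\area_x(\cdot)$: it picks the least-slope line through $(w_1,0)$ tangent to the curve, replaces the curve below the tangency point by that line, and checks that this only lowers $\area_x$ (hence raises prices) while preserving the budget constraint via convexity; this construction is deliberately designed to be reusable, since the appendix's $\mathsf{Align}$ procedure for the deadlines case (Theorem~\ref{theorem:deadlines_align_process}) is the same tangent-line trick applied across deadline levels. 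You instead reparametrize by the jump lengths $\gamma_\ell = x_\ell - x_{\ell-1}$, observe that $p(w_i) = \sum_{\ell \le i} \gamma_\ell w_\ell$ so B4 collapses into a two-constraint LP (total weight $\le 1$, budget $\sum \gamma_\ell w_\ell \le b$) with objective $\sum \gamma_\ell \, w_\ell \overline{F}_{\A}(w_\ell)$, and then run a fractional-knapsack-style exchange: transfer mass from high-priced coordinates to $\ell=1$ while holding the budget tight, using the monotonicity of $\overline{F}_{\A}$ to see the objective cannot drop, and $w_1 < b$ both to certify the optimum is positive (so the budget must be tight when $\sum\gamma < 1$) and to guarantee a budget-tight solution with total weight below $1$ must carry mass above $w_1$. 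Your route is arguably more elementary and LP-transparent, and it makes visible exactly why $w_1 < b$ is needed; it also essentially proves the conclusion of Theorem~\ref{theorem:public_optimal_revenue_is_a_distribution_over_posted_price_revenues} in one shot, since your $\gamma_\ell$'s are precisely the weights $\delta_j$ there. The paper's geometric argument is less local but generalizes more smoothly to the two-dimensional deadline setting, where the inter-deadline IC constraints couple the curves across $j$ and a coordinatewise knapsack reparametrization would be considerably messier. One small point worth flagging in a polished write-up: your termination argument implicitly uses that the exchange step leaves all $\gamma_\ell$ with $\ell \ne 1, \ell^*$ untouched, so a zeroed-out index stays zero; you gesture at this ("only indices with $w_\ell > w_1$ ever lose mass") but it deserves one explicit sentence.
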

\begin{proof}
Consider any optimal solution $\{ x^\ast_i \}$ to B4$(\A)$ with $x^\ast_{m} < 1$. Let $\epsilon = 1 - x^\ast_{m}$. Consider another solution $\{\Tilde{x}_i\}$ such that $\Tilde{x}_i = x^\ast_i + \epsilon$ for all $i \in [0,m]$. This implies (for all $i \in [0,m]$)
\begin{align*}
    \area_{\Tilde{x}}(w_i) = \area_{x^\ast}(w_i) + \epsilon \cdot w_i
\end{align*}
and thus
\begin{align*}
    w_{i} \cdot \Tilde{x}_{i} - \area_{\Tilde{x}}(w_i) &= w_{i} \cdot \big(x^\ast_{i} + \epsilon \big) - \big( \area_{x^\ast}(w_i) + \epsilon \cdot w_i \big) = w_{i} \cdot x^\ast_{i} - \area_{x^\ast}(w_i).
\end{align*}
Therefore, both the budget constraint (Eq. (\ref{B4_constraint_budget})) and the objective value are exactly preserved. Since $\{ \Tilde{x}_i \} \in [0,1]$ and is monotone non-decreasing, it is feasible and also optimal in B4$(\A)$.

Note that $\Tilde{x}_0 > 0$ and thus $\area_{\Tilde{x}}(w_1) > 0$, i.e., the point $(w_1, 0)$ lies strictly below $\area_{\Tilde{x}}(\cdot)$. Next, pick the smallest $y$ such that
\begin{align}
    \area_{\Tilde{x}}(w_i) = (w_i - w_1) \cdot y
\end{align}
is satisfied for some $i \in [1,m]$, or equivalently, the straight line $\ell_y(\cdot)$ passing $(w_1, 0)$ with slope $y$ is tangent to the curve $\area_{\Tilde{x}}(\cdot)$ at some $v = v'$ (see Figure~\ref{fig:public_align}). Note that the slope of $\area_{\Tilde{x}}(\cdot)$ between $v \in (w_i, w_{i+1})$ is $\Tilde{x}_i$. By convexity of $\area_{\Tilde{x}}(\cdot)$, we have $y \geq \Tilde{x}_1$ (otherwise $\ell_y(\cdot)$ stays strictly below $\area_{\Tilde{x}}(\cdot)$ for all $v \geq w_1$). Since $\area_{\Tilde{x}}(\cdot)$ is piecewise linear in each $v \in [w_i, w_{i+1}]$, $v'$ must lie in $\{w_1, \ldots, w_m\}$.

\begin{figure}[t]
        \centering
        \begin{tikzpicture}[scale=0.5]
        
            \draw[thick,->] (0,0) -- (18,0) node[anchor=north west] {$v$};
            \draw[thick,->] (0,0) -- (0,8.5) node[anchor=south east] {};
            \foreach \x in {1,2,3,4}
                \draw (4 * \x,1pt) -- (4 * \x, -1pt) node[anchor=north] {$w_\x$};
            \filldraw[black] (4, 1) circle (2pt) node[]{};
            \filldraw[black] (8, 2.5) circle (2pt) node[anchor=west]{};
            \filldraw[black] (12, 4.5) circle (2pt) node[anchor=west]{};
            \filldraw[black] (16, 7.8) circle (2pt) node[anchor=west]{};
                
            % area curve
            \draw[black] (0, 0) -- (4, 1) -- (8, 2.5) -- (12, 4.5) -- (16, 7.8) {};
            \node at (17.4,8.2) {$\area_{\Tilde{x}}(v)$};
            \node[red] at (12,3.7) {$\ell_y(\cdot)$};
            \draw[dashed, red] (4, 0) -- (12, 4.5); %(16, 6.75);

        \end{tikzpicture}
    \caption{$\ell_y(\cdot)$ is tangent to $\area_{\Tilde{x}}(\cdot)$ at $v' = w_3$.}
    \label{fig:public_align}
\end{figure}
    
Let $y' = \min\{y, 1\}$, and consider another solution $\{\hat{x}_i\}$ such that $\hat{x}_0 = 0$, and $\hat{x}_i = \max\{y', \Tilde{x}_i\} \geq \Tilde{x}_i$ for all $1 \leq i \leq m$. Then $\hat{x}_i \in [0,1]$ and is monotone non-decreasing, and we have 
\begin{align}
    \area_{\hat{x}}(v) \leq 
    \begin{cases}
    (v - w_1) \cdot y = \ell_y(v), & v \in [w_1, v'],\\
    \area_{\Tilde{x}}(v), & v \in [v', w_m],
    \end{cases}
    \label{eq:public_aligned_curve}
\end{align}
where the equality holds if $y' = y$. This implies $\area_{\hat{x}}(v) \leq \area_{\Tilde{x}}(v)$ for all $v \in [0, w_m]$, and thus
\begin{align}
    w_{i} \cdot \hat{x}_{i} - \area_{\hat{x}}(w_i) \geq w_{i} \cdot \Tilde{x}_i - \area_{\hat{x}}(w_i) \geq w_{i} \cdot \Tilde{x}_i - \area_{\Tilde{x}}(w_i)
    \label{eq:public_prices_only_go_up}
\end{align}
holds for all $i = 1, 2, \ldots, m$. 

Recall that $\Tilde{x}_m = 1$. Suppose $y \leq 1$ and thus $y' = y$. Then $\hat{x}_m = \Tilde{m} = 1$, and by Eq. (\ref{eq:public_aligned_curve}) we have $\area_{\hat{x}}(w_m) = \area_{\Tilde{x}}(w_m)$. Hence all equalities in Eq. (\ref{eq:public_prices_only_go_up}) must hold for $i = m$. Otherwise, $y > 1$, and $\hat{x}_i = 1$ for all $i \in [1,m]$. Then we have
$w_{m} \cdot \hat{x}_{m} - \area_{\hat{x}}(w_m) = w_m - (w_m - w_1) = w_1 < b$. This implies $\{\hat{x}_i\}$ satisfies the budget constraint in both cases.

Finally, observe that by Eq. (\ref{eq:public_prices_only_go_up}), each term in the objective of B4$(\A)$ is weakly higher when the allocation variables are $\{\hat{x}_i\}$ then when they are $\{\Tilde{x}_i\}$. Hence $\{\hat{x}_i\}$ is optimal to B4$(\A)$.
\end{proof}

Finally, putting the previous lemmas together, we show that the optimal objective of B4$(\A)$ equals the revenue achieved by a distribution over posted-price schemes when $w_1 < b$. 

\paragraph{Completing proof of Theorem~\ref{theorem:public_optimal_revenue_is_a_distribution_over_posted_price_revenues}.}
Assume $w_1 < b$. Then by Lemma~\ref{lemma:public_highest_type_gets_item_deterministically}, there is an optimal solution $\{\hat{x}_i\}$ to B4$(\A)$ such that $\hat{x}_0 = 0$ and $\hat{x}_m = 1$. Let $\hat{x}(w) = \hat{x}_i$ for all $w \in [w_i, w_{i+1})$. Let $\{ w'_1, w'_2, \ldots, w'_{m'} \} \coloneqq \{ w_i \mid \hat{x}_i > \hat{x}_{i-1} \}$ be the set of vertices of $\area_{\hat{x}}(\cdot)$ (equivalently, valuations whose corresponding allocation probability is strictly higher than the previous type of buyer). Let $w'_0 = 0$. Then for $j \in \{1, \ldots, m'\}$ denote allocation function $x_j(\cdot)$ such that 
\[ x_j(w_i) =
\begin{cases}
1, \quad & w_i \geq w'_j;\\
0, \quad & w_i < w'_j.
\end{cases}
\]
This corresponds to posting a price of $w'_j$ and selling to only the buyers with valuation $w_i \geq w'_j$ with probability $1$. Therefore, the revenue generated by $\{x_j(w_i)\}$ is 
\begin{align}
    \sum_{i=1}^{m} \big( f_{\A}(w_i) \cdot p_j(w_i) \big) = \overline{F}_{\A}(w'_j) \cdot w'_j,
    \label{eq:public_decomposition}
\end{align}
where $p_j(w_i) \coloneqq w_i \cdot x_j(w_i) - \int_{w=0}^{w_i} x_j(w) \dd{w}$ is the price associated with buyer with valuation $w_i$. 

Let $\delta_j = \hat{x}(w'_j) - \hat{x}(w'_{j-1})$. Then for all $i = 1, 2, \ldots, m$ we can write
\[ \hat{x}(w_i) = \sum\limits_{j=1}^{m'} \big( \delta_j \cdot x_j(w_i) \big), \quad \hat{p}(w_i) = \sum\limits_{j=1}^{m'} \big( \delta_j \cdot p_j(w_i) \big), \]
where $\hat{p}(w_i) \coloneqq w_i \cdot \hat{x}(w_i) - \int_{w=0}^{w_i} \hat{x}(w) \dd{w}$. Therefore, the optimal objective for \text{B4}$(\A)$ is 
\begin{align*}
  \sum\limits_{i=1}^{m} \big( f_{\A}(w_i) \cdot \hat{p}(w_i) \big) \Big) &=
  \sum\limits_{i=1}^{m} \Big( f_{\A}(w_i) \cdot \sum\limits_{j=1}^{m'} \big( \delta_j \cdot p_j(w_i) \big) \Big)\\
  &= \sum\limits_{j=1}^{m'} \Big( \delta_j \cdot \sum\limits_{i=1}^{m} \big( f_{\A}(w_i) \cdot p_j(w_i) \big) \Big)
  = \sum\limits_{j=1}^{m'} \big( \delta_j \cdot \overline{F}_{\A}(w'_j) \cdot w'_j \big),
\end{align*}
where the last equality follows from Eq. (\ref{eq:public_decomposition}). Finally, notice that since $\hat{x}(w'_{m'}) = \hat{x}(w_m) = 1$, we have $\sum_{j=1}^{m'} \delta_j = \hat{x}(w_m) - \hat{x}(w'_0)  = 1$. This proves the theorem. \hfill \qedsymbol

\section{Proof of Theorem~\ref{theorem:deadlines_optimal_revenue_is_a_distribution_over_posted_price_revenues}}
\label{app:deadlines}
\label{apdx:proof_of_deadlines_optimal_revenue_for_signals}

\deadlinesrevenuedistribution*

\subsubsection{Simplification of $\textsf{DeadlinesContinuous}(\A)$ into discrete set of decision variables}

\sloppy Recall $\A = \D(t)$ is the residual prior with support $\supp(\A) \subseteq \supp(\D) = \{v_1, v_2, \ldots, v_n\} \times \{1, 2, \ldots, k\}$. By Myerson's characterization~\cite{Myerson81}, every feasible solution $\big(\{p_j(\cdot)\}, \{x_j(\cdot)\}\big)$ for $\textsf{DeadlinesContinuous}(\A)$ must satisfy
\begin{align}
    p_j(v) \coloneqq v \cdot x_j(v) - \int_{w=0}^{v} x_j(w) \dd{w} \label{eq:fedex_payment_identity}
\end{align}
for all $v \in [0, v_n]$ and $j \in [1,k]$, and all $\{x_j(\cdot)\}$ are monotone non-decreasing. Therefore, for each $j \in [1,k]$, the area below the allocation curve $x_j(\cdot)$, denoted $\area_{x}(v,j) \coloneqq \int_{w=0}^{v} x_j(w) \dd{w}$, is convex in $[0, v_n]$. Therefore, $\textsf{DeadlinesContinuous}(\A)$ can be simplified to
\begin{align*}
    \text{D3}(\A) \coloneqq \max_{\{x_j(\cdot)\}} \quad & \sum_{j=1}^{k} \left( \Pr_{(v,d) \sim \A}[d = j] \cdot \sum_{i=1}^{n} \Big( f_{\A_j}(v_i) \cdot \big( v_i \cdot x_j(v_i) - \area_{x}(v_i, j) \big) \Big) \right)&\\
    \text{s.t.} \quad & \area_{x}(v, j) \geq \area_{x}(v, j-1), \quad \forall v \in [0, v_n], \, 2 \leq j \leq k, \tag*{(Inter-deadline IC)}\\
    & x_j(0) \geq 0, \, x_j(v_n) \leq 1, \, x_j(v) \text{ is monotone non-decreasing in } [0, v_n] \text{ for all } j. &
\end{align*}

Let $\values(\A) = \{w_1, w_2, \ldots, w_m\}$, where $0 < w_1 < \cdots < w_m \leq v_n$, and let $w_0 = 0$ and $w_{m+1} = v_n$, which can possibly be equal to $w_m$. Analogous to Lemma~\ref{lemma:public_linear_program_piecewise_constant}, we next simplify D3$(\A)$ such that for all $j$, $x_j(v)$ is piecewise constant, and $\area_{x}(v,j)$ is piecewise linear, in each interval $v \in [w_i, w_{i+1})$.

\begin{lemma} \label{lemma:deadlines_linear_program_piecewise_constant}
Given an optimal solution $\{x^\ast_j(\cdot)\}$ to D3$(\A)$, there exists another optimal solution $\{\hat{x}_{j}(\cdot)\}$ in which for all $j = 1, \ldots, k$ we have $\hat{x}_j(w_{m+1}) = \hat{x}_j(v_n) = x^\ast_j(v_n)$, and $\hat{x}_j(v) = \frac{\int_{w = w_i}^{w_{i+1}} x^\ast_j(w) \dd{w}}{w_{i+1} - w_i}$ for all $v \in [w_i, w_{i+1})$, and $w_i < v_n$. 
\end{lemma}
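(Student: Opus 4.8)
The plan is to follow the proof of Lemma~\ref{lemma:public_linear_program_piecewise_constant} coordinate-by-coordinate in $j$, and then to verify that the single constraint coupling the deadlines --- the inter-deadline IC constraint --- survives the averaging. So, fix an optimal solution $\{x^\ast_j(\cdot)\}$ to D3$(\A)$ and let $\{\hat{x}_j(\cdot)\}$ be defined as in the statement. First I would record the elementary fact, exactly as in Eq.~(\ref{eq:public_area_preservation_for_values}), that $\area_{\hat{x}}(w_i,j)=\area_{x^\ast}(w_i,j)$ for every $j\in[1,k]$ and every $i\in\{0,1,\dots,m+1\}$: on each sub-interval $[w_{i'},w_{i'+1}]$ the constant value of $\hat{x}_j$ is chosen precisely so that $\int_{w_{i'}}^{w_{i'+1}}\hat{x}_j(w)\,\dd{w}=\int_{w_{i'}}^{w_{i'+1}}x^\ast_j(w)\,\dd{w}$, and telescoping these equalities gives coincidence of the partial integrals at every $w_i$ and at $v_n=w_{m+1}$. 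From monotonicity of $x^\ast_j$ one then gets, for each $j$, that $\hat{x}_j(w_i)\ge x^\ast_j(w_i)$ (an average over $[w_i,w_{i+1}]$ is at least the left-endpoint value), that $\hat{x}_j$ is itself monotone non-decreasing, and that it is valued in $[0,1]$ (the average over $[0,w_1]$ is $\ge 0$, and $\hat{x}_j(v_n)=x^\ast_j(v_n)\le 1$ with all other values bounded above by this one). Thus $\{\hat{x}_j(\cdot)\}$ satisfies all the per-deadline feasibility constraints of D3$(\A)$.

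Next I would check the inter-deadline IC constraint $\area_{\hat{x}}(v,j)\ge\area_{\hat{x}}(v,j-1)$ for all $v\in[0,v_n]$ and $2\le j\le k$. At every breakpoint $w_i$ (and at $v_n$) this is immediate from the area-equality just established together with feasibility of $x^\ast$: $\area_{\hat{x}}(w_i,j)=\area_{x^\ast}(w_i,j)\ge\area_{x^\ast}(w_i,j-1)=\area_{\hat{x}}(w_i,j-1)$. On the interior of a sub-interval $[w_i,w_{i+1}]$ both $\area_{\hat{x}}(\cdot,j)$ and $\area_{\hat{x}}(\cdot,j-1)$ are \emph{affine} functions of $v$, because $\hat{x}_j$ and $\hat{x}_{j-1}$ are constant there; hence their difference is affine on $[w_i,w_{i+1}]$, and an affine function that is non-negative at both endpoints of an interval is non-negative throughout it. Since $[0,v_n]$ is the union of these sub-intervals, the inter-deadline IC constraint holds everywhere, so $\{\hat{x}_j(\cdot)\}$ is feasible for D3$(\A)$.

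Finally, for optimality I would use that the objective of D3$(\A)$ decomposes as a non-negatively weighted sum over $(i,j)$ with weights $\Pr_{(v,d)\sim\A}[d=j]\cdot f_{\A_j}(v_i)$, and that whenever $f_{\A_j}(v_i)>0$ we have $v_i\in\values(\A_j)\subseteq\values(\A)=\{w_1,\dots,w_m\}$, so $v_i$ is a breakpoint. For such $v_i$,
\[ v_i\cdot\hat{x}_j(v_i)-\area_{\hat{x}}(v_i,j)\ \ge\ v_i\cdot x^\ast_j(v_i)-\area_{x^\ast}(v_i,j), \]
using $\hat{x}_j(w_i)\ge x^\ast_j(w_i)$ and $\area_{\hat{x}}(w_i,j)=\area_{x^\ast}(w_i,j)$. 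Summing over all $i,j$ shows $\{\hat{x}_j(\cdot)\}$ attains an objective value at least that of $\{x^\ast_j(\cdot)\}$, hence it is also optimal, proving the lemma. The only genuinely new point relative to Lemma~\ref{lemma:public_linear_program_piecewise_constant} --- and the one I would be most careful about --- is the preservation of inter-deadline IC; everything else is a faithful copy of the public-budget argument applied separately in each deadline coordinate, and the affine-interpolation observation is what bridges the breakpoint-wise comparison to the full continuum constraint.
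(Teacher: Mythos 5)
Your proof is correct and follows essentially the same route as the paper's: establish the area-preservation identity $\area_{\hat{x}}(w_i,j)=\area_{x^\ast}(w_i,j)$ at all breakpoints via the same telescoping argument as in the public-budget lemma, verify per-deadline monotonicity and boundedness, propagate the inter-deadline IC constraints from the breakpoints to the whole interval by piecewise linearity, and conclude optimality by observing $\hat{x}_j(w_i)\ge x^\ast_j(w_i)$ raises each buyer's price weakly. The only (welcome) cosmetic difference is that you spell out the affine-interpolation step for the inter-deadline IC constraint more explicitly than the paper does.
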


\begin{proof}
Let $\{x^\ast_j(\cdot)\}$ be an optimal solution to D3$(\A)$, and let $\{\hat{x}_j(\cdot)\}$ as stated above. By the same arguments in the proof of Lemma~\ref{lemma:public_linear_program_piecewise_constant}, this implies
\begin{align}
    \area_{\hat{x}}(w_i, j) = \area_{x^\ast}(w_i, j)
    \label{eq:deadlines_piecewise_constant_allocation_probabilities_preserves_utility}
\end{align}
holds for all $i \in [0,m+1]$ and $j \in [1,k]$. By Eq. (\ref{eq:deadlines_piecewise_constant_allocation_probabilities_preserves_utility}) and the inter-deadline IC constraints in D3$(\A)$, we have $\area_{\hat{x}}(w_i, j) = \area_{x^\ast}(w_i, j) \geq \area_{x^\ast}(w_i, j-1) = \area_{\hat{x}}(w_i, j-1)$ for all $i \in [0,m+1]$ and $j \in [2,k]$. Since both $\area_{\hat{x}}(v, j)$ and $\area_{\hat{x}}(v, j-1)$ are continuous and piecewise linear in each $v \in [w_i, w_{i+1})$, this implies $\area_{\hat{x}}(v, j)$ must be weakly above $\area_{x^\ast}(v, j-1)$ at all $v \in [w_0, w_{m+1}] = [0, v_n]$, i.e., $\{\hat{x}_j(\cdot)\}$ satisfies all inter-deadline IC constraints. Similar to the case in the proof of Lemma~\ref{lemma:public_linear_program_piecewise_constant}, it can be verified that each $\{\hat{x}_j(\cdot)\}$ is monotone non-decreasing and in $[0,1]$. Hence $\{\hat{x}_j(\cdot)\}$ is feasible in D3$(\A)$. 

For all $j$, by the monotonicity of $x^\ast_j(\cdot)$ and the construction of $\hat{x}_j(\cdot)$, we have $\hat{x}_j(w_i) \geq x^\ast_j(w_i)$ for all $i \in [0, m]$. Hence, for all $i' \in [1,n]$, either $v_{i'} \notin \values(\A)$ (so $f_{\A_j}(v_{i'}) = 0$), or $v_{i'} = w_i$ for some $i'$. Analogous to Eq. (\ref{eq:public_prices_only_go_up}), for all such $v_{i'} \in \values(\A)$, the price $\hat{p}_j(v_{i'})$ (given by the allocation variables $\{\hat{x}_j(\cdot)\}$) is at least the old price $p^\ast_j(v_{i'})$ (given by $\{x^\ast_j(\cdot)\}$). Since this holds for all $j$, $\{\hat{x}_j(\cdot)\}$ gives in D3$(\A)$ an objective at least that given by $\{x^\ast_j(\cdot)\}$, and thus is optimal.
\end{proof}

By Lemma~\ref{lemma:deadlines_linear_program_piecewise_constant}, we only need to consider solutions in the form of $\{\hat{x}_j(\cdot)\}$, i.e., for all $j \in [1,k]$ we require that $x_j(v) = x_j(w_i)$ for all $v \in [w_i, w_{i+1})$ (again using $w_0 = 0$ and $w_{m+1} = v_n$). Therefore, the decision variables are now $\{x_{ij} \coloneqq x_j(w_i)\}$ for $i \in [0,m]$ and $j \in [1,k]$. 

Throughout the rest of the proof, given the set of decision variables $\{x_{ij}\}$, for all $i_1 \leq i_2 \in [0,m]$ and $j \in [1,k]$, we denote the area below the curve $x_j(v)$ from $v = w_{i_1}$ to $v = w_{i_2}$ as
\[ \area_{x} (w_{i_1},w_{i_2},j) \coloneqq \sum_{i'=i_1}^{i_2-1} \big( (w_{i'+1} - w_{i'}) \cdot x_{i'j} \big). \]
For convenience we also let $\area_{x}(w_i,j) \coloneqq \area_{x}(w_0,w_i,j)$. 
Therefore, D3$(\A)$ is equivalent to
\begin{align*}
    \text{D4}(\A) \coloneqq \max_{x_{ij}} \quad & \sum_{j=1}^{k} \left( \Pr_{(v,d) \sim \A}[d = j] \cdot \sum_{i=1}^{m} \Big( f_{\A_j}(w_i) \cdot \big( w_i \cdot x_{ij} - \area_{x}(w_i, j) \big) \Big) \right) \span\span\\
    \text{s.t.} \quad & \area_{x}(w_i,j) \geq \area_{x}(w_i,j-1), \quad & i \in [1,m], \, j \in [2,k];\\
    & 0 \leq x_{0j} \leq x_{1j} \leq \cdots \leq x_{mj} \leq 1, \quad & j \in [1,k].
\end{align*}

\subsubsection{Aligning the Allocation Curves}

In the following, we aim to further characterize the optimal solution to D4$(\A)$. We will show the following theorem that shows the allocation curves are aligned below the lower envelope:
\begin{theorem} \label{theorem:deadlines_align_process}
There exists a set of optimal allocation variables $\{\hat{x}_{ij}\}$ to D4$(\A)$ in which for all $(i,j)$ such that $\underline{F}_{\A_{j'}}(w_i) = 0$ for all $j < j' \leq k$, we have $\hat{x}_{i'j'} = \hat{x}_{i'j}$ for all $i' \leq i$ and $j' \in [j+1,k]$.
\end{theorem}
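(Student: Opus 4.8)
By Lemma~\ref{lemma:deadlines_linear_program_piecewise_constant} I start from an optimal solution of D4$(\A)$ that is piecewise constant on the intervals $[w_i,w_{i+1})$, and I work with its \emph{area curves} $A_j(v)\coloneqq\area_x(v,j)$: each $A_j$ is convex, non-decreasing, of slope in $[0,1]$, with $A_j(0)=0$; the inter-deadline IC constraints are precisely the pointwise inequalities $A_1\le A_2\le\cdots\le A_k$; and deadline $j$'s contribution to the objective depends only on $A_j$ and on its slope $x_j$ at the grid points $w_i$ with $f_{\A_j}(w_i)>0$. Write $\hat{i}_j$ as in Definition~\ref{def:lower_envelope}, so $\hat{i}_1\le\hat{i}_2\le\cdots\le\hat{i}_k$ and deadlines $j,\dots,k$ all carry zero mass on $[0,w_{\hat{i}_j}]$; since the hypothesis ``$\underline{F}_{\A_{j'}}(w_i)=0$ for all $j<j'\le k$'' is the same as $i\le\hat{i}_{j+1}$, it suffices to exhibit an optimal solution in which $A_j,A_{j+1},\dots,A_k$ coincide on $[0,w_{\hat{i}_{j+1}}]$ for every $j$, after which the coordinate equalities $x_{i'j'}=x_{i'j}$ follow by differencing area values at consecutive grid points below the threshold (the boundary index needing a small amount of extra care, extending the common segment by one step using the structure of $\textsf{LE}(\A)$).

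\textbf{The alignment step.} I induct on $j=1,\dots,k-1$, maintaining that after step $j$ the curves $A_j,\dots,A_k$ agree on $[0,w_{\hat{i}_{j+1}}]$ (vacuous at $j=1$ since $\hat{i}_1=0$; they already agree on $[0,w_{\hat{i}_j}]$ at the start of step $j$). On the block $B_j\coloneqq(w_{\hat{i}_j},w_{\hat{i}_{j+1}}]$, where none of $A_{j+1},\dots,A_k$ carries mass, I replace each of them by the restriction of $A_j$ to $B_j$; above $w_{\hat{i}_{j+1}}$ each $A_{j'}$ has then shed a fixed amount of area $\Delta_{j'}\coloneqq A_{j'}(w_{\hat{i}_{j+1}})-A_j(w_{\hat{i}_{j+1}})\ge 0$, and I continue it by the lower-envelope repair $A'_{j'}(v)\coloneqq\max\bigl(A_{j'}(v)-\Delta_{j'},\,A_j(v)\bigr)$. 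Because $A'_{j'}\le A_{j'}$ everywhere, the payment $p_{j'}(w_l)=w_l x_{j'}(w_l)-A_{j'}(w_l)$ at any positive-mass type of deadline $j'$ weakly increases wherever the $A_{j'}-\Delta_{j'}$ branch is active, and on the $A_j$ branch it is controlled by the same convexity bookkeeping as in Eq.~(\ref{eq:public_prices_only_go_up}) in the proof of Lemma~\ref{lemma:public_highest_type_gets_item_deterministically}; below $w_{\hat{i}_{j+1}}$ nothing is charged. Hence the objective weakly increases at each step, and since we started from an optimum it is preserved, so the final solution is optimal and has the claimed alignment.

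\textbf{Feasibility and the main obstacle.} It remains to check that the new curves form a feasible chain: each $x'_{j'}$ (the slope of $A'_{j'}$) is a monotone $[0,1]$-valued allocation and $A_1\le\cdots\le A_k$ still holds. Convexity carries most of this: a pointwise maximum of convex curves of slope $\le1$ that agree at a point is again such, and $A'_{j'}$ lies between $A_j$ and $A_{j'}$ by construction, so $A_j\le A'_{j+1}\le\cdots\le A'_k\le A_k$ is immediate; at the lower junction $w_{\hat{i}_j}$, monotonicity of $x'_{j'}$ follows from the induction hypothesis and monotonicity of $x_j$. The delicate point, which I expect to be the real obstacle, is the behavior at the upper junction $w_{\hat{i}_{j+1}}$ and just above it. Lowering several higher-deadline curves simultaneously on $B_j$ translates all of their area functions downward by the \emph{distinct} constants $\Delta_{j'}$, and one must certify that the repaired curves remain pointwise ordered among themselves and remain valid allocations there; this is exactly why all of $A_{j+1},\dots,A_k$ are collapsed onto the \emph{same} curve $A_j$ on $B_j$ — so that the shifts $\Delta_{j'}$ are nested, being cumulative-area differences — and why the continuation must be the $\max$ with $A_j$ rather than a plain translation. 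Nailing down that this repair is mutually consistent across the deadlines $j',\dots,k$, leaves the alignments already built on the earlier blocks $B_1,\dots,B_{j-1}$ untouched, and introduces no monotonicity violation immediately past $w_{\hat{i}_{j+1}}$, is the crux; everything else — the objective accounting and the passage from equality of area curves to the coordinate equalities $x_{iJ}=x_{i(J-1)}$ — is routine bookkeeping.
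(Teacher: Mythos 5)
Your proposal differs from the paper's in two substantive ways, and the second difference is where the gap lies.

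\textbf{Where you and the paper agree.} Both start from Lemma~\ref{lemma:deadlines_linear_program_piecewise_constant}, both work with the area curves $\area_x(\cdot,j)$, both aim to make the curves for deadlines $j,\dots,k$ coincide below the lower envelope, and both verify optimality by checking that payments at positive-mass types weakly increase under the modification.

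\textbf{Where you diverge.} The paper's $\mathsf{Align}(\A,\{x_{ij}\},\hat i,\hat j)$ modifies exactly \emph{one} curve per step (only the $(\hat j+1)$-th allocation changes), glues $\area_x(\cdot,\hat j)$ to the old $\area_x(\cdot,\hat j+1)$ with a tangent line of smallest feasible slope, and iterates index by index and deadline by deadline. Because only one curve moves, and the new curve is pointwise sandwiched between $\area_x(\cdot,\hat j)$ and the old $\area_x(\cdot,\hat j+1)$, the chain $\area(\cdot,\hat j)\le\area(\cdot,\hat j+1)\le\area(\cdot,\hat j+2)$ survives trivially; no mutual-consistency question ever arises. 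Your proposal instead processes a whole block $B_j$ at once and simultaneously lowers \emph{all} of $A_{j+1},\dots,A_k$, repairing each by $A'_{j'}=\max(A_{j'}-\Delta_{j'},\,A_j)$.

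\textbf{The gap.} You claim the chain $A_j\le A'_{j+1}\le\cdots\le A'_k\le A_k$ is ``immediate,'' but it is not, and you in fact later flag the same issue as ``the crux'' without resolving it. Concretely, $A'_{j'}\le A'_{j'+1}$ requires $A_{j'}(v)-\Delta_{j'}\le\max\bigl(A_{j'+1}(v)-\Delta_{j'+1},A_j(v)\bigr)$, which (on the translated branch) reduces to
\[
A_{j'+1}(v)-A_{j'}(v)\;\ge\;A_{j'+1}(w_{\hat i_{j+1}})-A_{j'}(w_{\hat i_{j+1}})\qquad\text{for all }v>w_{\hat i_{j+1}},
\]
i.e., that the gap between consecutive area curves is non-decreasing past the junction. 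The inter-deadline IC constraint $A_{j'}\le A_{j'+1}$ gives no such monotonicity of the gap; it is equivalent to $x_{j'+1}\ge x_{j'}$ pointwise, which is \emph{not} implied by feasibility. So your translation-and-max repair can produce $A'_{j'}>A'_{j'+1}$ at some $v$, breaking inter-deadline IC. A similar uncertainty affects your optimality bookkeeping on the $A_j$ branch: the payment argument from Eq.~(\ref{eq:public_prices_only_go_up}) requires allocations to weakly increase, i.e.\ $x_j(w_l)\ge x_{j'}(w_l)$ there, which again does not follow from feasibility of the original solution.

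\textbf{Why the paper's construction escapes this.} The paper never compares two modified curves; every $\mathsf{Align}$ step sandwiches the one modified curve between an unmodified lower neighbor and the current upper neighbor (Lemma~\ref{lemma:deadlines_align_preserves_feasibility_at_each_step}), so the chain ordering and slope monotonicity are both \emph{local} facts about a single curve trapped between two convex envelopes. The tangent-line repair additionally guarantees $x'_{i(\hat j+1)}\ge x_{i(\hat j+1)}$ pointwise (it takes a max with the old allocation), which is exactly what the Eq.~(\ref{eq:public_prices_only_go_up})-style payment argument needs. Your proposal would be salvageable if you either (a) fell back to a one-curve-at-a-time schedule as the paper does, or (b) replaced the translation-and-max repair by the paper's tangent construction, applied sequentially in $j'$ so that each $A'_{j'}$ is sandwiched between the already-repaired $A'_{j'-1}$ and the original $A_{j'}$. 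As written, the proposal correctly names the obstacle but leaves it unaddressed, and the specific repair you chose is one for which the obstacle can actually materialize.

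Finally, two minor points you correctly flag but also do not close: the off-by-one at the block boundary (equality of area curves on $[0,w_{\hat i_{j+1}}]$ only forces $x_{i'j'}=x_{i'j}$ for $i'\le\hat i_{j+1}-1$, not $\hat i_{j+1}$), and the preservation of already-built alignments on earlier blocks $B_1,\dots,B_{j-1}$. The paper handles the latter automatically because each $\mathsf{Align}$ step only raises area values strictly above $w_{\hat i+1}$ and leaves $[0,w_{\hat i+1}]$ fixed.
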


We will prove the above theorem by showing a process that gradually converts any optimal solution $\{x^\ast_{ij}\}$ to such a solution $\{\hat{x}_{ij}\}$ so that the feasibility is preserved and the objective is weakly improved. Consider the following process, which we refer to as $\mathsf{Align}(\A, \{x_{ij}\}, \hat{i}, \hat{j})$, that takes the prior $\A$, the allocation variables $\{x_{ij}\}$, a valuation type $\hat{i} \in [0,m]$, and a deadline $\hat{j} \in [1, k-1]$ as inputs, and modifies $\{x_{ij}\}$ into another solution $\{x'_{ij}\}$ as follows:

\begin{enumerate}
    \item Let $x'_{ij} = x_{ij}$ for all $i,j$, except for $i \geq \hat{i}$ and $j = \hat{j}+1$. 
    \item Initialize $x'_{\hat{i}(\hat{j}+1)} = x_{\hat{i}\hat{j}}$.
    \item If $\hat{i} \neq m$, let $y_{\hat{i}\hat{j}}$ be the smallest $y \geq 0$ such that 
        \begin{align}
            \area_{x}(w_{\hat{i}+1}, \hat{j}) + (w_i - w_{\hat{i}+1}) \cdot y = \area_{x}(w_{i'}, \hat{j}+1) \label{eq:deadlines_align_process}
        \end{align}
         is satisfied for some $i' > \hat{i}$; equivalently, $y_{\hat{i}\hat{j}}$ is the smallest $y$ such that the straight line $\ell_y(\cdot)$ passing $\big(w_{\hat{i}+1}, \area_{x}(w_{\hat{i}+1}, \hat{j})\big)$ with slope $y$ is tangent to $\area_{x}(v, \hat{j}+1)$ at some $v = v' = w_{i'}$ (see Figure~\ref{fig:deadlines_align}). 
    \item Let $y'_{\hat{i}\hat{j}} = \min \{ y_{\hat{i}\hat{j}}, 1 \}$, and let $x'_{i(\hat{j}+1)} = \max \{ y'_{\hat{i}\hat{j}}, x_{i(\hat{j}+1)} \} \geq x_{i(\hat{j}+1)}$ for all $i \in [\hat{i}+1,m]$.
    \item Return $\{x'_{ij}\}$ as the modified solution.
\end{enumerate}

\begin{figure}[t]
        \centering
        \begin{tikzpicture}[scale=0.5]
        
            \draw[thick,->] (0,0) -- (20,0) node[anchor=north west] {$v$};
            \draw[thick,->] (0,0) -- (0,11) node[anchor=south east] {};
            \foreach \x in {2,3,4}
                \draw (4 * \x,1pt) -- (4 * \x, -1pt) node[anchor=north] {$w_\x$};
            \draw (4, 1pt) -- (4, -1pt) node[anchor=north] {$w_1 = w_{\hat{i}}$};
            \filldraw[black] (4, 1.5) circle (2pt) node[]{};
            \filldraw[black] (8, 3.5) circle (2pt) node[anchor=west]{};
            \filldraw[black] (12, 6) circle (2pt) node[anchor=west]{};
            \filldraw[black] (16, 9) circle (2pt) node[anchor=west]{};
            
            \filldraw[blue] (4, 0.5) circle (2pt) node[]{};
            \filldraw[blue] (8, 2) circle (2pt) node[anchor=west]{};
            \filldraw[blue] (12, 4) circle (2pt) node[anchor=west]{};
            \filldraw[blue] (16, 6.5) circle (2pt) node[anchor=west]{};
                
            % area curves
            \draw[black, thick] (0, 0) -- (4, 1.5) -- (8, 3.5) -- (12, 6) -- (16, 9) -- (18, 11) {};
            \node at (6, 4) {$\area_{x}(v,\hat{j}+1)$};
            \draw[blue] (0, 0) -- (4, 0.5) -- (8, 2) -- (12, 4) -- (16, 6.5) -- (18, 7.8) {};
            \node[blue] at (19,5.5) {$\area_{x}(v,\hat{j}) = \area_{x'}(v,\hat{j})$};
            
            % aligned curve
            \draw[dashed, thick, red] (0, 0) -- (4, 0.5) -- (8, 2) -- (16, 9) -- (18, 11) {};
            \draw[dashed] (8, 2) -- (16, 9); % -- (18, 10.75);
            \node[red] at (14.4, 6.6) {$\ell_{y_{\hat{i}\hat{j}}}$};
            \node[red] at (18.4, 8.6) {$\area_{x'}(v,\hat{j}+1)$};
            \draw[->, red] (12,2) -- (10.2, 3.8);
            \node[red] at (13.7,1.6) {slope $ = y_{\hat{i}\hat{j}}$};

        \end{tikzpicture}
    \caption{Illustration of $\mathsf{Align}(\A, \{x_{ij}\}, \hat{i}, \hat{j})$ with $\hat{i} = 1$. In step (3), $y_{\hat{i}\hat{j}}$ is the smallest $y$ such that the straight line $\ell_y(\cdot)$ passing $\big(w_2, \area_{x}(w_2, \hat{j})\big)$ with slope $y$ is tangent to $\area_{x}(v, \hat{j}+1)$. The tangent point is $v' = w_4$, and thus $\mathsf{Align}(\A, \{x_{ij}\}, \hat{i}, \hat{j})$ sets $x'_{2(\hat{j}+1)} = x'_{3(\hat{j}+1)} = y_{\hat{i}\hat{j}}$, and $x'_{i(\hat{j}+1)} = x_{i(\hat{j}+1)}$ for $i \geq 4$. The resulting new area curve $\area_{x'}(v, \hat{j}+1)$ shown in red remains convex, and lies between $\area_{x}(v, \hat{j})$ and $\area_{x}(v, \hat{j}+1)$ at all $v$.}
    \label{fig:deadlines_align}
\end{figure}

Recall from Definition~\ref{def:lower_envelope} that $\hat{i}_j \coloneqq \max\{i : \underline{F}_{\A_{j'}}(v_i) = 0 \quad \forall j' \in [j,k] \}$ is the highest type $i$ such that no buyer with valuation at most $v_i$ and deadline at least $j$ exists in $\A$. 

Define $\mathsf{Align}(\A, \{x_{ij}\})$ as follows: Apply $\mathsf{Align}(\A, \{x_{ij}\}, \hat{i}, \hat{j})$ in increasing order of $\hat{j}$ from $\hat{j} = 1$ to $k-1$, and for each $\hat{j}$, in increasing order of $\hat{i}$ from $\hat{i} = 0$ to $\hat{i}_{\hat{j}}$. We next show that (i) If $\{x_{ij}\}$ input to $\mathsf{Align}(\A, \{x_{ij}\}, \hat{i}, \hat{j})$ is feasible, so is the constructed $\{x'_{ij}\}$ (via Lemmas~\ref{lemma:deadlines_align_preserves_feasibility_at_each_step}-~\ref{lemma:deadlines_align_preserves_feasibility}); and (ii) If the input $\{x_{ij}\}$ is optimal for D4$(\A)$, then so is $\{\hat{x}_{ij}\}$ (via Lemma~\ref{lemma:deadlines_aligned_solution_is_optimal}). 

Consider the execution of $\mathsf{Align}(\A, \{x_{ij}\})$, and the instant just before the execution of  $\mathsf{Align}(\A, \{x_{ij}\}, \hat{i}, \hat{j})$ for some $\hat{i} \in [0,m]$ and $\hat{j} \in [1, k-1]$. The next lemma assumes the following inductive scenario:

\noindent {\bf Inductive Scenario $P(\hat{i}, \hat{j})$:} Assume the input  $\{x_{ij}\}$ to $\mathsf{Align}(\A, \{x_{ij}\}, \hat{i}, \hat{j})$ is feasible for D4$(\A)$. Further, by the execution of $\mathsf{Align}(\A, \{x_{ij}\})$, we have $x_{i(\hat{j}+1)} = x_{i\hat{j}}$ holds for all $i \in [0, \hat{i}-1]$.

\begin{lemma} \label{lemma:deadlines_align_preserves_feasibility_at_each_step}
Assuming $P(\hat{i}, \hat{j})$, the procedure $\mathsf{Align}(\A, \{x_{ij}\}, \hat{i}, \hat{j})$ is well-defined, and guarantees that the output $\{x'_{ij}\}$ of $\mathsf{Align}(\A, \{x_{ij}\}, \hat{i}, \hat{j})$ is feasible.
\end{lemma}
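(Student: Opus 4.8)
The plan is to reduce everything to the behaviour of the single modified column $\hat j+1$, and in particular to establish a ``sandwich'' inequality $\area_x(v,\hat j)\le\area_{x'}(v,\hat j+1)\le\area_x(v,\hat j+1)$ for all $v\in[0,v_n]$, where $\{x_{ij}\}$ is the input and $\{x'_{ij}\}$ the output of $\mathsf{Align}(\A,\{x_{ij}\},\hat i,\hat j)$; all constraints of D4$(\A)$ then follow, together with a routine monotonicity check. First I would dispose of well-definedness. When $\hat i=m$ the procedure merely resets $x'_{m(\hat j+1)}\gets x_{m\hat j}\in[0,1]$, which changes no $\area_x(\cdot,j)$ (these do not depend on $x_{mj}$) and keeps the column monotone by $P(m,\hat j)$, so there is nothing to check. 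For $\hat i<m$, note that $P(\hat i,\hat j)$ gives $x_{i(\hat j+1)}=x_{i\hat j}$, hence $\area_x(w_i,\hat j)=\area_x(w_i,\hat j+1)$, for all $i\le\hat i$; combined with the input's inter-deadline IC at $w_{\hat i+1}$, the point $P=\big(w_{\hat i+1},\area_x(w_{\hat i+1},\hat j)\big)$ lies weakly below the convex, non-decreasing, piecewise-linear curve $\area_x(\cdot,\hat j+1)$. Therefore the finitely many chord slopes from $P$ to the breakpoints $w_{i'}$, $i'>\hat i$, are non-negative, $y_{\hat i\hat j}$ is their minimum, and convexity makes the line through $P$ of that slope a supporting line meeting the curve at a breakpoint $w_{i'}$, which is exactly the tangency demanded in step~(3). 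The same comparison restricted to $[w_{\hat i},w_{\hat i+1}]$ also yields the auxiliary inequality $x_{\hat i\hat j}\le x_{\hat i(\hat j+1)}$, needed later.

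Next I would describe $\area_{x'}(\cdot,\hat j+1)$ piece by piece: it agrees with $\area_x(\cdot,\hat j+1)$ on $[0,w_{\hat i}]$; on $[w_{\hat i},w_{\hat i+1}]$ it has slope $x_{\hat i\hat j}$, so it descends exactly to the height of $P$ at $w_{\hat i+1}$; on $[w_{\hat i+1},w_{i'}]$ (with $w_{i'}$ the tangent point) every clamped slope $\max\{y'_{\hat i\hat j},x_{i(\hat j+1)}\}$ equals $y'_{\hat i\hat j}\coloneqq\min\{y_{\hat i\hat j},1\}$, because the slopes $x_{i(\hat j+1)}$ of the convex curve to the left of its tangent point are $\le y_{\hat i\hat j}$, so the modified curve follows the supporting line $\ell_{y'_{\hat i\hat j}}$ through $P$; and on $[w_{i'},v_n]$ it rejoins $\area_x(\cdot,\hat j+1)$ when $y_{\hat i\hat j}\le1$, and otherwise (the whole tail being clamped to slope $1$) continues along the slope-$1$ line through $P$. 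From this the \emph{upper} half of the sandwich is immediate: the curve drops on $[w_{\hat i},w_{\hat i+1}]$ since $x_{\hat i\hat j}\le x_{\hat i(\hat j+1)}$; it lies below the convex curve as a supporting line on $[w_{\hat i+1},w_{i'}]$; and on the last stretch it either rejoins that curve, or — when $y_{\hat i\hat j}>1$, so every chord slope exceeds $1$ while $\area_x(\cdot,\hat j+1)$ has slopes $\le1$ — stays strictly below it at every breakpoint.

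The \emph{lower} half, $\area_{x'}(\cdot,\hat j+1)\ge\area_x(\cdot,\hat j)$, is the main obstacle, since a termwise slope comparison fails (the inter-deadline constraints order areas, not slopes). Here I would argue that on $[0,w_{\hat i+1}]$ the two curves are literally equal (by $P(\hat i,\hat j)$ and the slope $x_{\hat i\hat j}$ on the last subsegment); that on $[w_{\hat i+1},w_{i'}]$ the modified curve is the line $\ell_{y'_{\hat i\hat j}}$ through $P$, while $\area_x(\cdot,\hat j)$ has, over every initial subinterval $[w_{\hat i+1},v]$ with $v\le w_{i'}$, average slope at most $y'_{\hat i\hat j}$ — indeed column $\hat j$ is convex, so its average slopes increase with $v$ and are bounded by the average slope up to $w_{i'}$, which is $\le y_{\hat i\hat j}$ because $\area_x(w_{i'},\hat j)\le\area_x(w_{i'},\hat j+1)=\ell_{y_{\hat i\hat j}}(w_{i'})$, and moreover every column-$\hat j$ slope is $\le1$, covering the case $y_{\hat i\hat j}>1$; and that on $[w_{i'},v_n]$ the input IC $\area_x(\cdot,\hat j+1)\ge\area_x(\cdot,\hat j)$ handles $y_{\hat i\hat j}\le1$, while the slope-$1$ line through $P$ dominates $\area_x(\cdot,\hat j)$ when $y_{\hat i\hat j}>1$. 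This is the same supporting-line idea as in the proof of Lemma~\ref{lemma:public_highest_type_gets_item_deterministically}, but harder, because the modified curve must be trapped between two adjacent columns rather than merely pushed below a single one.

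Finally I would harvest feasibility of $\{x'_{ij}\}$ for D4$(\A)$. Only column $\hat j+1$ changes; $x'_{i(\hat j+1)}\in[0,1]$ follows from $y'_{\hat i\hat j}\le1$ and $x_{i(\hat j+1)}\le1$; within-column monotonicity follows from monotonicity of column $\hat j$ on the indices $\le\hat i$ (where $x'$ equals it), the chain $x_{\hat i\hat j}\le x_{\hat i(\hat j+1)}\le x_{(\hat i+1)(\hat j+1)}$ at the junction $i=\hat i$, and the fact that $i\mapsto\max\{y'_{\hat i\hat j},x_{i(\hat j+1)}\}$ is non-decreasing for $i>\hat i$; the inter-deadline IC inequalities not involving column $\hat j+1$ are untouched; the pair $(\hat j,\hat j+1)$ is precisely the lower half of the sandwich; and the pair $(\hat j+1,\hat j+2)$ follows from $\area_{x'}(\cdot,\hat j+2)=\area_x(\cdot,\hat j+2)\ge\area_x(\cdot,\hat j+1)\ge\area_{x'}(\cdot,\hat j+1)$, that is, from the input IC together with the upper half of the sandwich. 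Hence $\{x'_{ij}\}$ is feasible, and the procedure is well-defined. I expect essentially all of the real work to be in the lower half of the sandwich, where the clamp to $y'_{\hat i\hat j}$ must be shown to simultaneously not undershoot column $\hat j$ and not overshoot column $\hat j+1$.
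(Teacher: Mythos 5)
Your proof is correct and follows essentially the same approach as the paper's: establish the sandwich $\area_x(v,\hat{j}) \le \area_{x'}(v,\hat{j}+1) \le \area_x(v,\hat{j}+1)$, read off both inter-deadline IC constraints touching the modified column from it, and finish with a within-column monotonicity and boundedness check. Where you add value is in the lower half of the sandwich, which the paper compresses into a single sentence; your secant-slope/convexity argument on $[w_{\hat{i}+1},w_{i'}]$ (that $\area_x(\cdot,\hat{j})$ has average slope at most $y'_{\hat{i}\hat{j}}$ over every initial subinterval, hence stays below $\ell_{y'_{\hat{i}\hat{j}}}$) is exactly the detail that sentence hides, and your split into the case $y_{\hat{i}\hat{j}}>1$ is handled cleanly via the uniform bound $x_{i\hat{j}}\le 1$. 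One small divergence: at the junction $i=\hat{i}\to\hat{i}+1$ the paper argues through column $\hat{j}$, asserting $y_{\hat{i}\hat{j}}\ge x_{(\hat{i}+1)\hat{j}}$ and $x'_{(\hat{i}+1)(\hat{j}+1)}=\min\{y_{\hat{i}\hat{j}},1\}$, whereas you go through column $\hat{j}+1$ via $x_{\hat{i}\hat{j}}\le x_{\hat{i}(\hat{j}+1)}\le x_{(\hat{i}+1)(\hat{j}+1)}\le \max\{y'_{\hat{i}\hat{j}},x_{(\hat{i}+1)(\hat{j}+1)}\}=x'_{(\hat{i}+1)(\hat{j}+1)}$. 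Your chain is slightly more robust: when the inter-deadline IC is tight at $w_{\hat{i}+1}$ (so $P$ lies on the $(\hat{j}+1)$-curve and the smallest admissible $y$ can be $0$), the paper's two displayed relations need a separate degenerate-case check (the procedure then leaves the tail unchanged and the conclusion still holds), while your argument avoids the issue entirely. Otherwise the two proofs are the same.
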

\begin{proof}
Assume $P(\hat{i}, \hat{j})$ holds. Then $(w_{\hat{i}}, \area_{x}(w_{\hat{i}}, \hat{j}))$ lies weakly below the curve $\area_{x}(v, \hat{j}+1)$. By the convexity of $\area_{x}(v, \hat{j}+1)$, $y_{\hat{i}\hat{j}}$ uniquely exists, i.e., $\mathsf{Align}(\A, \{x_{ij}\}, \hat{i}, \hat{j})$ is well-defined. Furthermore, we have $y_{\hat{i}\hat{j}} \geq x_{(\hat{i}+1)\hat{j}}$, since the latter is the slope of $\area_{x}(v, \hat{j})$ at $v = (w_{\hat{i}+1})^+$. Therefore we have $x'_{(\hat{i}+1)(\hat{j}+1)} = \min\{ y_{\hat{i}\hat{j}}, 1\} \geq x_{(\hat{i}+1)\hat{j}} \geq x_{\hat{i}\hat{j}} = x'_{\hat{i}(\hat{j}+1)}.$
By $P(\hat{i}, \hat{j})$, we also have $x'_{i(\hat{j}+1)} = x_{i\hat{j}}$ for $i \in [0, \hat{i}]$, which implies $\{x'_{i(\hat{j}+1)}\}$ is non-decreasing for $i \in [0, \hat{i}]$. Finally, notice that $\{x'_{i(\hat{j}+1)}\}$ is the max of two non-decreasing functions for $i \in [\hat{i}+1, m]$. Combining the above, $\{x'_{i(\hat{j}+1)}\}$ is non-decreasing for $i \in [0,m]$. 

Suppose $y_{\hat{i}\hat{j}} \leq 1$, and thus $y'_{\hat{i}\hat{j}} = y_{\hat{i}\hat{j}}$. Recall that $\ell_{y_{\hat{i}\hat{j}}}$ is tangent to $\area_{x}(v,\hat{j}+1)$ at $v = w_{i'}$. By convexity of $\area_{x}(v,\hat{j}+1)$, if $i' < m$, we must have $y_{\hat{i}\hat{j}} \leq x_{i' (\hat{j}+1)}$, %otherwise there must exist some $y < y_{\hat{i}\hat{j}}$ such that $\ell_y$ is tangent to $\area_{x}(v,\hat{j}+1)$ at $v = w_{i'+1}$. 
so we have
\[ x'_{i(\hat{j}+1)} =
\begin{cases}
x_{i \hat{j}}, & i \in [0, \hat{i}], \\
y_{\hat{i}\hat{j}}, & i \in [\hat{i}+1, i'-1], \\
x_{i (\hat{j}+1)}, & i \in [i', m-1],
\end{cases}
\quad
\area_{x'} (v, \hat{j}+1) =
\begin{cases}
\area_{x}(v, \hat{j}), & v \in [0, w_{\hat{i}+1}), \\
\ell_{y_{\hat{i}\hat{j}}}(v), & v \in [w_{\hat{i}+1}, w_{i'}), \\
\area_{x}(v, \hat{j}+1), & v \in [w_{i'}, w_m],
\end{cases}
\]
where the interval $[i', m-1]$ may be empty when $i' = m$. Otherwise, if $y_{\hat{i}\hat{j}} > 1$, then we have
\[ x'_{i(\hat{j}+1)} =
\begin{cases}
x_{i \hat{j}}, & i \in [0, \hat{i}], \\
1, & i \in [\hat{i}+1, m-1],
\end{cases}
\quad
\area_{x'} (v, \hat{j}+1) =
\begin{cases}
\area_{x}(v, \hat{j}), & v \in [0, w_{\hat{i}+1}), \\
\ell_{1}(v) < \area_{x}(v, \hat{j}+1), & v \in [w_{\hat{i}+1}, w_m].
\end{cases}
\]

Recall that $y_{\hat{i}\hat{j}} \geq x_{(\hat{i}+1)\hat{j}}$ and both $\ell_{y_{\hat{i}\hat{j}}}(v)$ and $\ell_1(v)$ goes through $\big(w_{\hat{i}+1)}, \area_{x}(w_{\hat{i}+1)}, \hat{j})\big)$. Therefore, in every interval in both cases above, $\area_{x'} (v, \hat{j}+1)$ is equal to some function upper bounded by $\area_{x} (v, \hat{j}+1)$ and lower bounded by $\area_{x}(v, \hat{j}) = \area_{x'}(v, \hat{j})$. This implies
\begin{align}
    \area_{x'} (w_i, \hat{j}) \leq \area_{x} (w_i, \hat{j}+1) \leq \area_{x} (w_i, \hat{j}+1) \leq \area_{x} (w_i, \hat{j}+2) = \area_{x'} (w_i, \hat{j}+2)
\end{align}
holds for all $i \in [0,m]$, i.e., $\{x'_{ij}\}$ satisfies all inter-deadline IC constraints (as $\mathsf{Align}(\A, \{x_{ij}\}, \hat{i}, \hat{j})$ only modifies the $(\hat{j}+1)$-th curve). Finally, since $y'_{\hat{i}\hat{j}} \leq 1$, we have $\{x'_{i(\hat{j}+1)}\} \in [0,1]$, and the lemma follows.
\end{proof}

The next lemma now follows by induction on the execution of $\mathsf{Align}(\A, \{x_{ij}\})$ using the observation from the proof of Lemma~\ref{lemma:deadlines_align_preserves_feasibility_at_each_step} that any $\area$ curve cannot increase in an iteration.

\begin{lemma} \label{lemma:deadlines_align_preserves_feasibility}
Suppose the input $\{x_{ij}\}$ to $\mathsf{Align}(\A, \{x_{ij}\})$ is feasible to \textsf{D4}$(\A)$. Then $\{\hat{x}_{ij}\}$ (the output of $\mathsf{Align}(\A, \{x_{ij}\})$ is also feasible in D4$(\A)$. Furthermore, for all $i \in [0,m], j \in [1,k]$ it holds that
\begin{align}
    \area_{\hat{x}} (w_i, j) \leq \area_{x} (w_i, j).
\end{align}
\end{lemma}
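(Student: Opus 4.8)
The plan is to prove the statement by induction over the sequence of calls to $\mathsf{Align}(\A, \{x_{ij}\}, \hat{i}, \hat{j})$ in the order they fire inside $\mathsf{Align}(\A, \{x_{ij}\})$, i.e.\ lexicographically in $(\hat{j}, \hat{i})$ with $\hat{j}$ ranging over $[1,k-1]$ and, for each $\hat{j}$, $\hat{i}$ ranging over $[0, \hat{i}_{\hat{j}}]$. The invariant I would maintain is: the solution currently in hand is feasible for \textsf{D4}$(\A)$, every $\area$ curve has only weakly decreased relative to the original input, and row $j$ is unchanged until the inner loop for $\hat{j}=j-1$ begins (and is then never touched again after that loop finishes). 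At termination this invariant immediately yields both halves of the lemma.

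For the inductive step I would first verify that the hypothesis $P(\hat{i},\hat{j})$ of Lemma~\ref{lemma:deadlines_align_preserves_feasibility_at_each_step} holds just before each call. Feasibility of the incoming solution is the induction hypothesis (for the very first call, $\hat{j}=1$, $\hat{i}=0$, it is the original assumption). For the second clause of $P(\hat{i},\hat{j})$ — that $x_{i(\hat{j}+1)} = x_{i\hat{j}}$ for all $i\in[0,\hat{i}-1]$ — I would argue as follows. Each call $\mathsf{Align}(\cdot,\cdot,\cdot,\hat{j}')$ touches only row $\hat{j}'+1$, so by the time the inner loop for $\hat{j}$ starts, row $\hat{j}$ has been frozen by the (already completed) inner loop for $\hat{j}-1$, and it stays frozen throughout the loop for $\hat{j}$; hence the numbers $x_{i\hat{j}}$ in play are fixed. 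Within the loop for $\hat{j}$, the earlier call with index $\hat{i}'\in[0,\hat{i}-1]$ runs step~2 and pins $x_{\hat{i}'(\hat{j}+1)} = x_{\hat{i}'\hat{j}}$, and every later call in that loop has index strictly larger than $\hat{i}'$, so its step~1 leaves entry $(\hat{i}',\hat{j}+1)$ intact. Thus at the moment the $\hat{i}$-th call begins, $x_{i(\hat{j}+1)} = x_{i\hat{j}}$ for all $i<\hat{i}$; for $\hat{i}=0$ this clause is vacuous. Consequently Lemma~\ref{lemma:deadlines_align_preserves_feasibility_at_each_step} applies and the output of every individual call is feasible, which preserves the feasibility part of the invariant.

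To propagate the $\area$ bound, I would reuse the case analysis inside the proof of Lemma~\ref{lemma:deadlines_align_preserves_feasibility_at_each_step}: each call $\mathsf{Align}(\A, \{x_{ij}\}, \hat{i}, \hat{j})$ modifies only row $\hat{j}+1$ and produces $\area_{x'}(v,\hat{j}+1)$ lying, on every subinterval, weakly below $\area_{x}(v,\hat{j}+1)$ (and weakly above $\area_{x}(v,\hat{j})$); in particular $\area_{x'}(w_i,\hat{j}+1)\le\area_{x}(w_i,\hat{j}+1)$ for all $i$. Since row $1$ is never touched and row $j\ge 2$ is touched only by calls with $\hat{j}=j-1$, composing these weak decreases across all calls — first across the inner loop for $\hat{j}=j-1$, then observing that no subsequent loop disturbs row $j$ — gives $\area_{\hat{x}}(w_i,j)\le\area_{x}(w_i,j)$ for every $i\in[0,m]$ and $j\in[1,k]$, where $\{\hat{x}_{ij}\}$ is the final output. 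Together with feasibility of the final solution, this is the lemma.

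The main obstacle I anticipate is the bookkeeping across the two nested loops: being precise that row $\hat{j}$ is inert during the inner loop for $\hat{j}$, and that step~1 of later calls does not overwrite entries that step~2 of earlier calls already pinned down. Once that ordering discipline is stated cleanly, the per-call feasibility is delegated wholesale to Lemma~\ref{lemma:deadlines_align_preserves_feasibility_at_each_step}, and the $\area$ monotonicity is a direct corollary of its proof, so the rest is routine.
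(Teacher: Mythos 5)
Your proof is correct and takes essentially the same approach as the paper: the paper's own proof is a single sentence invoking induction on the execution of $\mathsf{Align}(\A,\{x_{ij}\})$ together with the per-call observation from the proof of Lemma~\ref{lemma:deadlines_align_preserves_feasibility_at_each_step} that no $\area$ curve increases. You have simply spelled out the bookkeeping that the paper leaves implicit, most importantly the verification that the inductive scenario $P(\hat{i},\hat{j})$ holds before each call (row $\hat{j}$ is frozen during the inner loop for $\hat{j}$, earlier calls in that loop pin entries $(\hat{i}',\hat{j}+1)$ for $\hat{i}'<\hat{i}$, and later calls do not overwrite them), which is precisely what is needed for the per-step lemma to apply at every iteration.
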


The next lemma shows the optimality of $\{\hat{x}_{ij}\}$ if the input to $\mathsf{Align}(\A, \{x_{ij}\})$ is optimal. 

\begin{lemma} \label{lemma:deadlines_aligned_solution_is_optimal}
If the input $\{x_{ij}\}$ to $\mathsf{Align}(\A, \{x_{ij}\})$ is optimal for D4$(\A)$, then the output $\{\hat{x}_{ij}\}$ is also optimal for D4$(\A)$.
\end{lemma}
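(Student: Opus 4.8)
The plan is to combine Lemma~\ref{lemma:deadlines_align_preserves_feasibility}, which already guarantees that the output $\{\hat{x}_{ij}\}$ of $\mathsf{Align}(\A,\{x_{ij}\})$ is feasible for $\text{D4}(\A)$, with a monotonicity argument on the objective. Concretely, I would show that every single invocation $\mathsf{Align}(\A,\{x_{ij}\},\hat{i},\hat{j})$ performed during $\mathsf{Align}(\A,\{x_{ij}\})$ weakly increases the $\text{D4}(\A)$ objective; since $\mathsf{Align}(\A,\{x_{ij}\})$ is simply the composition of these invocations (in increasing order of $\hat{j}$, and for each $\hat{j}$ in increasing order of $\hat{i}$ up to $\hat{i}_{\hat{j}}$), it follows that the objective of $\{\hat{x}_{ij}\}$ is at least that of the input $\{x_{ij}\}$. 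As $\{x_{ij}\}$ is assumed optimal and $\{\hat{x}_{ij}\}$ is feasible with objective no smaller, $\{\hat{x}_{ij}\}$ must be optimal as well.

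For a single step $\mathsf{Align}(\A,\{x_{ij}\},\hat{i},\hat{j})$ with output $\{x'_{ij}\}$, write $j^\star = \hat{j}+1$, and recall that the $\text{D4}(\A)$ objective is the nonnegatively weighted sum $\sum_{j}\Pr_{(v,d)\sim\A}[d=j]\sum_{i} f_{\A_j}(w_i)\cdot P_{ij}$, where $P_{ij}=w_i\cdot x_{ij}-\area_{x}(w_i,j)$ is the price charged to type $(w_i,j)$. Step~1 of the procedure modifies only curve $j^\star$ and only the variables $x_{ij^\star}$ with $i\ge\hat{i}$, so I would first note that the terms with $j\ne j^\star$ are unchanged, and that for $i\le\hat{i}$ we still have $\area_{x'}(w_i,j^\star)=\area_{x}(w_i,j^\star)$ (no modified variable lies below $w_{\hat{i}}$), whence $P'_{ij^\star}=P_{ij^\star}$ for $i<\hat{i}$. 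For $i=\hat{i}$ on curve $j^\star$, I would invoke $\hat{i}\le\hat{i}_{\hat{j}}$ and $\hat{j}+1\in[\hat{j},k]$: by Definition~\ref{def:lower_envelope} this gives $\underline{F}_{\A_{\hat{j}+1}}(w_{\hat{i}})=0$, hence $f_{\A_{j^\star}}(w_{\hat{i}})=0$, so that term contributes nothing regardless of the new price. For $i>\hat{i}$ on curve $j^\star$, Steps~3--4 give $x'_{ij^\star}=\max\{y'_{\hat{i}\hat{j}},x_{ij^\star}\}\ge x_{ij^\star}$, and the proof of Lemma~\ref{lemma:deadlines_align_preserves_feasibility_at_each_step} already establishes $\area_{x'}(w_i,j^\star)\le\area_{x}(w_i,j^\star)$; together these yield $P'_{ij^\star}=w_i x'_{ij^\star}-\area_{x'}(w_i,j^\star)\ge w_i x_{ij^\star}-\area_{x}(w_i,j^\star)=P_{ij^\star}$. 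Thus every term with positive weight weakly increases, so the step weakly increases the objective, and composing over all steps finishes the proof.

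The bulk of this is routine area bookkeeping, but two points need care. First, the $i=\hat{i}$ case is where the whole argument hinges on the invocation order built into $\mathsf{Align}(\A,\{x_{ij}\})$ — namely that $\mathsf{Align}(\A,\cdot,\hat{i},\hat{j})$ is only ever run for $\hat{i}\le\hat{i}_{\hat{j}}$ — so that the single type whose allocation we pin down to the previous curve's value carries no probability mass and hence cannot hurt the objective; I would make this dependence explicit. Second, I need the ``$\area$ on curve $j^\star$ only decreases'' fact in both branches of the step ($y_{\hat{i}\hat{j}}\le 1$ and $y_{\hat{i}\hat{j}}>1$), which is precisely the case split carried out in the proof of Lemma~\ref{lemma:deadlines_align_preserves_feasibility_at_each_step}, so I would quote it directly rather than redo it. Modulo these checks, the argument is short.
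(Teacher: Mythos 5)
Your proof is correct and takes essentially the same approach as the paper: both rely on the allocation variables weakly increasing, the $\area$ curves weakly decreasing (as established in Lemmas~\ref{lemma:deadlines_align_preserves_feasibility_at_each_step}--\ref{lemma:deadlines_align_preserves_feasibility}), and the types whose allocations get ``pinned down'' carrying zero probability mass because $\hat{i}\le\hat{i}_{\hat{j}}$. The only structural difference is that you argue per step and telescope, whereas the paper compares the input directly to the final output $\{\hat{x}_{ij}\}$ using the cumulative bounds, so your version is a slightly more granular rendering of the identical argument.
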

\begin{proof}
%For all $j \in [1,k]$, define 
% \[ R_{\hat{x}}(\A_j) = \Pr_{(v,d) \sim \A}[d = j] \cdot \sum_{i=1}^{m} (f_{\A_j}(w_i) \cdot \hat{p}_{ij}), \quad R_{x}(\A_j) = \Pr_{(v,d) \sim \A}[d = j] \cdot \sum_{i=1}^{m} (f_{\A_j}(w_i) \cdot p_{ij})\]
% as the revenue extracted by $\{\hat{x}_{ij}\}$ and $\{x_{ij}\}$, respectively, from the buyer with deadline $j$, where 
Recall that $\hat{p}_{ij} \coloneqq w_i \cdot \hat{x}_{ij} - \area_{\hat{x}}(i,j)$ and $p_{ij} \coloneqq w_i \cdot x_{ij} - \area_{x}(i,j)$
are the prices associated with the buyer with valuation $w_i$ and deadline $j$ when the allocation variables are $\{\hat{x}_{ij}\}$ and $\{x_{ij}\}$, respectively. %Therefore, the objective values $\{\hat{x}_{ij}\}$ and $\{x^\ast_{ij}\}$ achieve in D4$(\A)$ can be written as $\sum_{j=1}^{k} \sum_{i=1}^{m} \big( \Pr_{(v,d) \sim \A}[d = j] \cdot f_{\A_j}(w_i) \cdot \hat{p}_{ij} \big)$ and $\sum_{j=1}^{k} \sum_{i=1}^{m} \big( \Pr_{(v,d) \sim \A}[d = j] \cdot f_{\A_j}(w_i) \cdot p^\ast_{ij} \big)$, respectively. 
Since $\{\hat{x}_{i1}\} = \{x_{i1}\}$, we have $\hat{p}_{i1} = p_{i1}$ for all $i$. 

Fix some $j \in [2,k]$. For all $i \in [0, \hat{i}_{j-1}]$ we have $f_{\A_j}(w_i) = 0$ and thus $f_{\A_j}(w_i) \cdot \hat{p}_{ij} = f_{\A_j}(w_i) \cdot p_{ij} = 0$. For $i \in [\hat{i}_{j-1}+1, m]$, we have $\hat{x}_{ij} \geq x_{ij}$ by the execution of $\mathsf{Align}(\A, \{x_{ij}\})$, and thus
\begin{align*}
    \hat{p}_{ij} &= w_i \cdot \hat{x}_{ij} - \area_{\hat{x}}(i,j) \geq w_i \cdot x_{ij} - \area_{\hat{x}}(i,j) \geq w_i \cdot x_{ij} - \area_{x}(i,j) = p^\ast_{ij},
\end{align*}
where the last inequality follows from Lemma~\ref{lemma:deadlines_align_preserves_feasibility}. Therefore, for all $i \in [1,m]$ we have $f_{\A_j}(w_i) \cdot \hat{p}_{ij} \geq f_{\A_j}(w_i) \cdot p_{ij}$. Since this holds for all $j \in [1,k]$, it implies that $\{\hat{x}_{ij}\}$ achieves (in D4$(\A)$) at least the objective achieved by $\{x_{ij}\}$. By Lemma~\ref{lemma:deadlines_align_preserves_feasibility}, it is also feasible, hence optimal.
\end{proof}

\paragraph{Proof of Theorem~\ref{theorem:deadlines_align_process}.} Consider the output $\{\hat{x}_{ij}\}$ by applying $\mathsf{Align}(\A, \{x^\ast_{ij}\})$ to any arbitrary optimal solution $\{x^\ast_{ij}\}$. For any $i,j$ such that $\underline{F}_{\A_{j'}}(w_i) = 0$ for all $j' \in [j+1,k]$, we have $i \leq \hat{i}_{j} \leq \hat{i}_{j+1} \leq \cdots \leq \hat{i}_{k-1}$, and thus $\hat{x}_{ij} = \hat{x}_{i(j+1)} = \cdots = \hat{x}_{ik}$. \hfill \qedsymbol

\subsubsection{Matching the Allocation Curves to the Lower Envelope}
Next, we show that there exists some optimal solution to D4$(\A)$ that satisfies an additional set of properties required for proving Theorem~\ref{theorem:deadlines_optimal_revenue_is_a_distribution_over_posted_price_revenues}. We will prove the following theorem that shows the allocation curve only has breakpoints at values with non-zero mass in the lower envelope. Furthermore, the allocation for the highest type is deterministic.

\begin{theorem} \label{theorem:deadlines_complete_characterization_of_optimal_revenue_for_signals}
For any prior $\A = \D(t)$, there is an optimal solution $\{\hat{x}_{ij}\}$ to D4$(\A)$ that satisfies the following:
\begin{enumerate}
    \item $\hat{x}_{0j} = 0$ for all $j \in [1,k]$;
    \item If $(w_a, r) \in \textsf{LE}(\A)$, then $\hat{x}_{ij} = \hat{x}_{ir}$ for all $i \in [0,a]$ and $j \ge r$;
    \item If $(w_a, r)$ and $(w_b, r')$ are consecutive points on $\textsf{LE}(\A)$ where $a < b$ and $r' \ge r$, then $\hat{x}_{ij} = \hat{x}_{aj}$ for all $i \in [a, b)$ and $j \ge r$;
    \item $\hat{x}_{mk} = 1$.
\end{enumerate}
\end{theorem}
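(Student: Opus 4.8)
\textbf{Proof plan for Theorem~\ref{theorem:deadlines_complete_characterization_of_optimal_revenue_for_signals}.}
The plan is to start from an optimal solution $\{x^\ast_{ij}\}$ to $\textsf{D4}(\A)$ that already satisfies the conclusion of Theorem~\ref{theorem:deadlines_align_process} (apply $\mathsf{Align}(\A, \cdot)$ if necessary), and then perform three further ``clean-up'' modifications, each of which preserves feasibility and weakly improves the objective, to obtain the desired $\{\hat{x}_{ij}\}$. The modifications correspond exactly to the four listed properties: a uniform shift to kill $x_{0j}$, an observation that the alignment already gives property~2 on the lower envelope, a ``flattening'' step between consecutive lower-envelope points to get property~3, and finally a lift of the last curve to make $\hat{x}_{mk}=1$ (property~4).

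First I would handle properties~1 and~4 together, reusing the argument of Lemma~\ref{lemma:public_highest_type_gets_item_deterministically}: if $x^\ast_{mk} < 1$, add the constant $\epsilon = 1 - x^\ast_{mk}$ to \emph{every} $x^\ast_{ij}$. Each $\area_x(w_i,j)$ increases by $\epsilon w_i$, so each price $w_i x_{ij} - \area_x(w_i,j)$ is unchanged, hence the objective and all inter-deadline IC constraints are preserved; monotonicity within each curve is preserved as well. Now $x_{mk}=1$. To then force $x_{0j}=0$, project each curve downward as in that lemma: pick the tangent line from the point $(w_1, 0)$ (now below the shifted area curve) to each $\area_x(\cdot,j)$, take the pointwise max of its slope and the current curve, truncate at $1$; this only raises the area curves' lower hull near $w_0$, weakly increases all prices, keeps the highest allocation at $1$ since $w_1 < b$ is not needed here (we only need the tangent slope not to exceed $1$ once $x_{mk}=1$, which holds by convexity), and preserves inter-deadline IC because lowering the ``front'' of a curve keeps the ordering of area curves intact below the envelope. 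Property~2 for $i\le a$ where $(w_a,r)\in\textsf{LE}(\A)$ is exactly the content of Theorem~\ref{theorem:deadlines_align_process}, since $\underline F_{\A_{j'}}(w_a)=0$ for all $j'>r$ by Definition~\ref{def:lower_envelope}, so $\hat x_{ir}=\hat x_{ij}$ for $i\le a$, $j\ge r$; I would just cite it.

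The main obstacle is property~3: showing the allocation curve has \emph{no new breakpoints} strictly between two consecutive lower-envelope points $(w_a,r)$, $(w_b,r')$ for deadlines $j \ge r$. The key idea is an exchange argument in the primal: suppose $x_{ij}$ strictly increases at some $w_i$ with $a < i < b$; since no type in $\A_j$ has valuation in $(w_a, w_b)$ for $j \ge r$ (this is precisely what ``consecutive on the lower envelope'' plus the envelope's definition forces, via Observation~\ref{observation:ele_signal_property}-style reasoning applied to $\A$), flattening $x_j(\cdot)$ to be constant on $[w_a,w_b)$ for all $j\ge r$ and redistributing the ``area mass'' either doesn't change the area at the relevant breakpoints $w_a$ and $w_b$ (so prices at supported types are unchanged) or, if it does, it can only raise the area before $w_b$ and thus weakly raise later prices. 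Care is needed to maintain convexity of each area curve and the inter-deadline ordering during this flattening; I would argue convexity is preserved because flattening replaces a convex piece by its chord's slope extended, and the inter-deadline IC is preserved because all curves $j \ge r$ are flattened simultaneously in lockstep (by property~2 they already agree up to $w_a$), while curves $j < r$ are untouched and already lie weakly below. Assembling these steps in the order (shift, project front, cite alignment, flatten between envelope points) yields $\{\hat x_{ij}\}$, and Theorem~\ref{theorem:deadlines_optimal_revenue_is_a_distribution_over_posted_price_revenues} then follows by reading off $\delta^j_i = \hat x_{ij} - \hat x_{(i-1)j}$ exactly as in the public-budget proof.
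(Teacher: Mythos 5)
Your high-level plan matches the paper's proof structure (zero out the front, apply $\mathsf{Align}$ for property~2, flatten between consecutive lower-envelope points for property~3, and raise the tail for property~4), but the specific construction you propose for properties~1 and~4 has a genuine feasibility problem that the paper's proof is explicitly careful to avoid.

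The uniform shift ``add $\epsilon = 1 - x^\ast_{mk}$ to every $x^\ast_{ij}$'' can violate the feasibility constraint $x_{ij} \le 1$. Nothing in $\textsf{D4}(\A)$ or in the alignment theorem forces $x_{mk}$ to be the largest entry in row $m$: the inter-deadline IC constraints relate the area curves $\area_x(w_i, \cdot)$ at grid points $i \le m$, which do not involve $x_{mj}$ at all. In fact, in any revenue-maximizing solution one should have $x_{mj^\ast} = 1$ for $j^\ast = \max\{j : f_{\A_j}(w_m) > 0\}$ (raising $x_{mj^\ast}$ to $1$ raises $p_{mj^\ast}$ with no constraint cost), and if $f_{\A_k}(w_m) = 0$ it is entirely possible that $j^\ast < k$ and $x_{mk} < 1$. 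Then $\epsilon > 0$ and $x_{m j^\ast} + \epsilon > 1$. So the constant shift is infeasible precisely in the case where anything needs doing. The paper avoids this by a \emph{selective} lift: it sets $\hat x_{ij} = 1$ only for $i \ge m'$ and $j \ge k'$, where $(w_{m'}, k')$ is the last point of $\textsf{LE}(\A)$, and then verifies directly that this is feasible (inter-deadline IC at $j = k'$ holds because only the later curve is raised) and optimal (the only supported type touched is $(w_{m'}, k')$, whose price can only increase since the area below it is unchanged).

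Your downward projection to force $x_{0j}=0$ is also more complicated than necessary, and the inter-deadline IC preservation is not actually argued. Projecting each curve independently from $(w_1, 0)$ can in principle reorder the area curves in the interior (where the projected curve reverts to the original for one deadline but is still on the tangent segment for another); ``lowering the front of a curve keeps the ordering intact'' is not a proof and requires a case analysis you do not supply. The paper's route is much lighter: just set $x_{01} := 0$ (this lowers $\area(\cdot, 1)$, so IC against deadline $2$ is only helped, and prices $p_{i1}$ for $i\ge 1$ only go up), and then $\mathsf{Align}$ propagates $x_{0j} = 0$ to every later deadline automatically (since $\mathsf{Align}$ sets $x_{0(\hat j+1)} = x_{0\hat j}$). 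Since $\mathsf{Align}$ and the flattening never touch $x_{0j}$ again, property~1 is preserved for free.

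Finally, the flattening description is in the right spirit but is vague about \emph{what value} the interior is flattened to and why the inter-deadline IC constraints survive. The paper flattens all deadlines $j \ge r$ to the same chord slope $\area_{\bar x}(w_a,w_b,r)/(w_b-w_a)$, which is where property~2 (alignment up to $w_a$) is actually used, and then explicitly checks via the linear-interpolation identity (Eq.~(\ref{eq:deadlines_flattening})) and convexity that the new curves remain feasible and that prices at supported types are unchanged. You would need to make these choices and verifications explicit to have a complete proof.
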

\begin{proof}

\begin{figure}[t]
        \centering
        \begin{tikzpicture}[scale=0.5]
        
            \draw[thick,->] (0,0) -- (20,0) node[anchor=north west] {$v$};
            \draw[thick,->] (0,0) -- (0,11) node[anchor=south east] {};
            \foreach \x in {1,2,3,4}
                \draw (4 * \x,1pt) -- (4 * \x, -1pt) node[anchor=north] {$w_\x$};
            \filldraw[black] (4, 0) circle (2pt) node[]{};
            \filldraw[black] (4, 0.5) circle (2pt) node[]{};
            \filldraw[black] (8, 2.5) circle (2pt) node[anchor=west]{};
            \filldraw[black] (12, 5) circle (2pt) node[anchor=west]{};
            \filldraw[black] (16, 9) circle (2pt) node[anchor=west]{};
            
            \draw[thick, dashed, black] (0, 0) -- (4, 0.5) -- (8, 2.5) -- (12, 5) -- (16, 9) -- (18, 10.75) {};
            \node[black] at (17.5, 8) {$\area_{x^\ast}(v,k)$};
            
            \draw[thick, dashed, red] (0, 0) -- (4, 0) -- (8, 2.5) -- (16, 9) -- (18, 11) {};
            \node[red] at (14, 9) {$\area_{\hat{x}}(v,k)$};

        \end{tikzpicture}
    \caption{Illustration of the modification process in proof of Theorem~\ref{theorem:deadlines_complete_characterization_of_optimal_revenue_for_signals}. The black and red curves illustrate $\area_{x^\ast}(v,k)$ and $\area_{\hat{x}}(v,k)$ (corresponding to some optimal solution $\{x^\ast_{ij}\}$ and the final solution $\{\hat{x}_{ij}\}$, respectively) for the last deadline $k$. $(w_4,k)$ is the only value-deadline pair in the support of $\A$, and no value-deadline pair $(w_3, j)$ is on the lower envelope of $\A$. This implies (i) $\hat{x}_{0j} = 0$, and thus $\area_{\hat{x}}(w_1,j) = 0$; (ii) $\area_{\hat{x}}(v,k)$ is linear in $v \in [w_2, w_4]$ without a breakpoint at $v = w_3$; and (iii) $\hat{x}_{ik} = 1$ for all $i \geq 4$, i.e., $\area_{\hat{x}}(v,k)$ has a slope of $+1$ in $[w_4, w_m]$.}
    \label{fig:deadlines_fine_tuning}
\end{figure}
Similar to the proof of Theorem~\ref{theorem:deadlines_align_process}, we show a process that gradually modifies an optimal solution until it satisfies all the properties above.
%\begin{lemma} \label{lemma:dummy_type_does_not_get_item}
%For any prior $\A = \D(t)$, there is an optimal solution to D4$(\A)$ that satisfies properties (1) - (2) in Theorem~\ref{theorem:deadlines_complete_characterization_of_optimal_revenue_for_signals}.
%\end{lemma}
%\begin{proof}

Consider any optimal solution $\{ x^\ast_{ij} \}$ to D4$(\A)$ in which $x^\ast_{01} > 0$. Define another solution $\{x'_{ij}\}$ such that $x'_{01} = 0$ and $x'_{ij} = x^\ast_{ij}$ for all $(i,j) \neq (0,1)$. Since the curve $\area_{x'}(v,1)$ is upper bounded by $\area_{x^\ast}(v,1)$, it is also upper bounded by $\area_{x^\ast}(v,2) = \area_{x'}(v,2)$. Hence all inter-deadline IC constraints are preserved by $\{x'_{ij}\}$, so it is feasible. Since the objective of D4$(\A)$ does not depend on $x_{01}$, $\{x'_{ij}\}$ achieves (in D4$(\A)$) exactly the objective achieved by $\{ x^\ast_{ij} \}$, and thus is optimal. Take $\{\Bar{x}_{ij}\}$ to be the output of $\mathsf{Align}(\A, \{x'_{ij}\})$. Since $\{\Bar{x}_{ij}\}$ satisfies Theorem~\ref{theorem:deadlines_align_process}, it satisfies properties (1) and (2). 
%\end{proof}
%\begin{lemma} \label{lemma:deadlines_flattening_preserves_optimality}
%For any prior $\A = \D(t)$, there is an optimal solution to D4$(\A)$ that satisfies properties (1) - (3) in Theorem~\ref{theorem:deadlines_complete_characterization_of_optimal_revenue_for_signals}.
%\end{lemma}
%\begin{proof}
%By Lemma~\ref{lemma:dummy_type_does_not_get_item}, there is an optimal solution $\{\Bar{x}_{ij}\}$ to D4$(\A)$ that satisfies properties (1) - (2) in Theorem~\ref{theorem:deadlines_complete_characterization_of_optimal_revenue_for_signals}. Then 

Now consider another solution $\{\Tilde{x}_{ij}\}$ obtained by the following:
\begin{itemize}
    \item Initialize $\Tilde{x}_{ij} = \Bar{x}_{ij}$ for all $i,j$.
    \item For all consecutive $(w_a, r), (w_b, r') \in \textsf{LE}(\A)$ such that $b > a + 1$ and $r' \ge r$, set 
    \[ \Tilde{x}_{ij} = \frac{\area_{\Bar{x}}(w_a, w_b, r)}{w_b - w_a} \]
    for all $i \in [a, b)$ and $j \in [r,k]$.
\end{itemize}

Clearly, $\{\Tilde{x}_{ij}\}$ satisfies property (3) while preserving properties (1) and (2) from $\{\Bar{x}_{ij}\}$; for property (1), observe that $w_1 \in \values(\A)$, and thus $\Tilde{x}_{0j} = \Bar{x}_{0j}$ for all $j$. Hence it suffices to prove that $\{\Tilde{x}_{ij}\}$ is feasible and optimal to $\textsf{D4}(\A)$.

Consider any consecutive $(w_a, r), (w_b, r') \in \textsf{LE}(\A)$ such that $b > a + 1$ and $r' \ge r$. For all $j \ge r$ we have
\begin{align}
\area_{\Tilde{x}}(v,j) =
\begin{cases}
    \area_{\Bar{x}}(v,j), & v \notin (w_a, w_b),\\
    \frac{w_b - w_i}{w_b - w_a} \cdot \area_{\Bar{x}}(w_a, j) + \frac{w_i - w_a}{w_b - w_a} \cdot \area_{\Bar{x}}(w_b, j), & v \in (w_a, w_b).
\end{cases}
\label{eq:deadlines_flattening}
\end{align}

By convexity of $\area_{\Bar{x}}(v,j)$, this implies
\begin{align}
    \area_{\Tilde{x}}(v,j) \geq \area_{\Bar{x}}(v,j)  \label{eq:deadlines_flattening_increases_area}
\end{align}
for all $v \in [0, v_n]$ and $j \ge r$. By taking $j = r$ in Eq. (\ref{eq:deadlines_flattening_increases_area}) above, all inter-deadline IC constraints between deadlines $(r-1)$ and $r$ are preserved in $\{\Tilde{x}_{ij}\}$. Furthermore, since $(w_a, r), (w_b, r') \in \textsf{LE}(\A)$, we have $\area_{\Bar{x}}(v,j) = \area_{\Bar{x}}(v,r)$ for all $v \in [0, b]$, which implies all inter-deadline IC constraints for $i \in [0,b]$ and $j \geq r$ are satisfied by $\{\Tilde{x}_{ij}\}$ as well. Repeating this argument for all $a, b$, and $r$ shows all inter-deadline IC constraints in $\textsf{D4}(\A)$ are preserved in $\{\Tilde{x}_{ij}\}$. Since $\{\Tilde{x}_{ij}\} \in [0,1]$ and is monotone non-decreasing for all $j$, it is feasible.

We next show $\{\Tilde{x}_{ij}\}$ is optimal. Fix any deadline $r$, and recall that $\Tilde{p}_{ir} = w_i \cdot \Tilde{x}_{ir} - \area_{\Tilde{x}}(w_i,r)$ and $\Bar{p}_{ir} = w_i \cdot \Bar{x}_{ir} - \area_{\Bar{x}}(w_i,r)$ are the prices associated with the buyer $(v,d) = (w_i, r)$ given the allocation variables $\{\Tilde{x}_{ij}\}$ and $\{\Bar{x}_{ij}\}$, respectively. By Eq. (\ref{eq:deadlines_flattening}) and the construction of $\{\Tilde{x}_{ij}\}$, for all $i$ such that $f_{\A_r}(w_i) > 0$, it holds that (i) $\Tilde{x}_{ir} = \Bar{x}_{ir}$, (ii) $\area_{\Tilde{x}}(w_i,r) = \area_{\Bar{x}}(w_i,r)$, and thus (iii) $\Tilde{p}_{ir} = \Bar{p}_{ir}$. Thus the revenue raised by $\{\Tilde{x}_{ij}\}$ for deadline $r$ equals that of $\{\Bar{x}_{ij}\}$. Since this is true for all $r$, $\{\Tilde{x}_{ij}\}$ raises the same revenue as $\{\Bar{x}_{ij}\}$, and thus is optimal.
%\end{proof}
%\paragraph{Completing proof of Theorem~\ref{theorem:deadlines_complete_characterization_of_optimal_revenue_for_signals}.}
%By Lemma~\ref{lemma:deadlines_flattening_preserves_optimality}, there exists an optimal solution $\{\Tilde{x}_{ij}\}$ satisfying properties (1) - (3). 

As the last step, we modify $\{\Tilde{x}_{ij}\}$ such that it satisfies property (4) and remains optimal. Let $k' \leq k$ be the largest $j$ such that some $(w_i, j) \in \textsf{LE}(\A)$, and let $m'$ be the largest such $i$. Consider the following solution $\{\hat{x}_{ij}\}$:
\begin{align*}
    \hat{x}_{ij} \coloneqq
    \begin{cases}
        1, \quad & i \in [m', m], j \in [k', k];\\
        \Tilde{x}_{ij}, \quad & \text{otherwise.}
    \end{cases}
\end{align*}

Therefore, $\hat{x}_{mk} = 1$, i.e., property (4) is satisfied. It is easy to see properties (1) - (3) are preserved by $\{\hat{x}_{ij}\}$. Hence it suffices to show $\{\hat{x}_{ij}\}$ is feasible and optimal to $\textsf{D4}(\A)$.

The curve $\area_{\hat{x}}(v,k')$ is weakly above $\area_{\Tilde{x}}(v,k')$, which by the feasibility of $\{\Tilde{x}_{ij}\}$, is weakly above $\area_{\Tilde{x}}(v,k'-1) = \area_{\hat{x}}(v,k'-1)$. Hence $\{\hat{x}_{ij}\}$ satisfies all inter-deadline IC constraints (note that for all $j > k'$ we have $\area_{\hat{x}}(v,j) = \area_{\hat{x}}(v,k')$ for all $v \in [0, v_n]$). Since $\{\hat{x}_{ij}\} \in [0,1]$ and remains monotone non-decreasing for all $j$, it is feasible.

Again let $\Tilde{p}_{ij} = w_i \cdot \Tilde{x}_{ij} - \area_{\Tilde{x}}(w_i,j)$ and $\hat{p}_{ij} = w_i \cdot \hat{x}_{ij} - \area_{\hat{x}}(w_i,j)$ denote the price associated with the buyer $(v,d) = (w_i, j)$ given the allocation variables $\{\Tilde{x}_{ij}\}$ and $\{\hat{x}_{ij}\}$, respectively. We have $\hat{p}_{ij} = \Tilde{p}_{ij}$ for all $j < k'$ or $i < m'$, and the only other $(w_i, j)$ such that $f_{\A_j}(w_i) > 0$ is $(w_{m'}, k')$. Note that the two curves $\area_{\hat{x}}(v,k')$ and $\area_{\Tilde{x}}(v,k')$ are identical for $v \in [0, w_{m'}]$. Thus, we have
\begin{align*}
    \hat{p}_{m'k'} &= w_{m'} \cdot \hat{x}_{m'k'} - \area_{\hat{x}}(w_{m'},k') = w_{m'} - \area_{\hat{x}}(w_{m'},k') \tag*{(since $\hat{x}_{m'k'} = 1$)}\\
    &= w_{m'} - \area_{\Tilde{x}}(w_{m'},k') \geq w_{m'} \cdot \Tilde{x}_{m'k'} - \area_{\Tilde{x}}(w_{m'},k') = \Tilde{p}_{m'k'},
\end{align*}
where the inequality follows from $\Tilde{x}_{m'k'} \leq 1$. Therefore, we have $\hat{p}_{ij} \geq \Tilde{p}_{ij}$ for all $(w_i, j)$ such that $f_{\A_j}(w_i) > 0$, which implies $\{\hat{x}_{ij}\}$ raises at least the revenue achieved by $\{\Tilde{x}_{ij}\}$. Hence $\{\hat{x}_{ij}\}$ is optimal, and the theorem follows. 
\end{proof}

\subsubsection{Completing proof of Theorem~\ref{theorem:deadlines_optimal_revenue_is_a_distribution_over_posted_price_revenues}}
Consider an optimal solution $\{\hat{x}_{ij}\}$ to D4$(\A)$ satisfying Theorem~\ref{theorem:deadlines_complete_characterization_of_optimal_revenue_for_signals}. For all $j \in [1,k]$, let $\hat{x}_j(w) = \hat{x}_{ij}$ for all $w \in [w_i, w_{i+1})$. Let $w_0 = 0$. Then for $i' \in \{1, \ldots, m\}$ denote allocation function $x^j_{i'}(\cdot)$ such that 
\[ x^j_{i'}(w_i) =
\begin{cases}
1, \quad & w_i \geq w_{i'};\\
0, \quad & w_i < w_{i'}.
\end{cases}
\]
This corresponds to posting a price $w_{i'}$ and selling to only the buyers with valuation $w_i \geq w_{i'}$ with probability $1$. Therefore, we have
\begin{align}
    \sum_{i=1}^{m} \big( f_{\A_j}(w_i) \cdot p^j_{i'}(w_i) \big) &= \overline{F}_{\A_j}(w_{i'}) \cdot w_{i'}, \label{eq:deadlines_decomposition}
\end{align}
where $p^j_{i'} \coloneqq w_i \cdot x^j_{i'}(w_i) - \int_{w=0}^{w_i} x^j_{i'}(w) \dd{w}$ is the price associated with the buyer with valuation $w_i$ and deadline $j$ given the allocation curve $x^j_{i'}(\cdot)$.

Let $\delta^j_{i'} = \hat{x}_j(w_{i'}) - \hat{x}_j(w_{i'-1}) = \hat{x}_{i'j} - \hat{x}_{(i'-1)j}$ for all $j \in [1,k]$ and $i' \in [1,m]$. Then for all $i \in [1,m]$ we can write
\[ \hat{x}_j(w_i) = \sum\limits_{i'=1}^{m} \big( \delta^j_{i'} \cdot x^j_{i'}(w_i) \big), \quad \hat{p}_j(w_i) = \sum\limits_{i'=1}^{m} \big( \delta^j_{i'} \cdot p^j_{i'}(w_i) \big), \]
where $\hat{p}_j(w_i) \coloneqq w_i \cdot \hat{x}_j(w_i) - \int_{w=0}^{w_i} \hat{x}(w) \dd{w}$. Therefore, any optimal auction for \text{D4}$(\A)$ raises a revenue of
\begin{align*}
   \sum\limits_{i=1}^{m} \big( f_{\A_j}(w_i) \cdot \hat{p}_j(w_i) \big) &=  \sum\limits_{i=1}^{m} \Big( f_{\A_j}(w_i) \cdot \sum\limits_{i'=1}^{m} \big( \delta^j_{i'} \cdot p^j_{i'}(w_i) \big) \Big)\\
   &= \sum\limits_{i'=1}^{m} \Big( \delta^j_{i'} \cdot \sum\limits_{i=1}^{m}  \big( f_{\A_j}(w_i) \cdot p^j_{i'}(w_i) \big) \Big) = \sum\limits_{i'=1}^{m} \Big( \delta^j_{i'} \cdot \overline{F}_{\A_j}(w_{i'}) \cdot w_{i'} \Big)
\end{align*}
for the buyer with deadline $j$, where the last equality follows from Eq. (\ref{eq:deadlines_decomposition}). Summing this up for all $j \in [1,k]$ (and weighting each term by $\Pr_{(v,d) \sim \A}[d = j]$) gives the expression in the theorem.

Consider any arbitrary $(w_a, r) \in \textsf{LE}(\A)$. By property 2 in Theorem~\ref{theorem:deadlines_complete_characterization_of_optimal_revenue_for_signals}, we have $\hat{x}_{ij} = \hat{x}_{ir}$ for all $i \in [0, a]$ and $j \in [r, k]$. Therefore for all $j \in [r,k]$ we have
$\delta^{j}_{a} = \hat{x}_{aj} - \hat{x}_{(a-1)j} = \hat{x}_{ar} - \hat{x}_{(a-1)r} = \delta^r_a$.

Next, consider all consecutive $(w_a, r), (w_b, r') \in \textsf{LE}(\A)$ such that $a < b$, $r' \ge r$. Since $f_{\A_j}(w_i) = 0$ for all $i \in (a,b)$ and $j \ge r$, by property 3 in Theorem~\ref{theorem:deadlines_complete_characterization_of_optimal_revenue_for_signals}, we have $\hat{x}_{ij} = \hat{x}_{aj}$ for all $i \in [a,b)$ and $j \in [r,k]$. Hence for all $i \in (a,b)$ and $j \in [r,k]$ we have 
$\delta^{j}_{i} = \hat{x}_{ij} - \hat{x}_{(i-1)j} = \hat{x}_{aj} - \hat{x}_{aj} = 0$.

Finally, notice that $\hat{x}_{0k} = 0$ and $\hat{x}_{mk} = 1$. Therefore we have 
\begin{align*}
    \sum\limits_{i'=1}^{m} \delta^{k}_{i'} &= \sum\limits_{i'=1}^{m} \big( \hat{x}_{i'k} - \hat{x}_{(i'-1)k} \big) = \hat{x}_{mk} - \hat{x}_{0k}  = 1.
\end{align*}

The last property then follows from observing that $\hat{x}_{ik} = 0$ for every $i$ such that $\mathbbm{1}^{\textsf{LE}(\A)}_{i} = 0$. \hfill \qedsymbol

\end{document}